\documentclass[a4paper,12pt]{iopart}
\usepackage{amsmath_iop}
\usepackage[utf8]{inputenc}
\usepackage[english]{babel}
\usepackage{hyperref}
\usepackage{tikz}
\usepackage{amstext,amsthm,amssymb}
\usepackage{dsfont}
\usepackage{enumitem}
\usepackage{times}
\bibliographystyle{iopart-num}

\makeatletter
\newtheorem*{rep@theorem}{\rep@title}
\newcommand{\newreptheorem}[2]{\newenvironment{rep#1}[1]{\def\rep@title{#2 \ref{##1}}
 \begin{rep@theorem}}
 {\end{rep@theorem}}}
\makeatother

\newtheorem{theorem}{Theorem}
\newreptheorem{theorem}{Theorem}
\newtheorem{lemma}{Lemma}
\newreptheorem{lemma}{Lemma}

\newtheorem{proposition}{Proposition}
\newreptheorem{proposition}{Proposition}
\DeclareMathOperator{\sgn}{sgn}
\newcommand{\del}{\partial}
\newcommand{\bra}[1]{\ensuremath{\langle #1|}}
\newcommand{\ket}[1]{\ensuremath{|#1\rangle}}
\newcommand{\braket}[2]{\ensuremath{\langle #1|#2\rangle}}

\newcommand{\mean}[1]{\ensuremath{\langle{#1}\rangle}}

\newcommand{\C}{\ensuremath{\mathbb{C}}}

\newcommand{\id}{\mathds{1}}
\newcommand{\N}{\ensuremath{\mathbb{N}}}
\newcommand{\Z}{\ensuremath{\mathds{Z}}}
\renewcommand{\tilde}{\widetilde}

\begin{document}

\title{Symmetric hypergraph states: 
Entanglement quantification and robust Bell nonlocality
}
\author{Jan N\"{o}ller$^{1,\dagger}$, Otfried G\"uhne$^{2}$, and Mariami Gachechiladze$^{1,2}$}
\address{$^1$Department of Computer Science, Technical University of Darmstadt, Germany}
\address{$^2$Naturwissenschaftlich-Technische Fakult\"at,
Universit\"at Siegen, Germany}
\address{$^\dagger$Corresponding author}

\begin{abstract}
Quantum hypergraph states are the natural generalization of graph states. Here we investigate and analytically quantify entanglement and nonlocality for large classes of quantum hypergraph states. More specifically, we connect the geometric measure of entanglement of symmetric hypergraphs to their local Pauli stabilizers. As a result we recognize the resemblance between symmetric graph states and symmetric hypergraph states. This explains both the exponentially increasing violation of local realism for infinitely many classes of hypergraph states and robustness towards particle loss. 
\end{abstract}

\section{Introduction}
Multipartite entanglement is believed to be the key ingredient for many applications such as quantum simulation, metrology, and protocols in quantum information processing. Accordingly, its quantitative and qualitative characterization is of great importance. However, this task has turned out to be difficult due to the exponentially increasing dimension of the Hilbert space, where these states live. What is more, it is known that almost all quantum states in this huge Hilbert space are useless for quantum information processing~\cite{Hayden2006Aspects,Gross2009Most,Bremner2009Are}. 
Consequently, the research has focused on classes of multipartite entangled states, which are easy to characterize, manipulate and have a wide gamut of applications. In fact, symmetries and other kinds of simplifications seem to be essential for a state to be an interesting resource.

One of the fundamental ways to explore the structure of multipartite states is to quantify their entanglement using entanglement measures. In this work, we concentrate on geometric measure of entanglement~\cite{wei2003geometric,shimony1995degree}, which has become a staple method due to its desirable properties of an entanglement monotone. The geometric measure calculates the distance of a given state from the set of separable
pure states. Despite its simple definition, it is very hard to compute due to a large number of optimization parameters. The geometric measure  has been analytically estimated only for a few classes of states~\cite{wei2003geometric, markham2007entanglement,frydryszak2012geometric}, upper and lower bounds have been derived~\cite{qi2005eigenvalues,chen2010computation,hu2016computing,streltsov2011simple,xiong2022geometric}, and numerical methods have been considered to find states with high entanglement~\cite{qi2018entangled,steinberg2022maximizing,martin2010multiqubit}. In order to ease the complexity of the problem, symmetries of a quantum state have been utilized. It was proven that the closest separable state to a symmetric multiqubit state is also symmetric~\cite{Robert2009Geometric}. 


Hypergraph states~\cite{kruszynska2009local, qu2013encoding,rossi2013quantum, Guehne2014Hypergraph} are generalizations of graph states, which themselves are one of the most prominent classes of useful entangled states~\cite{hein2006entanglement,Raussendorf2001One,Raussendorf2003Measurement,Schlingemann2001Quantum,Looi2008Quantum,Hajdu2013Direct}, with their symmetric representative being the Greenberger-Horne-Zeilinger (GHZ) state. The geometric measure of GHZ states is trivial to calculate and equal to $1/2$ for any number of qubits. For graph states in general, however, only limited results are known~\cite{Markham2008Graph}, even though they are simple to describe and work with due to the stabilizer formalism. Hypergraph states have a much richer structure involving nonlocal stabilizers, which makes them on the one hand complicated to classify~\cite{Guehne2014Hypergraph,Gachechiladze2017Graphical,zhu2019efficient,morimae2017verification}, but on the other hand an interesting and robust resource for various information processing tasks~\cite{gachechiladze2016Extreme,gachechiladze2019chainging,takeuchi2019quantum}.
In this work, we investigate the geometric measure of entanglement of hypergraph states. We mainly focus on classes of fully  symmetric states and derive analytic expressions for them relying on their local symmetries, see Ref.~\cite{Lyons2017Local} for a full characterization of those. The study of (local) symmetries of hypergraph states is in itself an interesting subject of research. In this context, it is worthwhile to recall that extension of local complementation from graphs to hypergraphs was the central tool to construct a family of counterexamples to the famous LU=LC conjecture~\cite{ji2007lu,tsimakuridze2017graph}. Curiously, our proofs employ as a main tool the square-root of a stabilizer, which resembles the map used for local complementation.

We give methods to simplify calculations by reducing the number of parameters involved in the optimization by exploiting these symmetries. 
As a result we derive analytical expressions for the geometric measure. As a by-product of our method, we obtain an interesting way to write down hypergraph states, which explains the resemblance with GHZ states when considering their nonlocal properties. Moreover, we reproduce known reults about extreme nonlocality, i.e. exponential violation of Mermin-type inequalities in a much more concise and intuitive manner, and further generalize the robustness results to more states and qubits. These findings give a new insight into the structure of hypergraph states and could be used to derive new Bell inequalities and self-testing arguments.   

\begin{figure}[t]
  \centering
  \includegraphics[scale=0.6]{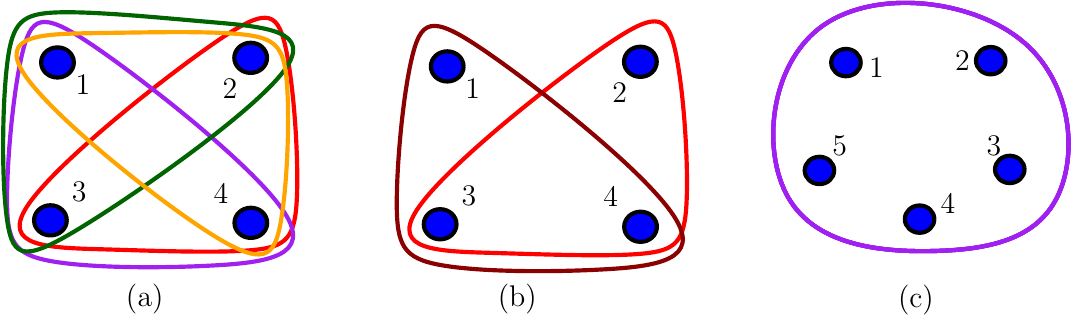}
  \caption{Quantum hypergraph states. (a) A hypergraph state containing all possible hyperedges consisting of three vertices each. This hypergraph state is called a \textit{complete $3$-uniform} hypergraph state. (b) A four-qubit hypergraph state. (c) A hypergraph state with a single hyperedge.}
  \label{fig:figure1}
\end{figure}

The paper is organized in the following way:
First, we review hypergraph states and the conditions under which complete symmetric hypergraph states are stabilized by local Pauli operators. We proceed by using these symmetries to find new representations of the respective hypergraph states under local unitary transformations. Subsequently, we exploit these representations to derive results of the geometric entanglement measure. Finally, we give proofs for the extreme violation of nonlocality in hypergraph states, and lastly, discuss robustness of entanglement in hypergraph states against particle loss.

\subsection*{Notation}
With $X,Y,Z$, we denote the three Pauli operators $\sigma_x, \sigma_y,\sigma_z$, and we use $P$ as placeholder if we want to refer to them collectively. 
In that spirit, $\sqrt{P}_{\pm}$ shall denote the operator which squares to $P\in\{X,Y,Z\}$ and has eigenvalues $1,\pm i$.
One class of states frequently appearing are different types of GHZ states, for which we use the following shorthand notation:
\[\ket{GHZ_P^{\pm}}=\frac{1}{\sqrt{2}}\left(\ket{+_P}^{\otimes N}\pm\ket{-_P}^{\otimes N}\right),\] with $\ket{\pm_P}$ being the $\pm 1$-eigenstate of $P$. Here, we usually omit the subscript when referring to the computational basis ($Z$) and do the same for the superscript if the relative phase is positive.
Many of our calculations will involve the \emph{weight} of a computational basis element $\ket{x}=\ket{i_1,\dots.i_N},\,i_j\in\{0,1\}$ which is just $w(x)=i_1+\dots+i_N$. 
\section{Hypergraph states}

Consider a hypergraph $H=(V,E)$, defined over a set of vertices $V$ and a set of hyperedges $E$, which may connect more than two vertices. From $H$ we can naturally construct a $|V|$-qubit quantum state, defined in the following manner:
   \begin{equation}
       \ket{H}=\prod_{e\in E} C_e \ket{+}^{\otimes |V|},
   \end{equation}
   where $C_e$ gates are generalized $CZ$ gates on $|e|$ qubits and are defined as $C_e=\id-2 \ket{11\dots 1}\bra{11\dots 1}$. See Fig.~\ref{fig:figure1} for some examples of hypergraphs. We say that a hypergraph state is $k$-uniform if all of its hyperedges connect exactly $k$ vertices. As an example, the hypergraphs in  Fig.~\ref{fig:figure1} (a) and (b) are $3$-uniform and Fig.~\ref{fig:figure1} (c) is a $5$-uniform hypergraph. Similarly to the graph notation, we say that a hypergraph state is $k$-uniform complete if it contains all hyperedges of size $k$.  Additionally, Fig.~\ref{fig:figure1} (a) is a $3$-uniform complete hypergraph and Fig.~\ref{fig:figure1} (c) is $5$-uniform complete. Complete hypergraphs correspond to the permutation symmetric states. Finally, we say that a state is $\textbf{k}$-uniform complete, for a vector $\textbf{k}=(k_1,\dots,k_m)$, if it contains all hyperedges of cardinality $k_i$ for all $i=1,\dots,m$.

Like for graph states, hypergraph states have an alternative definition using a stabilizer formalism. However, unlike in the graph state case, these stabilizers are nonlocal, i.e. they are not tensor products of local Pauli operators. Instead they contain nonlocal phase gates. For a vertex $i\in V$, the associated stabilizer operator is given by the expression:
\begin{equation}\label{eq:HGstabilizer}
    h_i=X_i \bigotimes_{e_j \in \mathcal{A}(i)}C_{e_j},
\end{equation}
where $X_i$ is Pauli-$X$ gate acting on $i$-th qubit and $\mathcal{A}(i)$ is the adjacency set of the vertex $i$, defined as $\mathcal{A}(i)=\{e - \{i\}|e \in E \mbox{ with } i \in e\}$. To put it simpler, the elements of $\mathcal{A}(i)$ are sets of vertices which are adjacent to $i$ via some hyperedge. The hypergraph state $\ket{H}$ is then the unique pure state which is invariant under the action of the group generated by those stabilizer operators.

However, there are cases where hypergraph states have local Pauli stabilizer. To give a simple example, the hypergraph state in Fig.~\ref{fig:figure1} (b) is an eigenstate of $X_1\otimes X_2$ operator and less trivially,  Fig.~\ref{fig:figure1} (a) is an eigenstate of $Y_1\otimes Y_2 \otimes Y_3 \otimes Y_4$. In Ref.~\cite{Lyons2017Local}, necessary and sufficient conditions were derived for symmetric hypergraph states to have local Pauli stabilizers and the explicit form of these stabilizers was given. For completeness we paraphrase this result here.

\begin{lemma}\label{lemma:localStab}
\cite{Lyons2017Local} A symmetric $N$-qubit $\textbf{k}$-uniform complete hypergraph state is
\begin{itemize}
\item[(1)]$+1$ - eigenstate of $X^{\otimes N}$ iff for $0\leq w\leq N$
\begin{equation}
\sum_{i=1}^m\binom{ w}{k_i}=\sum_{i=1}^m\binom{N- w}{k_i},  \quad (\mbox{mod }2 ).
\end{equation}
\item[(2)] $+1$ - eigenstate of $-X^{\otimes N}$ iff for $0\leq w\leq N$
\begin{equation}
\sum_{i=1}^m\binom{ w}{k_i}=\sum_{i=1}^m\binom{N- w}{k_i}+1,  \quad (\mbox{mod }2 ).
\end{equation}
\item[(3)] $+1$ - eigenstate of $Y^{\otimes N}$ iff for $0\leq w\leq N$
\begin{equation}
\sum_{i=1}^m\binom{ w}{k_i}=\sum_{i=1}^m\binom{N- w}{k_i}+ w+\frac{N}{2},  \quad (\mbox{mod }2 ).
\end{equation}
\end{itemize}
These cases are in fact comprehensive when $ \max_i k_i>2$, meaning that, unless we are dealing with graph states, symmetric hypergraph states can have only one of these three Pauli stabilizers.
\end{lemma}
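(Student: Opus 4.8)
The plan is to prove that the group of local Pauli stabilizers of a symmetric hypergraph state is extremely rigid. Write $\ket{H}=2^{-N/2}\sum_{x}(-1)^{g(w(x))}\ket{x}$ with $g(w)=\sum_{i=1}^{m}\binom{w}{k_i}\bmod 2$, a function of the weight $w(x)$ alone; we may assume the $k_i$ are distinct (repeated cardinalities cancel mod $2$) and $k_i\le N$ (larger ones contribute nothing). A local Pauli operator $S$ with $S\ket{H}=\ket{H}$ may be taken Hermitian with $S^{2}=\id$, since a Pauli-group element possessing the eigenvalue $+1$ has square $\id$. Two such stabilizers commute, as anticommuting Paulis share no $+1$ eigenvector, so the set $\mathcal{S}$ of local Pauli stabilizers of $\ket{H}$ is an abelian group; it is moreover invariant under qubit permutations because $\ket{H}$ is. The whole argument will run on these two features: closure under products and under permutations.

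Step one is to pin down the flip pattern. For $S=\bigotimes_{j}P_{j}\in\mathcal{S}$ put $b_{j}=1$ iff $P_{j}\in\{X,Y\}$. If $b$ is non-constant, pick $j,k$ with $b_{j}=1$, $b_{k}=0$, and multiply $S$ by its image $S^{(jk)}$ under the transposition $(j\,k)$: a direct computation shows that $S\,S^{(jk)}\in\mathcal{S}$ is supported only on qubits $j,k$ and equals, up to a sign, $R_{j}R_{k}$ with $R\in\{X,Y\}$. By permutation invariance $\ket{H}$ is then stabilized by $\pm R_{a}R_{b}$ for every pair $a,b$. Expanding this in the computational basis either forces $g$ to take both parities on a weight that $R_{a}R_{b}$ leaves fixed (impossible), or gives the recurrence $g(w+2)\equiv g(w)$ when $R=X$, and $g(w+2)\equiv g(w)+1$ when $R=Y$, for all $0\le w\le N-2$. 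Since $\binom{w+2}{k}\equiv\binom{w}{k}+\binom{w}{k-2}\pmod 2$ (two uses of Pascal's identity), both recurrences assert that $\sum_{i}\binom{w}{k_i-2}$ is constant in $w$ on $\{0,\dots,N-2\}$. This fails once $\max_i k_i>2$: the terms with $k_i\ge 3$ contribute $h(w)=\sum_{i:\,k_i\ge 3}\binom{w}{k_i-2}$ with distinct exponents $k_i-2\ge 1$, and $h(0)=0$ whereas $h(j^{*})=1$ for $j^{*}=\min\{k_i-2:k_i\ge 3\}\in\{1,\dots,N-2\}$. Hence $b$ must be constant.

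If $b\equiv 0$ then $S=\bigotimes_{j}Z_{j}^{a_{j}}$ is diagonal, and $S\ket{H}=\ket{H}$ forces $(-1)^{a\cdot x}=1$ for all $x$, i.e.\ $S=\id$. If $b\equiv 1$ then each $P_{j}\in\{X,Y\}$; and if two of them disagree, the same conjugate-and-multiply step produces a diagonal $Z_{j}Z_{k}\in\mathcal{S}$, contradicting the diagonal case. Thus $S=\pm X^{\otimes N}$ or $S=\pm Y^{\otimes N}$. As $\ket{H}$ has real amplitudes and $\overline{Y}=-Y$, a $Y^{\otimes N}$-type stabilizer forces $N$ even, and then $-Y^{\otimes N}\ket{H}=\ket{H}$ would read $g(w)\equiv g(N-w)+w+\tfrac{N}{2}+1\pmod 2$, which at $w=N/2$ gives $0\equiv 1$; so $-Y^{\otimes N}$ is ruled out, leaving exactly $X^{\otimes N},-X^{\otimes N},Y^{\otimes N}$, the three cases of the lemma. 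They cannot coexist: the product of any two would also lie in $\mathcal{S}$, but $X^{\otimes N}(-X^{\otimes N})=-\id$ stabilizes nothing, while $(\pm X^{\otimes N})Y^{\otimes N}=\pm Z^{\otimes N}$ (real, since $N$ is even in that situation), and $\pm Z^{\otimes N}\ket{H}=\ket{H}$ is impossible because $(-1)^{w}$ is not constant in $w$.

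The delicate part I foresee is the two ``conjugate-by-a-transposition-and-multiply'' reductions together with the ensuing evaluation of $\sum_{i}\binom{w}{k_i-2}$ at small weights: one must track the signs in the two-qubit products carefully and confirm that the witnessing weights $0$ and $j^{*}$ really lie in the admissible window $[0,N-2]$, which is precisely where the hypotheses $\max_i k_i>2$ and $N\ge\max_i k_i$ (the latter automatic for a genuinely entangled state) get used. Everything else is bookkeeping in the computational basis.
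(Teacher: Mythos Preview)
The paper does not prove this lemma; it is paraphrased from the cited reference without argument, so there is no in-paper proof to compare against.

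On its own merits your argument is correct and handles the hard part, namely the comprehensiveness claim. The conjugate-by-a-transposition-and-multiply reduction works as you describe; in fact a direct check of the four cases $P_j\in\{X,Y\}$, $P_k\in\{\id,Z\}$ shows the resulting two-qubit stabilizer is always exactly $+R_jR_k$ with $R\in\{X,Y\}$ (no sign ambiguity), so the ``both parities on a fixed weight'' branch never arises. The Pascal step $\binom{w+2}{k}\equiv\binom{w}{k}+\binom{w}{k-2}\pmod 2$ and the evaluation $h(0)=0$, $h(j^*)=1$ are fine, and $j^*\le N-2$ is indeed guaranteed once vacuous cardinalities $k_i>N$ are dropped, which is the natural reading of the hypothesis. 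The exclusion of $-Y^{\otimes N}$ via $w=N/2$ and the mutual-exclusivity argument via products are also clean.

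The only omission is that you never explicitly derive the three biconditionals (1)--(3) themselves. You clearly know them, since you invoke the $-Y^{\otimes N}$ condition verbatim, but for a complete proof you should state the one-line computation: $X^{\otimes N}$ sends a computational basis vector of weight $w$ to one of weight $N-w$, while $Y^{\otimes N}$ does the same with an extra phase $i^N(-1)^w$, so matching coefficients of $\ket{H}$ immediately gives $g(w)\equiv g(N-w)$, $g(w)\equiv g(N-w)+1$, and $g(w)\equiv g(N-w)+w+N/2$ in the three cases respectively.
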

We add another observation. For even $N$ we have that $\binom{ w}{2}=\binom{N- w}{2}+ w+N/2$.
With the palindrome conditions above, it is evident that by adding/removing all hyperedges with cardinality 2 on even-qubit hypergraph states we can map $X^{\otimes N}$-stabilized states to $Y^{\otimes N}$-stabilized states and vice versa.
 
\section{The geometric measure of hypergraph states} 
The \textit{geometric measure of entanglement} is an entanglement measure, denoted by $E_G$, and is defined to be one minus the maximal squared overlap between a given state and the closest product state: 

\begin{equation}\label{eq:geometricMeasure}
    E_G(\ket{\psi})=1-\max_{\ket{\phi}=\ket{a}\ket{b}\ket{c}\dots} |\braket{\phi}{\psi}|^2.
\end{equation}
For a pure state it quantifies the distance to the set of separable states~\cite{wei2003geometric,Schimony1995Degree}. In general such optimization problems are difficult to handle analytically, since one needs to optimize over increasing number of parameters in the  multipartite case. On the other hand, there are several tricks that one can use to make calculations easier: (i) If the state can be mapped under local unitaries to another state with nonnegative real coefficients, then the optimization over the new state can be done using real product states. This can be seen by observing that if $a_i\geq 0$ for all $i$, the triangle inequality $|\sum_i a_i b_i|\leq \sum_i |a_i||b_i|=\sum_i a_i|b_i|$ is an equality if we also have $b_i\geq 0$ for all $i$. Note that local unitary operations do not change entanglement properties of a state they are applied to. Due to normalization constraints, this reduces the number of optimization parameters by half. (ii) Moreover, if the given multipartite state is permutation symmetric and has more than two parties, the closest product state also turns out to be permutation  symmetric~\cite{Robert2009Geometric}. To sum up, if we have permutation symmetric states with nonnegative coefficients, the problem can be reduced to a one-parameter optimization. In this section we show that for certain classes of hypergraph states we can use the tricks above to evaluate their entanglement.  

Let us consider an example of the four-qubit 3-uniform complete hypergraph state $\ket{H_4^3}$ given in Fig.~\ref{fig:figure1} (a). One can directly check, that the unitary matrix $U^{\otimes 4}$ with 
\begin{equation}\label{app:UnitaryH43}
U=\left(
\begin{array}{cc}
 \cos (t) & \sin (t) \\
 -\sin (t) & \cos (t) \\
\end{array}
\right)
\end{equation} 
and parameter $t=1/2\arctan[1/2(\sqrt{5}-1)]$ transforms the hypergraph state $\ket{H^3_4}$ to:  
\begin{align}
\begin{split}
\ket{S_{H_4}}= & \frac{1}{8} (3-\sqrt{5})\big(\ket{0000}+\ket{1111}\big)+\frac{1}{8}(1+\sqrt{5})(\ket{0011}+\mbox{perm.}).
\end{split}
\end{align}
Hence, the problem reduces to a one-parameter optimization, which can be solved analytically. We shall omit a proof at this point, but it is straightforward to calculate that $E_G(\ket{H_4^3})=\frac{25-3 \sqrt{5}}{32}\approx 0.571619 $. This value was previously derived in Ref.~\cite{Guehne2014Hypergraph}, but only numerically.

The technique of mapping a state to all positive coefficients can be used for wider classes of hypergraph states. We can consider the following lemma as one of the examples.
\begin{lemma}\label{lemma:Hadamardmap}
Let $\ket{H_N}$ be an $N$-qubit hypergraph state corresponding to a hypergraph with one vertex which is contained in all hyperedges.
Then the geometric measure of entanglement of $\ket{H_N}$ can be calculated over real product vectors.
\end{lemma}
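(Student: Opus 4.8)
The plan is to exhibit a local unitary taking $\ket{H_N}$ to a state with nonnegative real coefficients, which by trick~(i) together with the invariance of $E_G$ under local unitaries is exactly what the lemma asks for. Call the distinguished vertex that lies in every hyperedge vertex~$1$, and split $\ket{H_N}=\prod_{e\in E}C_e\ket{+}^{\otimes N}$ according to the computational-basis value of qubit~$1$. Because vertex~$1$ belongs to the support of every $C_e$, each gate $C_e$ restricts to the identity on the subspace where qubit~$1$ equals $\ket{0}$ and to $C_{e\setminus\{1\}}$ on the subspace where it equals $\ket{1}$; since all the reduced edges $e\setminus\{1\}$ are distinct, and only a singleton hyperedge $\{1\}$ could produce the empty reduced edge $C_{\emptyset}=-\id$ (whose sign a local $Z_1$ absorbs), one obtains
\begin{equation}
\ket{H_N}=\frac{1}{\sqrt{2}}\left(\ket{0}\otimes\ket{+}^{\otimes(N-1)}+\ket{1}\otimes\ket{H'}\right),
\end{equation}
where $\ket{H'}$ is the genuine $(N-1)$-qubit hypergraph state with hyperedge set $\{\,e\setminus\{1\}:e\in E\,\}$.

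Next I would expand the last $N-1$ qubits in the computational basis. The uniform superposition $\ket{+}^{\otimes(N-1)}$ has all amplitudes equal to $+2^{-(N-1)/2}$, while $\ket{H'}$, being a hypergraph state, has all amplitudes of the form $(-1)^{g(x)}2^{-(N-1)/2}$ for some Boolean function $g$ on $\{0,1\}^{N-1}$. Hence the factor on qubit~$1$ attached to a string $x$ is proportional to $\ket{0}+(-1)^{g(x)}\ket{1}$, i.e.\ to $\ket{+}$ when $g(x)=0$ and to $\ket{-}$ when $g(x)=1$. Applying the Hadamard $H$ to qubit~$1$, which maps $\ket{+}\mapsto\ket{0}$ and $\ket{-}\mapsto\ket{1}$, therefore yields
\begin{equation}
H_1\ket{H_N}=2^{-(N-1)/2}\sum_{x\in\{0,1\}^{N-1}}\ket{g(x)}\otimes\ket{x},
\end{equation}
every amplitude of which is either $0$ or $+2^{-(N-1)/2}$. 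This state has nonnegative real coefficients, so trick~(i) restricts the overlap optimisation for it to real product vectors; since $H_1$ is local, $E_G(\ket{H_N})=E_G(H_1\ket{H_N})$, and the claim follows.

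I do not expect a genuine obstacle: the computation is elementary once one splits on qubit~$1$. The only points that need care are (a) confirming that ``vertex~$1$ meets every hyperedge'' is precisely what guarantees that $\ket{H'}$ is an honest hypergraph state, so that its amplitudes are genuinely $\pm 2^{-(N-1)/2}$, and (b) the harmless overall sign on the $\ket{1}$ branch coming from a possible singleton edge $\{1\}$. Conceptually the point is that the $\ket{0}$ branch already carries positive amplitudes while the $\ket{1}$ branch carries $\pm$ amplitudes, and the Hadamard is exactly the single-qubit unitary that sends both of $\ket{0}\pm\ket{1}$ into the nonnegative cone --- in fact onto computational basis states. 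This also explains why a rotation rather than a Hadamard was required for $\ket{H_4^3}$, whose hypergraph has no vertex lying in all of its hyperedges.
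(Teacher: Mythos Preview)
Your proof is correct and follows essentially the same route as the paper's: condition on the distinguished vertex, apply a Hadamard there, and observe that the resulting state has nonnegative coefficients. Your presentation is in fact a bit more explicit---you write the final state as $2^{-(N-1)/2}\sum_x\ket{g(x)}\otimes\ket{x}$, whereas the paper simply checks that each computational-basis amplitude of $\frac{1}{\sqrt{2}}(\ket{+}^{\otimes N}+\ket{-}\ket{\tilde H})$ is either $0$ or $2^{-(N-1)/2}$---and you are more careful about the possible singleton edge $\{1\}$, but the key idea is identical.
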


\begin{figure}[t]
\centering
  \includegraphics[scale=0.62]{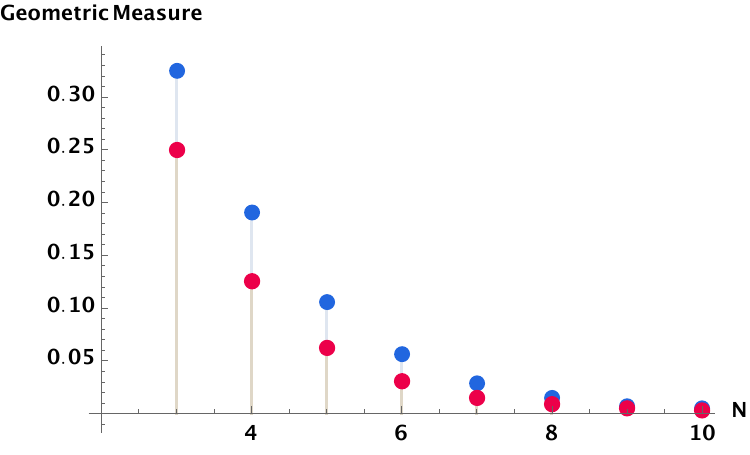}
  \caption{Entanglement of the $N$-qubit hypergraph states with a single $N$-cardinality hyperedge. The blue dots give the value of geometric measure of entanglement for $\ket{H_N^N}$ for $3$ to $10$ qubits. One can analytically check that the overlap maximization function has one global maximum. However, the expressions involved for the values are too cumbersome, thus, here we give their corresponding numerical values. As expected the entanglement quickly goes to zero as the number of qubits grow. The red dots (the lower ones) correspond to the lower bounds obtained in Ref.~\cite{Ghio2017Multipartite}, where the optimization is considered over bipartite states, instead of product ones.}
  \label{fig:figure2}
\end{figure}

\begin{proof}
After relabelling, we may assume that the vertex featured in all hyperedges corresponds to the first site. Then we can rewrite the state as
\begin{equation}
    \ket{H}=\frac{1}{\sqrt{2}}(\ket{0}\ket{+}^{\otimes N-1}+\ket{1}\ket{\tilde{H}}),
\end{equation}
where $\ket{\tilde{H}}$ is some hypergraph state over the remaining vertices. We now apply the Hadamard operator $\mathcal{H}$ to the first site:
\begin{equation}  
     \mathcal{H}_1\ket{H}=\frac{1}{\sqrt{2}}(\ket{+}^{\otimes V}+\ket{-}\ket{\tilde{H}}).
\end{equation}
Every computational basis entry appears with coefficient $\frac{+1}{\sqrt{2}^{N+1}}$ in the first summand and with $\frac{\pm 1}{\sqrt{2}^{N+1}}$ in the second and the contributions either add up to zero or $\frac{1}{\sqrt{2}^{N-1}}$. Thus, the resulting state has a real optimal product state overlap and since the Hadamard gate is also real, so does the original state $\ket{H}$.
\end{proof}

As an application we can directly calculate the geometric measure of entanglement of $N$-qubit hypergraph states $\ket{H_N^N}$ with a single hyperedge connecting all $N$ vertices (see Fig.~\ref{fig:figure1} (c) for an example). The overlap of $\ket{H_N^N}$ with a symmetric state $(a\ket{0}+b\ket{1})^{\otimes N}$ with $a,b\in\C$ is $\frac{1}{\sqrt{2}^N}\left((a+b)^N-2^N\right)$ and therefore
\begin{equation}\label{eq:NNoptimization}
    E_G(\ket{H^N_N})=1-\frac{1}{2^N}\max_{|a|^2+|b|^2=1}((a+b)^N-2b^N)^2.
\end{equation}
We can rewrite the hypergraph state by conditioning on the first qubit:
\begin{equation}
    \ket{H_N^N}=\frac{1}{\sqrt{2}}\left(\ket{0}\ket{+}^{\otimes N-1}+\ket{1}\ket{H_{N-1}^{N-1}}\right).
\end{equation}
Now, according to the previous observation, we can reduce the optimization in Eq.~\eqref{eq:NNoptimization} to run only over real numbers $a,b$ with $a^2+b^2=1$.
In Fig.~\ref{fig:figure2} the values of geometric measure of entanglement are plotted for different $N$. As expected, the state becomes very close to the product state when $N$ increases. Our results are also compared to the previously known lower-bounds, calculated in Ref.~\cite{Ghio2017Multipartite}. 

 Lemma~\ref{lemma:Hadamardmap} can also be used in non-symmetric cases to halve the number of parameters to optimize over. To give an example, consider the hypergraph state $\ket{H_4}_b$ in Fig.~\ref{fig:figure1} (b). If one applies Hadamard gates on qubits $1$ and $2$ one obtains the following state:
\begin{equation}
\ket{\tilde H_4}_b=\frac{1}{2}
(\ket{0000}+\ket{0001}+\ket{0010}+\ket{1111}),
\end{equation}
 for which the analytic value of the geometric measure of entanglement, $E_G(\ket{H_4}_b)=(5-\sqrt{5})/8$ can be directly calculated using derivatives of the two-variable function. 

We now consider a more systematic approach to analytically calculate the geometric measure of entanglement for several classes of symmetric hypergraph states. Specifically, we subsequently focus our attention to hypergraph states which exhibit local Pauli-symmetries. It turns out that for low hyperedge cardinalities, we can use local square roots of these Pauli stabilizers to map hypergraph states to real positive coefficient vectors. A well-known example of this is the GHZ state.  The fully-connected $2$-uniform hypergraph states $\ket{H_N^2}$ which are stabilized by $X^{\otimes N} $ can be mapped to the $N$-qubit GHZ state up to a global phase using square roots of the local Pauli-$X$ stabilizer: 
\begin{equation}
    \sqrt{X}_+^{\otimes N}\ket{H_N^2} =\pm \frac{1}{\sqrt{2}}(\ket{0\dots 0}+\ket{1\dots 1}).
\end{equation}
In all the other cases, a relative phase occurs, which can be corrected with the additional application of local Clifford phase gates.
We generalize this result to fully-connected $3$-uniform complete hypergraph states $\ket{H_N^3}$ and write them in a very convenient way. 
\begin{lemma}\label{thm:theorem1}
The even-qubit $3$-uniform complete hypergraph states can be mapped to a superposition of the GHZ state and all odd weight vectors after applications of square roots of local Pauli operators, corresponding to the respective stabilizers.
\vspace{5pt}\\
$\mathbf{N\equiv 2\mod 4}$: $\ket{H^3_N}$ is stabilized by $X^{\otimes N}$ and with $\ket{\tilde{H}^3_N}=\sqrt{X}_+^{\otimes N}\ket{H^3_N}$ we have 
\begin{equation}\label{eq:XthreeUnif}
\ket{\tilde{H}_N^3}=\pm\frac{1}{\sqrt{2}}\ket{GHZ}+\frac{1}{\sqrt{2}^N}\sum_{w(x)\, \mathrm{odd}}\ket{x}.
\end{equation}
$\mathbf{N\equiv 0\mod 4}$: $\ket{H^3_N}$ is stabilized by $Y^{\otimes N}$ and with $\ket{\tilde{H}^3_N}=\sqrt{Y}_+^{\otimes N}\ket{H^3_N}$, we have
\begin{equation}\label{eq:YthreeUnif}
\ket{\tilde{H}_N^3}=\pm\frac{1}{\sqrt{2}}\ket{GHZ}+\frac{1}{\sqrt{2}^N}\sum_{w(x)\, \mathrm{odd}}(-i)^{w(x)-1}\ket{x}.
\end{equation}
\end{lemma}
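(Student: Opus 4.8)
The plan is to first rewrite $\ket{H^3_N}$ in a closed ``GHZ-like'' form in the computational basis and then read off both equations from the elementary action of the local square-root Pauli operators on the $X$- and $Y$-eigenbases. Each gate $C_e$ with $|e|=3$ flips the sign of $\ket{x}$ precisely when $x$ is supported on $e$, so $\ket{x}$ acquires the total sign $(-1)^{\binom{w(x)}{3}}$ and $\ket{H^3_N}=2^{-N/2}\sum_{x}(-1)^{\binom{w(x)}{3}}\ket{x}$. By Lucas' theorem (or a one-line divisibility check) $\binom{w}{3}$ is odd exactly when $w\equiv 3\pmod 4$; hence $(-1)^{\binom{w}{3}}=1$ for even $w$, and $(-1)^{\binom{w}{3}}=(-1)^{(w-1)/2}=(-i)^{\,w-1}$ for odd $w$.

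Next I would split this expansion into its even- and odd-weight sectors and recognise each as a rotated GHZ vector. The even-weight part equals $\frac12\bigl(\ket{+}^{\otimes N}+\ket{-}^{\otimes N}\bigr)$; for the odd-weight part I would extract $(-i)^{w-1}=i\,(-i)^{w}$ and undo the diagonal on-site gate $\mathrm{diag}(1,-i)$ that implements the phase $(-i)^{w(x)}$ on $\ket{x}$, which turns that sector into $-\frac{i}{2}\bigl(\ket{+_Y}^{\otimes N}-\ket{-_Y}^{\otimes N}\bigr)$. This yields the representation
\[
 \ket{H^3_N}=\frac12\bigl(\ket{+}^{\otimes N}+\ket{-}^{\otimes N}\bigr)-\frac{i}{2}\bigl(\ket{+_Y}^{\otimes N}-\ket{-_Y}^{\otimes N}\bigr)
 =\frac{1}{\sqrt2}\bigl(\ket{GHZ_X^{+}}-i\,\ket{GHZ_Y^{-}}\bigr),
\]
valid for all $N$; this is exactly the resemblance with GHZ states announced in the introduction, and it is the crux of the proof.

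I would then apply the square root of the relevant stabiliser. For $N\equiv 2\pmod 4$ apply $\sqrt{X}_+^{\otimes N}$: since $\sqrt{X}_+\ket{+}=\ket{+}$ and $\sqrt{X}_+\ket{-}=i\ket{-}$ with $i^{N}=-1$ here, the first summand becomes $\frac12\bigl(\ket{+}^{\otimes N}-\ket{-}^{\otimes N}\bigr)=2^{-N/2}\sum_{w(x)\,\mathrm{odd}}\ket{x}$; and since a $\pi/2$ rotation about the $x$-axis carries the $Y$-eigenbasis onto $\ket{0},\ket{1}$ with mutually conjugate unimodular prefactors, the second summand contracts to $\pm\frac{1}{\sqrt2}\ket{GHZ}$, which is \eqref{eq:XthreeUnif}. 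For $N\equiv 0\pmod 4$ apply $\sqrt{Y}_+^{\otimes N}$ instead: now $\ket{\pm_Y}$ are the (near-)eigenvectors and $i^{N}=1$, so the $\ket{GHZ_Y^{-}}$ summand survives and, written back in the computational basis, equals $2^{-N/2}\sum_{w(x)\,\mathrm{odd}}(-i)^{w(x)-1}\ket{x}$, while $\sqrt{Y}_+$ rotates $\ket{\pm}$ onto $\ket{1},\ket{0}$ and sends the $\ket{GHZ_X^{+}}$ summand to $\pm\frac{1}{\sqrt2}\ket{GHZ}$, which is \eqref{eq:YthreeUnif}.

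The main obstacle is the phase bookkeeping in the last step: one must carry along the scalars $i^{N}$, $e^{\pm iN\pi/4}$ and the factors $(1\pm i)/\sqrt2$ produced when $\sqrt{X}_+$ (resp. $\sqrt{Y}_+$) rotates $\ket{\pm_Y}$ (resp. $\ket{\pm}$) onto the computational basis, and check that they combine to leave exactly the stated overall sign in front of $\ket{GHZ}$ and, in the $\sqrt{Y}_+$ case, exactly the phases $(-i)^{w(x)-1}$ on the odd-weight sector. This is also where the hypotheses enter: $N\equiv 2$ resp. $N\equiv 0\pmod 4$ force $i^{N}=\mp1$, and it is precisely this sign that flattens one of the two GHZ summands into a uniform odd-weight superposition while the other contracts to the genuine GHZ state. (The mod-$4$ behaviour of $\binom{w}{3}$ used at the start is elementary but should still be justified explicitly, e.g.\ via Lucas' theorem.)
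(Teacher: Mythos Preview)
Your proposal is correct and follows essentially the same approach as the paper: both first observe that $(-1)^{\binom{w}{3}}$ is $-1$ iff $w\equiv 3\pmod 4$, rewrite $\ket{H^3_N}$ as $\frac{1}{\sqrt{2}}\ket{GHZ_X^+}-\frac{i}{\sqrt{2}}\ket{GHZ_Y^-}$, and then apply $\sqrt{X}_+^{\otimes N}$ or $\sqrt{Y}_+^{\otimes N}$ using the explicit action on the $X$- and $Y$-eigenbases together with the values of $i^N$ and $\bigl(\frac{1\pm i}{\sqrt{2}}\bigr)^N$. Your phase bookkeeping is slightly more careful than the paper's (which writes the second summand as $\frac{1}{\sqrt{2}}\ket{GHZ_Y^-}$, dropping the $-i$), but the argument is otherwise identical.
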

The proof of the lemma is given in \ref{proof:3-uniform}. Additionally, in \ref{sec: Appendix Geomeasure}, we derive most generally the state vector after application of the square roots of local Pauli-$X$ and $Y$ operators, see Eq.~\eqref{eq:XCoefficientsGeomeasure-2} and Eq.~\eqref{eq:YCoefficientsGeomeasure} for the closed formulae. In the $Y$-stabilized case, we can get rid of the alternating signs on the odd weights by applying an extra $\sqrt{Z}_+$ on each site, however we now pick up an imaginary part in the process.
The negative sign in front of the GHZ state occurs when the number of qubits is $N=8k+4$ or $N=8k+6$, respectively. By applying local Pauli-$Z$ to every site and neglecting the global sign, we can always achieve without loss of generality that the sign in front of the GHZ part is positive.
For $(3,2)$-uniform hypergraph states we can derive very similar results, see Lemma~\ref{lemma:3,2-uniform} in \ref{sec_coefficients} for the exact statement.

From the new way of writing the $3$-uniform complete hypergraph state we can derive an analytical expression for the geometric measure of entanglement for $\ket{H_N^3}$.  

\begin{proposition}\label{lemma:3-unif_GEM_X}
The geometric measure of entanglement of an $N$-qubit $3$-uniform complete hypergraph state with $N\equiv 2\mod 4$, i.e. it is stabilized by $X^{\otimes N}$, can be obtained using the following expression:
\begin{equation}\label{eq:GMHN3}
    E_G(\ket{H_N^3})=\frac{3}{4}-\frac{1}{\sqrt{2}^N}-\frac{1}{2^N}.
\end{equation}
\end{proposition}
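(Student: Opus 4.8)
The plan is to convert Lemma~\ref{thm:theorem1} into a one-parameter optimisation of exactly the kind already used around Eq.~\eqref{eq:NNoptimisation}. The case $N=2$ is trivial ($\ket{H_2^3}=\ket{+}^{\otimes 2}$ is a product state and the formula returns $0$), so assume $N\ge 6$. Since $E_G$ is a local-unitary invariant and, by Lemma~\ref{thm:theorem1}, $\ket{H_N^3}$ is LU-equivalent to $\ket{\tilde H_N^3}$ of Eq.~\eqref{eq:XthreeUnif}, and since (applying $Z^{\otimes N}$ and dropping a global sign, as noted below Lemma~\ref{thm:theorem1}) we may take
\[
 \ket{\tilde H_N^3}=\frac{1}{\sqrt2}\ket{GHZ}+\frac{1}{\sqrt2^N}\sum_{w(x)\ \mathrm{odd}}\ket{x},
\]
a permutation-symmetric vector with nonnegative real coefficients, trick~(i) lets us restrict the closest product vector to be real and trick~(ii) (valid since $N\ge 3$) lets us take it permutation symmetric. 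Hence
\[
 E_G(\ket{H_N^3})=1-\Big(\max_{a,b\in\R,\ a^2+b^2=1}\big|\braket{(a\ket{0}+b\ket{1})^{\otimes N}}{\tilde H_N^3}\big|\Big)^{2}.
\]

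Next I would evaluate the overlap explicitly. Writing $\frac{1}{\sqrt2}\ket{GHZ}=\frac12(\ket{0}^{\otimes N}+\ket{1}^{\otimes N})$ and using $\sum_{w\ \mathrm{odd}}\binom{N}{w}a^{N-w}b^{w}=\frac12\big((a+b)^N-(a-b)^N\big)$, the overlap with $(a\ket{0}+b\ket{1})^{\otimes N}$ is
\[
 f(a,b)=\frac{a^N+b^N}{2}+\frac{(a+b)^N-(a-b)^N}{2^{N/2+1}}.
\]
Substituting $a=b=1/\sqrt2$, i.e.\ choosing the product vector $\ket{+}^{\otimes N}$, gives $f=\frac12+\frac{1}{\sqrt2^N}$, whence $E_G(\ket{H_N^3})\le\frac34-\frac{1}{\sqrt2^N}-\frac{1}{2^N}$. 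What remains is the matching bound on the maximum of $f$, i.e.\ showing that $\ket{+}^{\otimes N}$ is actually the closest product vector (as one expects for a GHZ-type state).

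For that I would set $a=\cos\theta$, $b=\sin\theta$ and $g(\theta):=f(\cos\theta,\sin\theta)$. A short estimate (using the lower bound $\frac12+\frac{1}{\sqrt2^N}$ already obtained) rules out maxima with $a,b$ of opposite signs, and $g(\theta)=g(\frac\pi2-\theta)$, so it suffices to show $g$ is nondecreasing on $[0,\pi/4]$, with top value $g(\pi/4)=\frac12+\frac{1}{\sqrt2^N}$. Differentiating, collecting a geometric-type sum and factoring out $\cos^2\theta-\sin^2\theta\ge 0$, the sign of $g'$ on $(0,\pi/4)$ equals the sign of
\[
 \frac{(\cos\theta+\sin\theta)^{N-2}+(\cos\theta-\sin\theta)^{N-2}}{2^{N/2}}-\frac{\sin\theta\cos\theta\,\big(\cos^{N-2}\theta-\sin^{N-2}\theta\big)}{\cos^2\theta-\sin^2\theta}.
\]
Dividing by $\cos^{N-2}\theta$ and putting $r=\tan\theta\in(0,1)$, positivity of this expression is the elementary inequality
\[
 \frac{(1+r)^{N-2}+(1-r)^{N-2}}{2^{N/2}}>\frac{r\,(1-r^{N-2})}{1-r^{2}},\qquad 0<r<1,
\]
which after clearing denominators becomes a polynomial inequality on $[0,1]$; for $N=6$ it is precisely $(r-1)^4>0$, and for general $N\equiv 2\bmod 4$ one checks that the corresponding polynomial is positive on $(0,1)$ and vanishes only at $r=1$ (the critical point $\theta=\pi/4$). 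Proving this polynomial inequality uniformly in $N$ is the one genuinely nontrivial step; everything else is bookkeeping. Granting it, $g$ increases on $[0,\pi/4]$, so the maximal overlap is $g(\pi/4)=\frac12+\frac{1}{\sqrt2^N}$, and $E_G(\ket{H_N^3})=1-\big(\frac12+\frac{1}{\sqrt2^N}\big)^2=\frac34-\frac{1}{\sqrt2^N}-\frac{1}{2^N}$.
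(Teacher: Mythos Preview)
Your reduction to a single-parameter optimisation via Lemma~\ref{thm:theorem1} and tricks~(i)--(ii) is exactly how the paper proceeds, and your overlap function $f(a,b)$ agrees with the paper's $f_N(\theta)$ after the substitution $a=\cos\theta$, $b=\sin\theta$. The trial value at $\ket{+}^{\otimes N}$ is also the same.

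The genuine gap is precisely where you flag it: you reduce the maximisation to the polynomial inequality
\[
\frac{(1+r)^{N-2}+(1-r)^{N-2}}{2^{N/2}}\;>\;\frac{r\,(1-r^{N-2})}{1-r^{2}},\qquad 0<r<1,
\]
but do not prove it for general $N\equiv 2\bmod 4$. Verifying $N=6$ by hand does not establish the pattern, and you explicitly write that you are ``granting'' this step. As it stands, the argument is incomplete.

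The paper avoids this difficulty by a different maximisation trick. Using that $N$ is even, one rewrites
\[
f_N(\theta)=\frac{1}{2}\sum_{j=0}^{3}\cos^{N}\!\Big(\theta+\frac{\pi j}{4}\Big)\;-\;\cos^{N}\!\Big(\theta+\frac{\pi}{4}\Big).
\]
The cyclic sum $\sum_{j=0}^{3}\cos^{N}(\theta+\pi j/4)$ is handled by Lemma~\ref{lemma:cyclic_max}, which (via a Fourier-type expansion of $\cos^{N}$) shows that it is maximised exactly on $\frac{\pi}{4}\Z$. At $\theta=\pi/4$ the subtracted term $\cos^{N}(\theta+\pi/4)$ vanishes, and since $N$ is even this term is everywhere nonnegative, hence also minimised there. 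Both pieces are therefore simultaneously optimal at $\theta=\pi/4$, and no derivative or monotonicity argument is required. This decomposition is the missing idea that replaces your unproven polynomial inequality, and the same device is later reused in the five-uniform case (Theorem~\ref{thm:5-unif_GEM_X}).
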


\begin{proof}

Using the identity in Eq.~\eqref{eq:XthreeUnif}, we can express the overlap of $\ket{\tilde{H}^3_N}$ with $\ket{\psi(\theta)}^{\otimes N}=(\cos(\theta)\ket{0}+\sin(\theta)\ket{1})^{\otimes N}$ by
\begin{align*}
    f_N(\theta):=&\braket{\tilde{H}^3_N}{\psi(\theta)}^{\otimes N}\\
    = &\frac{1}{2}\left(\cos^N(\theta)+\sin^N(\theta)+\Big(\frac{1}{\sqrt{2}}(\cos(\theta)+\sin(\theta))\Big)^N-\Big(\frac{1}{\sqrt{2}}(\cos(\theta)-\sin(\theta))\Big)^N\right)\\
    =& \frac{1}{2}\left(\cos^N(\theta)+\sin^N(\theta)+\sin^N\left(\theta+\frac{\pi}{4}\right)-\sin^N\left(\theta-\frac{\pi}{4}\right)\right).
 \end{align*}
In order to determine the maximum of $f_N(\theta)$, we rewrite 

\[f_N(\theta)=\frac{1}{2}\left(\sum_{j=0}^3\cos^N(\theta+\frac{\pi j}{4})\right)-\cos^N(\theta+\frac{\pi}{4}).\] 

As shown in Appendix A, Lemma \ref{lemma:cyclic_max}, the sum $\sum_{j=0}^3\cos^N(\theta+\frac{\pi j}{4})$ becomes maximal at $\theta\in\frac{\pi}{4}\Z$. We observe, that for $\theta=\frac{\pi}{4}$ the negative contribution $\cos(\theta+\frac{\pi}{4})$ is zero, and since $N$ is even, therefore also minimal. We can then conclude that \[\max_\theta f_N(\theta)=f_N\left(\frac{\pi}{4}\right)=\frac{1}{2}+\frac{1}{\sqrt{2}^N}.\] Inserting this value, the geometric measure can be exactly calculated using the equality Eq.~(\ref{eq:GMHN3}). Thus, the geometric measure of entanglement for the $3$-uniform complete hypergraph state quickly approaches $3/4$.
\end{proof}
Next we treat the Pauli-Y stabilized case.

\begin{proposition}\label{lemma:3-unif_GEM_Y}
The geometric measure of an $N$-qubit $3$-uniform hypergraph state with $N\equiv 0\mod 4$, i.e. it is stabilized by $Y^{\otimes N}$, can be estimated as follows:
\begin{equation}\label{eq:Y-GEM}
\frac{3}{4}-\frac{1}{2^N}-\frac{1}{\sqrt{2}^N}\leq E_G\left(\ket{H^3_N}\right)\leq\frac{3}{4}-\frac{1}{2^N}.
\end{equation}
Numerical computations for different $N$ indicate, that in both estimates equality does not hold in general, see Fig.~\ref{fig:GEM}.
\end{proposition}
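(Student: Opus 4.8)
The plan is to mimic the structure of the proof of Proposition~\ref{lemma:3-unif_GEM_X}, but now working with the $Y$-stabilised representation Eq.~\eqref{eq:YthreeUnif} instead of Eq.~\eqref{eq:XthreeUnif}. First I would write down the overlap of $\ket{\tilde H^3_N}$ with a symmetric product state. By Ref.~\cite{Robert2009Geometric} the closest product state may be taken symmetric, but since the state now has genuinely complex coefficients $(-i)^{w(x)-1}$ on the odd-weight part, the optimal single-qubit vector need not be real; so I would parametrise it as $\ket{\psi}=\cos(\theta)\ket{0}+e^{i\varphi}\sin(\theta)\ket{1}$ and compute
\[
f_N(\theta,\varphi)=\braket{\tilde H^3_N}{\psi}^{\otimes N}
=\frac12\Big(\cos^N\theta+e^{iN\varphi}\sin^N\theta\Big)
+\frac{1}{2\sqrt2^{N}}\sum_{w\ \mathrm{odd}}(-i)^{w-1}\binom{N}{w}\cos^{N-w}\theta\,(e^{i\varphi}\sin\theta)^{w}.
\]
Collecting the odd-weight sum as the odd part of $(\cos\theta - i e^{i\varphi}\sin\theta)^N$ (up to the overall factor $(-i)^{-1}=i$ and $1/\sqrt2^N$), one gets a closed binomial expression; for $\varphi=0$ this reduces to $\tfrac{1}{2\sqrt2^N}\big((\cos\theta+\sin\theta)^N-(\cos\theta-i\sin\theta)^N\cdot(\text{correction})\big)$ — I would check the exact combination, but the key point is that $|f_N|^2$ is again a manageable function.

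For the \textbf{upper bound} $E_G\le \tfrac34-\tfrac1{2^N}$, I would exhibit an explicit product state achieving overlap-squared $\tfrac14+\tfrac1{2^N}$. The natural candidate, by analogy with the $X$-case where $\theta=\pi/4$ was optimal, is to take $\theta=\pi/4$ and choose $\varphi$ so that the odd-weight contribution and the GHZ contribution do not destructively interfere; plausibly $\varphi=\pi/4$ or $\varphi=-\pi/4$ aligns the phase $e^{iN\varphi}$ appropriately (recall $N\equiv0\bmod 4$, so $e^{iN\varphi}$ with $\varphi=\pi/4$ gives $e^{iN\pi/4}=(\pm1)$ or $\pm i$ depending on $N\bmod 8$). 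I would evaluate $f_N$ at this point and verify $|f_N|^2=\tfrac14+\tfrac1{2^N}$, using that at $\theta=\pi/4$ we have $\cos\theta=\sin\theta=1/\sqrt2$ so the GHZ part contributes $\tfrac12\cdot 2\cdot 2^{-N/2}$ in modulus-type terms and the odd-weight binomial sum telescopes. This gives the claimed upper bound on $E_G$.

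For the \textbf{lower bound} $E_G\ge \tfrac34-\tfrac1{2^N}-\tfrac1{\sqrt2^N}$, equivalently $\max|f_N|^2\le \tfrac14+\tfrac1{2^N}+\tfrac1{\sqrt2^N}$, I would bound each piece by the triangle inequality: $|f_N(\theta,\varphi)|\le \tfrac12\big(\cos^N\theta+\sin^N\theta\big)+\tfrac1{2\sqrt2^N}\big|\sum_{w\ \mathrm{odd}}(-i)^{w-1}\binom Nw\cos^{N-w}\theta\,(e^{i\varphi}\sin\theta)^w\big|$. The first bracket is $\le 1$ actually I want $\le$ something smaller — more precisely $\cos^N\theta+\sin^N\theta\le 1$ with the maximum $1$ only at the axes, where the odd-weight sum vanishes, so one needs a joint bound rather than bounding terms separately. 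The cleanest route is: the odd-weight sum in absolute value is at most $\big|\cos\theta+\sin\theta\big|^N$ (since it is the odd part of $(\cos\theta\pm i\sin\theta)^N$ composed with phases, and $|\cos\theta \pm i e^{i\varphi}\sin\theta| \le |\cos\theta|+|\sin\theta|$), so $|f_N|\le \tfrac12(\cos^N\theta+\sin^N\theta) + \tfrac12\big(\tfrac{|\cos\theta|+|\sin\theta|}{\sqrt2}\big)^N$; then I invoke Lemma~\ref{lemma:cyclic_max} / the elementary fact that $\cos^N\theta+\sin^N\theta+(\tfrac{\cos\theta+\sin\theta}{\sqrt2})^N\le 1 + 2^{-N/2}\cdot(\text{const})$, matching $\tfrac12+\tfrac1{\sqrt2^N}+\tfrac1{2^N}$ type bound.

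\textbf{Main obstacle.} The hard part is pinning down the phase bookkeeping: unlike the $X$-stabilised case, here the coefficients are complex, so (i) the optimal product vector has a nontrivial phase $\varphi$, and (ii) the "odd part of a binomial" no longer has the clean $\sin^N(\theta\pm\pi/4)$ form but a rotated-in-the-complex-plane version. I expect that getting a \emph{tight} triangle-inequality bound — one that is simultaneously valid for all $\theta,\varphi$ and attained (or nearly attained) at the same point as the explicit upper-bound state — is what forces the gap between the two estimates and why equality fails in general (as the numerics in Fig.~\ref{fig:GEM} show). I would therefore not expect to close the gap; the proof just establishes the two one-sided bounds and notes the discrepancy, exactly as the proposition states.
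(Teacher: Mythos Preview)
Your overall strategy matches the paper's: transform via $\sqrt{Y}_+^{\otimes N}$, parametrise the symmetric product state with $(\theta,\varphi)$, get the upper bound on $E_G$ by an explicit product state, and the lower bound by a triangle-inequality estimate on the overlap. However, there are two concrete points where your execution diverges from what actually works.

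\textbf{Lower bound.} You bound the odd-weight part by $\tfrac12\big(\tfrac{|\cos\theta|+|\sin\theta|}{\sqrt2}\big)^N$, i.e.\ you drop the $(\cos\theta-\sin\theta)^N$ term from the closed binomial form. This is unnecessarily lossy and leaves you with a new one-variable optimisation that Lemma~\ref{lemma:cyclic_max} does not directly handle. The clean step you are missing is that, for $\theta\in[0,\pi/2]$, the triangle inequality applied term-by-term gives \emph{exactly}
\[
\big|\braket{\tilde H^3_N}{\psi_N(\theta,\varphi)}\big|\le \tfrac12\Big(\cos^N\theta+\sin^N\theta+\sin^N(\theta+\tfrac{\pi}{4})-\sin^N(\theta-\tfrac{\pi}{4})\Big)=f_N(\theta),
\]
the very function already analysed in Proposition~\ref{lemma:3-unif_GEM_X}. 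So $\max_{\theta,\varphi}|\braket{\tilde H^3_N}{\psi_N}|\le\max_\theta f_N(\theta)=\tfrac12+\tfrac{1}{\sqrt2^{\,N}}$ follows immediately from that earlier result, with no new optimisation needed. This is the paper's key shortcut.

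\textbf{Upper bound.} Your guess $\varphi=\pm\pi/4$ does not work: at $\theta=\pi/4$ and $\varphi=\pm\pi/4$ the odd-weight sum does not align, and for $\varphi=0$ in the representation~\eqref{eq:YthreeUnif} it actually vanishes (since $(1+i)^N=(1-i)^N$ when $N\equiv 0\bmod 4$), leaving only $|f|^2=2^{-N}$. The correct choice is $\varphi=\pm\pi/2$ (equivalently, first apply the extra $\sqrt{Z}_+^{\otimes N}$ mentioned after Lemma~\ref{thm:theorem1} and then take $\varphi=0$). This makes all odd-weight phases equal, so the odd part contributes a pure imaginary $\pm i/2$ while the GHZ part contributes the real $\pm 2^{-N/2}$; being orthogonal in the complex plane, their moduli-squared add to give $|f|^2=\tfrac14+2^{-N}$, hence $E_G\le\tfrac34-2^{-N}$.
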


\begin{proof}
Again, the starting point is the identity in Eq.~\eqref{eq:YthreeUnif}. 
Unfortunately, the considered state does not have all positive coefficients anymore, so the closest product state need not have real coefficients. 
However, exploiting permutation symmetry, we can assume it to be of the form $\ket{\psi_N(\theta,\varphi)}:=\left(\cos(\theta)\ket{0}+e^{i\varphi}\sin(\theta)\ket{1} 
\right)^{\otimes N}$. The overlap is then computed similarly as before, one just needs to keep track of the occurring phases. It is then straightforward to derive the estimate
\begin{equation}
 \lvert\braket{\tilde{H}^3_N}{\psi_N(\theta,\varphi)}\rvert\leq|f_N(\theta)|\leq \frac{1}{2}+\frac{1}{\sqrt{2}^N}
 \end{equation} 
for all $\phi,\theta,N$, which proves to the lower bound in \eqref{eq:Y-GEM}. The upper bound on the geometric measure is obtained by inserting $\varphi=0,\theta=\pi/4$.
\end{proof}

Next, we extend our results to $5$-uniform complete hypergraph states, and more generally to ones with hyperedge cardinality of the type $(2^r+1)$ and which are 
stabilized either by $X^{\otimes{N}}$ or $Y^{\otimes N}$ operators with the $"+1"$ eigenvalue. In these cases, the sequence of binomial coefficients $\binom{w}{2^r+1}$ has a nice structure, which allows us to derive local unitary equivalences in the spirit of Lemma \ref{thm:theorem1}. We continue by using these equivalences to obtain analytical results on geometric measure of entanglement in the $5$-uniform case, and conjecture some lower bounds in the general case.

It turns out that the second part of Lemma~\ref{thm:theorem1} is a special case of the following result, which is proven in \ref{proof:lemma_Y_coeff_general}.
\begin{lemma}\label{thm:Ygeneral}
Let $\ket{H^{\mathbf{k}}_N}=\frac{1}{\sqrt{2}^N}\sum_{x}(-1)^{f(w(x))}\ket{x}$ be a symmetric $\mathbf{k}$-uniform $N$-qubit hypergraph state, which is stabilized by $Y^{\otimes N}$. Further, assume that $f$ is $2^r$-periodic, $f(w)=0$ for all even $w$ and $N\equiv 0\mod 2^{r-1}$. Then the state can be mapped to a superposition of the GHZ state and some odd weights by applying $\sqrt{Y}_+^{\otimes N}$, i.e.
\begin{equation}\label{eq:Ygeneral}
\ket{\widetilde{H}^\mathbf{k}_N}=\sqrt{Y}_+^{\otimes N}\ket{H^{\mathbf{k}}_N}=\frac{1}{\sqrt{2}}\ket{GHZ}+\frac{1}{\sqrt{2}}\ket{\phi_{\mathrm{odd}}^\mathbf{k}},
\end{equation} 
where $\ket{\phi_{\mathrm{odd}}^\mathbf{k}}$ is a normalized quantum state depending on $\mathbf{k}$ which features only odd weight contributions in the computational basis.
\end{lemma}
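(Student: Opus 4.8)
The plan is to exploit that $\sqrt{Y}_+$ is, up to a global phase, a real rotation. A short matrix computation in the computational basis gives
\begin{equation}
\sqrt{Y}_+=\frac{1+i}{2}\left(\begin{array}{cc}1 & -1\\ 1 & 1\end{array}\right)=e^{i\pi/4}R,\qquad R:=\frac{1}{\sqrt{2}}\left(\begin{array}{cc}1 & -1\\ 1 & 1\end{array}\right),
\end{equation}
so that $\sqrt{Y}_+^{\otimes N}=e^{iN\pi/4}R^{\otimes N}$, and the global phase can be postponed to the very end. I would first record that a symmetric $Y^{\otimes N}$-stabilised state forces $N$ even, and that the Pauli-$Y$ condition of Lemma~\ref{lemma:localStab}(3) --- which in the notation of the statement reads $f(w)\equiv f(N-w)+w+N/2\ (\mathrm{mod}\ 2)$ --- evaluated at $w=0$ gives $0\equiv f(N)+N/2$. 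Since $2^{r-1}\mid N$ forces $N\bmod 2^r\in\{0,2^{r-1}\}$, both even, and $f$ vanishes on even weights, we get $f(N)=0$, hence $4\mid N$. In particular $w$ and $N-w$ always have the same parity.

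Because a hypergraph state has real coefficients in the computational basis and $f(w)=0$ on even weights, I would split $\ket{H^{\mathbf{k}}_N}=\ket E+\ket O$ into its even- and odd-weight parts, $\ket E=\frac{1}{\sqrt{2}^{N}}\sum_{w(x)\,\mathrm{even}}\ket x$ and $\ket O=\frac{1}{\sqrt{2}^{N}}\sum_{w(x)\,\mathrm{odd}}(-1)^{f(w(x))}\ket x$, with $\|\ket E\|^2=\|\ket O\|^2=\frac12$, and treat them separately. For $\ket E$ I pass to the representation of symmetric states by homogeneous degree-$N$ polynomials in $z_0,z_1$, in which $R^{\otimes N}$ acts by the substitution $z_0\mapsto(z_0+z_1)/\sqrt{2}$, $z_1\mapsto(z_1-z_0)/\sqrt{2}$. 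Then $\ket E$ corresponds to $\frac{1}{\sqrt{2}^{N}}\cdot\frac12\bigl[(z_0+z_1)^N+(z_0-z_1)^N\bigr]$, and since the substitution sends $z_0+z_1\mapsto\sqrt{2}\,z_1$ and $z_0-z_1\mapsto\sqrt{2}\,z_0$, I obtain $R^{\otimes N}\ket E=\frac12\bigl(\ket{0\dots0}+\ket{1\dots1}\bigr)=\frac{1}{\sqrt{2}}\ket{GHZ}$, which lies in the even-weight subspace.

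The heart of the argument is to show that $R^{\otimes N}\ket O$ lies entirely in the odd-weight subspace, i.e.\ $Z^{\otimes N}R^{\otimes N}\ket O=-R^{\otimes N}\ket O$. Conjugating $Z^{\otimes N}$ back through $R^{\otimes N}$ and using the identity $R^\dagger Z R=-X$, this is equivalent to $(-X)^{\otimes N}\ket O=-\ket O$, hence (as $N$ is even) to $X^{\otimes N}\ket O=-\ket O$. I would obtain this in two steps: first, since $N$ is even $Y^{\otimes N}$ preserves the even/odd splitting, so $Y^{\otimes N}\ket{H^{\mathbf{k}}_N}=\ket{H^{\mathbf{k}}_N}$ forces $Y^{\otimes N}\ket O=\ket O$; second, $X^{\otimes N}=(-i)^N Y^{\otimes N}Z^{\otimes N}$ and $Z^{\otimes N}$ acts as $-\id$ on the odd-weight subspace, so $X^{\otimes N}\ket O=-(-i)^N Y^{\otimes N}\ket O=-(-i)^N\ket O=-\ket O$ because $4\mid N$. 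Assembling, $R^{\otimes N}\ket{H^{\mathbf{k}}_N}=\frac{1}{\sqrt{2}}\ket{GHZ}+R^{\otimes N}\ket O$ with $\|R^{\otimes N}\ket O\|=1/\sqrt2$, so $\ket{\phi_{\mathrm{odd}}^{\mathbf{k}}}:=\sqrt{2}\,R^{\otimes N}\ket O$ is normalized with odd-weight support only; the residual phase $e^{iN\pi/4}=(-1)^{N/4}=\pm1$ is a global sign which, exactly as in the remark following Lemma~\ref{thm:theorem1}, can be removed by an additional $Z^{\otimes N}$, giving the claimed decomposition.

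I expect the only genuine obstacle to be the odd-weight claim, and inside it the deduction $4\mid N$ from the periodicity and stabiliser hypotheses (together with the elementary identities $R^\dagger Z R=-X$ and $X=-iYZ$); the rest --- in particular the polynomial substitution for the even part --- is routine bookkeeping. A more computational alternative, which also matches the closed-form coefficient formulae used elsewhere in the paper, is to expand the weight-$u$ coefficient of $R^{\otimes N}\ket O$ as $\frac{1}{2^N}\sum_{a,b}\binom{N-u}{a}\binom{u}{b}(-1)^a(-1)^{f(a+b)}$ and show that it vanishes for even $u$ using the $2^r$-periodicity of $f$ together with $2^{r-1}\mid N$; but the conjugation argument above is considerably shorter.
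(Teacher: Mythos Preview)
Your argument is correct and considerably cleaner than the route taken in the paper. The paper proves this lemma jointly with Lemma~\ref{lemma:Ycoeff_general} by expanding the weight-$|e|$ coefficient via Proposition~\ref{prop:Yformula}, then invoking the trigonometric Lemmata~\ref{lemma:aux2} and~\ref{lemma:aux1} to rewrite the binomial sums and finally checking case by case that the even-weight coefficients (other than $c_0=c_N=\tfrac12$) vanish. You bypass all of this by working at the operator level: once you observe $\sqrt{Y}_+=e^{i\pi/4}R$ with $R$ a real rotation, the even-weight half of the state is literally $\tfrac{1}{\sqrt{2}}\ket{GHZ_X^+}$ and is sent to $\tfrac{1}{\sqrt{2}}\ket{GHZ}$ by $R^{\otimes N}$; for the odd-weight half you use the conjugation identity $R^\dagger Z R=-X$ to reduce ``$R^{\otimes N}\ket{O}$ has odd weight'' to ``$X^{\otimes N}\ket{O}=-\ket{O}$'', and then derive the latter from the $Y^{\otimes N}$-stabiliser via $X=-iYZ$ together with $4\mid N$. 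This is exactly the kind of structural shortcut the paper later exploits in Section~4 for the Bell-inequality calculations, but the paper does not apply it here.

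Two small remarks. First, your deduction of $4\mid N$ via periodicity is more elaborate than necessary: since $f$ vanishes on all even weights in $\{0,\dots,N\}$ and $N$ is even, $f(N)=0$ follows directly, so the palindrome condition at $w=0$ already gives $N/2$ even; the periodicity and the hypothesis $2^{r-1}\mid N$ are in fact not used elsewhere in your argument, which shows the lemma holds under the weaker assumptions ``$Y^{\otimes N}$-stabilised and $f$ vanishes on even weights''. Second, the residual sign $e^{iN\pi/4}=(-1)^{N/4}$ is a genuine $\pm$ that cannot be removed by $Z^{\otimes N}$ alone (that only flips the \emph{relative} sign between the GHZ and odd parts); one also has to discard a global phase, exactly as in the remark after Lemma~\ref{thm:theorem1}. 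The paper's own proof has the same imprecision here, so this is a feature of the statement rather than a defect of your argument.
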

The conditions look somewhat artificial at the first sight. However, additionally to all single-cardinality hyperedge $\ket{H^k_N}$ which are stabilized by $Y^{\otimes N}$, they encompass e.g. $24$-qubit $(5,9)$-uniform or $16$-qubit $(3,5,9)$-uniform hypergraph states. 

\begin{lemma}\label{lemma:Xcoeff_general}
Let $r\geq 3$ and $N=l2^r+2^{r-1}$. Then the $k=(2^{r-1}+1)$-uniform complete $N$-qubit hypergraph state is stabilized by $X^{\otimes N}$ and we have
\begin{equation}
\ket{\tilde{H}^k_N}=\sqrt{X}_+^{\otimes N}\ket{H^k_N}=\frac{1}{\sqrt{2}}\ket{GHZ_X}+(-1)^l\sum_{w(x)\, \mathrm{odd}} c_{w(x)}\ket{x},
\end{equation}
where $\ket{GHZ_X}=\frac{1}{\sqrt{2}}(\ket{+}^{\otimes N}+\ket{-}^{\otimes N})$ and the odd coefficients are given by
\begin{equation}\label{eq:Xcoeff_general_odd}
c_w=\frac{2}{2^{r}}\sum_{j=1,\, \mathrm{odd}}^{2^r-1}i^{j-1}\frac{\cos^{N-w}\left(\frac{\pi j}{2^r}\right)\sin^{w}\left(\frac{\pi j}{2^r}\right)}{\cos\left(\frac{2\pi j}{2^{r}}\right)}.
\end{equation}
\end{lemma}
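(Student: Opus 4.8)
The statement has two parts — that $\ket{H_N^k}$ is a $+1$ eigenstate of $X^{\otimes N}$, and the explicit form of $\sqrt{X}_+^{\otimes N}\ket{H_N^k}$ — and I would prove them in that order. For the stabiliser part, invoke Lemma~\ref{lemma:localStab}(1): for a single hyperedge cardinality $k=2^{r-1}+1$ it says that $X^{\otimes N}\ket{H_N^k}=\ket{H_N^k}$ iff $\binom{w}{k}\equiv\binom{N-w}{k}\pmod 2$ for all $0\le w\le N$. Since $k=2^{r-1}+1$ has binary expansion with $1$'s exactly in positions $0$ and $r-1$, Lucas' theorem gives $\binom{w}{k}\bmod 2 = w_0\,w_{r-1}$, the product of bits $0$ and $r-1$ of $w$. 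Writing $N=2^{r-1}(2l+1)$, bit $r-1$ of $N$ is $1$ and bits $0,\dots,r-2$ of $N$ vanish. A short case analysis then finishes it: if $w$ is even both sides are $0$; if $w$ is odd, carrying out the subtraction $N-w$ by hand (distinguishing $w\bmod 2^r<2^{r-1}$ from $w\bmod 2^r\ge 2^{r-1}$, and tracking the single borrow) shows that bit $r-1$ of $N-w$ equals bit $r-1$ of $w$, so $\binom{w}{k}$ and $\binom{N-w}{k}$ have the same parity.

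For the main identity the key point is that $f(w):=\binom{w}{k}\bmod 2 = w_0 w_{r-1}$ is $2^r$-periodic in $w$, which lets us Fourier-expand the phase pattern on $\Z/2^r$. Writing $(-1)^{f(w)}=\sum_{j=0}^{2^r-1}\hat f_j\,\omega^{jw}$ with $\omega=e^{2\pi i/2^r}$ produces the product-state expansion
\[
\ket{H_N^k}=\sum_{j=0}^{2^r-1}\hat f_j\bigotimes_{n=1}^{N}\tfrac{1}{\sqrt{2}}\bigl(\ket{0}+\omega^{j}\ket{1}\bigr),
\]
and a direct evaluation of the geometric sums shows $\hat f_j=\tfrac12$ for $j\in\{0,2^{r-1}\}$, $\hat f_j=-2i/\bigl(2^r\sin(2\pi j/2^r)\bigr)$ for odd $j$, and $\hat f_j=0$ otherwise. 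This is the $X$-side counterpart of the mechanism behind Lemma~\ref{thm:theorem1} and Lemma~\ref{thm:Ygeneral}.

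Next I would apply $\sqrt{X}_+^{\otimes N}$ term by term. A one-qubit computation gives $\sqrt{X}_+\tfrac{1}{\sqrt{2}}(\ket{0}+\omega^j\ket{1})=e^{i\phi_j/2}\ket{\psi(\tfrac{\pi}{4}-\tfrac{\phi_j}{2})}$ with $\phi_j=2\pi j/2^r$ and $\ket{\psi(\theta)}=\cos\theta\ket{0}+\sin\theta\ket{1}$, so every clock state returns to an honest product state and
\[
\sqrt{X}_+^{\otimes N}\ket{H_N^k}=\sum_{j}\hat f_j\,e^{iN\phi_j/2}\,\ket{\psi(\tfrac{\pi}{4}-\tfrac{\phi_j}{2})}^{\otimes N}.
\]
The $j\in\{0,2^{r-1}\}$ terms are $\tfrac12\ket{+}^{\otimes N}$ and $\tfrac12 e^{iN\pi/2}\ket{-}^{\otimes N}$; since $r\ge 3$ forces $N\equiv 0\pmod 4$, the relative phase is $+1$ and together they give $\tfrac{1}{\sqrt{2}}\ket{GHZ_X}$. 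For the odd-$j$ terms I would substitute $j\mapsto 2^{r-2}-j$ so that the rotation angle becomes exactly $\pi j/2^r$; then $\hat f_j$ contributes the factor $1/\cos(2\pi j/2^r)$, while $e^{iN\phi_j/2}$ splits (using $N=2^{r-1}(2l+1)$) into $(-1)^l$ times a power of $i$, which after collecting is $i^{\,j-1}$, reproducing the stated coefficients $c_w$ of Eq.~\eqref{eq:Xcoeff_general_odd}. Finally I extract computational-basis amplitudes: for even weight $w$ the odd-$j$ contributions cancel in pairs $j\leftrightarrow 2^{r-1}-j$, because $N\phi_j/2$ is an odd multiple of $\pi/2$ for odd $j$ and hence $\cos(N\phi_j/2)=0$, so the amplitude is $1/\sqrt{2}^{N}$ coming only from the GHZ part; for odd $w$ the GHZ part contributes nothing and the surviving odd-$j$ sum equals $(-1)^l c_w$, which is exactly the claimed form (once one checks $c_w=c_{N-w}$, forced by $X^{\otimes N}$-invariance).

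I expect the main obstacle to be the phase bookkeeping in the last step: keeping track of the branch of $\sqrt{X}_+$, of which odd residues $j$ survive after the substitution (one needs $N$ even to treat every summand as genuinely $2^r$-periodic in $j$), of the fourth roots of unity $i^{\,j-1}$, and of the overall sign in front of the GHZ term. The cleanest device I see for avoiding a brute-force interference computation is the involution $j\mapsto 2^{r-1}-j$, which both pairs the odd residues without fixed points and makes the vanishing $\cos(N\phi_j/2)=0$ do the work for the even-weight coefficients; the whole argument can then be packaged, in the spirit of Lemma~\ref{thm:Ygeneral}, as the statement that $\sqrt{X}_+^{\otimes N}$ sends a symmetric hypergraph state with $2^r$-periodic phase pattern to $\ket{GHZ_X}$ plus an odd-weight remainder.
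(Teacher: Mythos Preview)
Your approach is correct and genuinely different from the paper's. The paper works entirely in the computational basis: it first develops a general coefficient formula for $\sqrt{X}_+^{\otimes N}\ket{H}$ (Proposition~\ref{prop:Xformula}) via a direct expansion that is then compressed with the binomial-sum identities of Lemmata~\ref{lemma:aux1} and~\ref{lemma:aux2}; only afterwards does it specialise to $k=2^{r-1}+1$ and collapse the inner $q$-sum using the geometric-series ``interlude'' computation Eq.~\eqref{eq:aux3} with $\sigma=-1$. Your route instead Fourier-expands the sign pattern $(-1)^{f(w)}$ on $\Z/2^r\Z$ first, which rewrites $\ket{H^k_N}$ as a short superposition of product ``clock'' states $\bigotimes\frac{1}{\sqrt{2}}(\ket{0}+\omega^j\ket{1})$; the single-qubit identity $\sqrt{X}_+\frac{1}{\sqrt{2}}(\ket{0}+e^{i\phi}\ket{1})=e^{i\phi/2}\ket{\psi(\tfrac{\pi}{4}-\tfrac{\phi}{2})}$ then does all the work, and the $j\in\{0,2^{r-1}\}$ terms visibly produce $\frac{1}{\sqrt{2}}\ket{GHZ_X}$ without any calculation. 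Your pairing $j\leftrightarrow 2^{r-1}-j$ together with $\cos(N\phi_j/2)=0$ is a clean replacement for the paper's observation (via Eq.~\eqref{eq:XCoefficientsGeomeasure-2}) that the even-$|e|$ correction terms vanish. Both arguments ultimately feed on the same geometric sum---your $\hat f_j$ for odd $j$ is essentially Eq.~\eqref{eq:aux3}---but your packaging is more conceptual and sidesteps the heavier Proposition~\ref{prop:Xformula}, while the paper's route has the advantage that its general coefficient formula applies uniformly to other symmetric hypergraph states. Your own caveat about phase bookkeeping is apt: after the substitution $j\mapsto 2^{r-2}-j$ one also picks up a factor $e^{iN\pi/4}$ and a residual sign from $-2i\,(-i)^{j'}=-2\,i^{j'-1}$, so the global prefactor you obtain is $(-1)^{N/4+1}(-1)^l$ rather than just $(-1)^l$; this matches the $(-1)^{N/4+l}$ that appears at the very end of the paper's own derivation (up to a sign that seems to be a typo there), so the discrepancy is with the lemma's displayed normalisation, not with your argument.
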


\begin{lemma}\label{lemma:Ycoeff_general}
Let $r\geq 3$ and $N=l2^r$. Then the $k=(2^{r-1}+1)$-uniform complete $N$-qubit hypergraph state is stabilized by $Y^{\otimes N}$ and we have 
\begin{equation}
\ket{\tilde{H}^k_N}=\sqrt{Y}_+^{\otimes N}\ket{H^k_N}=\frac{1}{\sqrt{2}}\ket{GHZ}+\sum_{w(x)\, \mathrm{odd}}i^{w-1}c_{w(x)}\ket{x},
\end{equation}
where the odd coefficients are given by
\begin{equation}
c_w=\frac{2}{2^{r}}\sum_{j=1,\, \mathrm{odd}}^{2^r-1}\frac{\cos^{N-w}(\frac{\pi j}{2^r})\sin^{w}(\frac{\pi j}{2^r})}{\sin(\frac{2\pi j}{2^{r}})}.
\end{equation}
\end{lemma}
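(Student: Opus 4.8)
\emph{Proof strategy.} The plan is to apply $\sqrt{Y}_+^{\otimes N}$ directly to the explicit symmetric form of $\ket{H^k_N}$ and to extract the coefficients through a discrete Fourier expansion of its phase function, in complete parallel with Lemma~\ref{lemma:Xcoeff_general}. First I would record that a complete $k$-uniform hypergraph places $\binom{w}{k}$ hyperedges inside every set of $w$ vertices, so $\ket{H^k_N}=\frac{1}{\sqrt{2}^N}\sum_x(-1)^{\binom{w(x)}{k}}\ket{x}$. For $k=2^{r-1}+1$ the parity $f(w):=\binom{w}{2^{r-1}+1}\bmod 2$ is $2^r$-periodic, vanishes on even $w$, and equals $1$ exactly when $w\bmod 2^r$ is odd and at least $2^{r-1}$; this follows from $(1+x)^{2^{r-1}}\equiv1+x^{2^{r-1}}\pmod 2$ combined with Vandermonde's identity, or from Lucas' theorem applied to the binary expansion $2^{r-1}+1=10\dots01$. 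Feeding this into Lemma~\ref{lemma:localStab}(3), and using that $N=l2^r$ makes $N/2\equiv0\pmod2$ while $\binom{N-w}{k}\equiv f(-w)\pmod2$ by periodicity, confirms that $\ket{H^k_N}$ is a $+1$-eigenstate of $Y^{\otimes N}$.

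Next, set $g(w)=(-1)^{f(w)}$ and Fourier-expand it over $\Z/2^r$, writing $g(w)=\sum_{m=0}^{2^r-1}\hat g_m\,\zeta^{wm}$ with $\zeta=e^{2\pi i/2^r}$; this turns the state into a short superposition of product states, $\ket{H^k_N}=\sum_m\hat g_m\,\big(\frac{\ket0+\zeta^m\ket1}{\sqrt2}\big)^{\otimes N}$. Because $\sqrt{Y}_+=e^{i\pi/4}R$, where $R$ is the rotation $\ket0\mapsto\frac{\ket0+\ket1}{\sqrt2}$, $\ket1\mapsto\frac{-\ket0+\ket1}{\sqrt2}$, and because $N\equiv0\pmod8$ (this is where $r\ge3$ enters), the global phase $e^{iN\pi/4}$ is trivial and $\sqrt{Y}_+^{\otimes N}\ket{H^k_N}=R^{\otimes N}\ket{H^k_N}$. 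A one-qubit computation gives $R\frac{\ket0+\zeta^m\ket1}{\sqrt2}=e^{i\pi m/2^r}\big(\cos(\frac{\pi m}{2^r})\ket1-i\sin(\frac{\pi m}{2^r})\ket0\big)$, so the amplitude of a weight-$w$ basis vector in the $m$-th summand is $\hat g_m(-1)^{lm}(-i)^{N-w}\cos^{w}(\frac{\pi m}{2^r})\sin^{N-w}(\frac{\pi m}{2^r})$. Evaluating the Fourier coefficients from the known support of $f$ (a finite geometric series) gives $\hat g_0=\hat g_{2^{r-1}}=\frac12$, $\hat g_m=0$ for all other even $m$, and $\hat g_m=\frac{-2i}{2^r\sin(2\pi m/2^r)}$ for odd $m$.

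It then remains to sum $C_w=\sum_m\hat g_m(-1)^{lm}(-i)^{N-w}\cos^{w}(\frac{\pi m}{2^r})\sin^{N-w}(\frac{\pi m}{2^r})$. For even $w$ only $m\in\{0,2^{r-1}\}$ survive among the even indices, contributing $\frac12$ to $C_0$ and to $C_N$, while the odd-$m$ part cancels in pairs $m\leftrightarrow2^r-m$ (using $N$ even); together this is exactly the $\frac{1}{\sqrt2}\ket{GHZ}$ piece. For odd $w$ only the odd $m$ contribute; substituting $\hat g_m$ and re-indexing $m\mapsto2^{r-1}-m\pmod{2^r}$ — which interchanges $\cos^{w}\sin^{N-w}$ with $\cos^{N-w}\sin^{w}$ — reproduces the stated $c_w$, while the leftover scalar $(-1)^l(-i)^{N-w+1}$ simplifies to $(-1)^l\,i^{w-1}$. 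I expect this final phase bookkeeping to be the main obstacle: carefully tracking the factors $(-i)^{N-w}$, $(-1)^{lm}$ and the sign gained in the re-indexing, and checking that the even weights collapse cleanly onto $w\in\{0,N\}$. In particular an overall relative factor $(-1)^l$ on the odd-weight part shows up, in the same way as the $(-1)^l$ of Lemma~\ref{lemma:Xcoeff_general}; it is removed by the local Clifford $Z^{\otimes N}$, which fixes the GHZ component (as $N$ is even) and flips every odd weight, bringing the state to the displayed normal form.
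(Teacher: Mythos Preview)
Your approach is correct and takes a genuinely different, more direct route than the paper. The paper first establishes a general coefficient formula for $\sqrt{Y}_+^{\otimes N}\ket{H}$ (Proposition~\ref{prop:Yformula}) by computing the matrix element $\bra{1\dots 10\dots 0}\sqrt{Y}_+^{\otimes N}\ket{H}$ as a binomial double sum, and only afterwards simplifies it via the auxiliary Lemmata~\ref{lemma:aux2} and~\ref{lemma:aux1} (which are, in effect, the discrete Fourier transform you invoke at the outset). Your strategy inverts the order: Fourier-expand the periodic sign function $g(w)=(-1)^{f(w)}$ first, thereby writing $\ket{H^k_N}$ as a short superposition of $2^r$ product states, and then apply $\sqrt{Y}_+$ qubit-wise to each summand. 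The key arithmetic is the same in both approaches --- your Fourier coefficients $\hat g_m$ for odd $m$ are exactly the paper's interlude computation Eq.~\eqref{eq:aux3} in the $\sigma=0$ case --- but your packaging avoids the intermediate binomial machinery and makes the emergence of the GHZ piece from $m\in\{0,2^{r-1}\}$ transparent. The price is that your argument is tailored to this specific lemma, whereas the paper's Propositions~\ref{prop:Xformula} and~\ref{prop:Yformula} are reusable for any symmetric hypergraph state with periodic sign function. Your bookkeeping of the residual $(-1)^l$ on the odd-weight part, to be absorbed by $Z^{\otimes N}$, matches the paper's remark at the end of its proof; note that the lemma as stated is literally the image under $\sqrt{Y}_+^{\otimes N}$ only for even $l$, a point both you and the paper handle in the same way.
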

Proofs of the lemmata are given in \ref{proofs:general-uniform}. Next, we look at the geometric  measure of $5$-uniform complete hypergraph states with local Pauli stabilizers. In the $5$-uniform case, the situation is still 
relatively well-behaved, since one can easily compute the coefficients in Eq.~\eqref{eq:Xcoeff_general_odd} to be

\begin{equation}
c_w=\frac{(-1)^l}{\sqrt{2}}\left(\cos^{N-w}\left(\frac{\pi}{8}\right)\sin^{w}\left(\frac{\pi}{8}\right)+\sin^{N-w}\left(\frac{\pi}{8}\right)\cos^w\left(\frac{\pi}{8}\right)\right).
\end{equation}

In particular (by applying local Pauli-Z if necessary), we have a local unitary transformation which maps $X$-stabilized $5$-uniform hypergraph states to states with nonnegative
coefficients. The closest product state to $\ket{H^5_N}$ (when $N\equiv 4\mod 8$) therefore has to be be assumed to have the form $(\sin\theta\ket{0}+\cos\theta\ket{1})^{\otimes N}$ for some $\theta\in[0,
\pi/2]$. After a few algebraic transformations on the analytic expression for the overlap, we can use the proof strategy for the optimization in the $3$-uniform setting. Again, we provide an exact value 
for the $X^{\otimes N}$-stabilized states and estimates for the $Y^{\otimes N}$-stabilized cases, both converging to $3/4$ as $N$ increases.
\begin{theorem}\label{thm:5-unif_GEM_X}  
Let $\ket{H^5_N}$ be a hypergraph state which is either stabilized by $X^{\otimes N}$, or $Y^{\otimes N}$. Then we have the following analytical results for the geometric entanglement measure of $\ket{H^5_N}$:
\begin{enumerate}
\item If $\ket{H^5_N}$ is $X^{\otimes N}$-stable, we have the exact expression:
\begin{equation}\label{eq:5_unif_GEM_X}
E_G(\ket{H^5_N})=\frac{3}{4}-\lambda_N-\lambda_N^2,
\end{equation}
where 

$\lambda_N=\frac{1}{\sqrt{2}}\left(\cos^N\left(\frac{\pi}{8}\right)-\sin^N\left(\frac{\pi}{8}\right)\right).$

\item If $\ket{H^5_N}$ is $Y^{\otimes N}$-stabilized, we can estimate
\begin{equation}\label{eq:5_unif_GEM_Y}
\frac{3}{4}-\lambda_N-\lambda_N^2\leq E_G(\ket{H^5_N})\leq\frac{3}{4}.
\end{equation}
\end{enumerate}
\end{theorem}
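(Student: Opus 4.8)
The plan is to mirror, as closely as possible, the proof strategy already used for Proposition~\ref{lemma:3-unif_GEM_X} (the three-uniform $X$-case), now fed with the explicit coefficient formula from Lemma~\ref{lemma:Xcoeff_general} specialized to $r=3$. First I would fix the local-unitary picture: by Lemma~\ref{lemma:Xcoeff_general} with $r=3$ and, if necessary, a layer of local Pauli-$Z$ gates, we have a state $\ket{\tilde H^5_N}$ that is a superposition of $\ket{GHZ_X}$ with weight $\tfrac{1}{\sqrt2}$ and of odd-weight computational basis vectors with the nonnegative coefficients $c_w=\tfrac{1}{\sqrt2}\big(\cos^{N-w}(\tfrac\pi8)\sin^{w}(\tfrac\pi8)+\sin^{N-w}(\tfrac\pi8)\cos^{w}(\tfrac\pi8)\big)$. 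Since all coefficients can be made nonnegative and the state is permutation symmetric with $N>2$, the closest product state may be taken to be a real symmetric vector $(\cos\theta\ket0+\sin\theta\ket1)^{\otimes N}$ (tricks (i) and (ii) from the text, via Ref.~\cite{Robert2009Geometric}). So $E_G(\ket{H^5_N})=1-\max_\theta g_N(\theta)^2$ for $g_N(\theta):=\braket{\tilde H^5_N}{\psi(\theta)}^{\otimes N}$.

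The next step is to get $g_N(\theta)$ into a closed trigonometric form amenable to the cyclic-sum maximization lemma. Writing $\ket{GHZ_X}$ back in the computational basis contributes $\tfrac12(\cos\theta+\sin\theta)^N+\tfrac12(\cos\theta-\sin\theta)^N=\tfrac12\big(\sin^N(\theta+\tfrac\pi4)+\cos^N(\theta+\tfrac\pi4)\big)$ (up to the $1/\sqrt2$ normalization already absorbed). The odd-weight part, after summing the binomial series, telescopes into terms of the form $\cos^N(\theta\pm\tfrac\pi8)$ and $\cos^N(\theta\pm\tfrac{3\pi}{8})$ with signs — this is where I expect to recognize a sum $\sum_{j}\cos^N(\theta+\tfrac{\pi j}{8})$ over an appropriate arithmetic progression of $j$, exactly as the three-uniform case produced $\sum_{j=0}^3\cos^N(\theta+\tfrac{\pi j}{4})$. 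I would then apply Lemma~\ref{lemma:cyclic_max} (with period $2^r=8$) to conclude that the ``good'' cyclic part is maximized at $\theta\in\tfrac\pi8\Z$, check that the subtracted/correcting terms vanish or are minimized there (using $N$ even so that even powers are nonnegative, and that at the optimal lattice point one of the offending cosines is $\cos(\pi/2)=0$), and read off $\max_\theta g_N(\theta)^2=\big(\tfrac12+\lambda_N+\lambda_N^2\big)$ — more precisely, arrange the algebra so that $g_N$ at the optimal $\theta$ equals $\tfrac{1}{\sqrt2}+\text{(odd-weight contribution)}$ and the square works out to $\tfrac14+\lambda_N+\lambda_N^2$, giving Eq.~\eqref{eq:5_unif_GEM_X}. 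For part (2), the $Y$-stabilized case, the coefficients pick up the phases $i^{w-1}$ (Lemma~\ref{lemma:Ycoeff_general}), so the state is no longer positive; exactly as in Proposition~\ref{lemma:3-unif_GEM_Y} I would optimize over $(\cos\theta\ket0+e^{i\varphi}\sin\theta\ket1)^{\otimes N}$, bound $|g_N(\theta,\varphi)|\le |g_N^{(X)}(\theta)|$ by a triangle-inequality/phase-alignment argument to get the lower bound $\tfrac34-\lambda_N-\lambda_N^2$, and obtain the trivial upper bound $E_G\le\tfrac34$ either from inserting $\varphi=0,\theta=\pi/4$ (where only the GHZ part survives and contributes overlap $\tfrac1{\sqrt2}$) or from the GHZ substructure.

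The main obstacle will be the bookkeeping in the second step: collapsing the odd-weight binomial sum $\sum_{w\ \mathrm{odd}}\binom{N}{w}\cos^{N-w}\theta\sin^{w}\theta\cdot c_w$ — with $c_w$ itself a sum of two products of powers of $\cos(\pi/8),\sin(\pi/8)$ — into a clean combination of $\cos^N(\theta+\tfrac{\pi j}{8})$ terms, and then verifying that the resulting ``bad'' terms are controlled precisely at $\theta=\pi/8$ (or the relevant lattice point) so that Lemma~\ref{lemma:cyclic_max} applies verbatim. One must also track the sign $(-1)^l$ from $N=l\cdot 8+4$ and confirm it does not affect the value of the maximum (it flips a term whose magnitude at the optimum is what matters, as in the remark after Lemma~\ref{thm:theorem1} about absorbing the GHZ sign by local $Z$'s). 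Once the trigonometric identity is pinned down, the optimization itself is a direct transcription of the three-uniform argument, and the final arithmetic to produce $\tfrac34-\lambda_N-\lambda_N^2$ is routine.
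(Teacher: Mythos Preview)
Your overall strategy matches the paper's exactly: pass to $\ket{\tilde H^5_N}$ via Lemma~\ref{lemma:Xcoeff_general} with $r=3$, reduce to a one-parameter real optimization, collapse the odd-weight binomial sum into a linear combination of $\cos^N(\theta+\tfrac{\pi j}{8})$, isolate the full cyclic sum $\sum_{j=0}^{7}\cos^N(\theta+\tfrac{\pi j}{8})$, and apply Lemma~\ref{lemma:cyclic_max}. The optimal point turns out to be $\theta=\tfrac{\pi}{4}$ (not $\tfrac{\pi}{8}$), where indeed $\max_\theta f_N(\theta)=\tfrac12+\lambda_N$, and squaring gives $\tfrac14+\lambda_N+\lambda_N^2$ as you say.

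There is, however, one step that is genuinely harder than in the three-uniform case and that your proposal underestimates. After isolating the cyclic sum, the decomposition the paper arrives at is
\[
f_N(\theta)=\Big(\tfrac12-\tfrac{1}{2\sqrt2}\Big)\!\!\sum_{j=2,6}\!\cos^N\!\big(\theta+\tfrac{\pi j}{8}\big)
+\tfrac{1}{2\sqrt2}\sum_{j=0}^{7}\cos^N\!\big(\theta+\tfrac{\pi j}{8}\big)
-\tfrac{1}{2\sqrt2}\!\!\sum_{j=0,4}\!\cos^N\!\big(\theta+\tfrac{\pi j}{8}\big)
-\tfrac{1}{\sqrt2}\!\!\sum_{j=1,3}\!\cos^N\!\big(\theta+\tfrac{\pi j}{8}\big).
\]
Unlike the three-uniform case, the negative pieces do \emph{not} vanish at $\theta=\tfrac{\pi}{4}$: there is no $j$ for which $\theta+\tfrac{\pi j}{8}=\tfrac{\pi}{2}$ among $j\in\{1,3\}$, and your ``one of the offending cosines is $\cos(\pi/2)=0$'' heuristic does not apply. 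The $(j=0,4)$ term is easily seen to be minimal at $\tfrac{\pi}{4}$, but for $g(\theta)=\cos^N(\theta+\tfrac{\pi}{8})+\cos^N(\theta+\tfrac{3\pi}{8})$ one needs an honest argument. The paper computes $g'$, substitutes $y=\tan(\theta+\tfrac{\pi}{8})$, and reduces $g'(\theta)=0$ to $-2y/(1-y^2)=\big((1-y)/\sqrt2\big)^{N-2}$; a sign/monotonicity analysis then shows the unique relevant root is $y=\sqrt2+1$, i.e.\ $\theta=\tfrac{\pi}{4}$, and that it is the global minimum. This is the ``bookkeeping'' you flag, but it is a real analytic step, not just algebra, and Lemma~\ref{lemma:cyclic_max} alone does not settle it.

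For part~(2), your sketch is again the right shape, but the paper's lower bound is not a bare triangle inequality: after bounding the modulus termwise, the resulting expression is \emph{not} $f_N(\theta)$, and its maximum does not sit on the $\tfrac{\pi}{8}$ lattice. The paper first argues that any $\theta$ achieving overlap $\ge\tfrac12$ must lie in $[-\tfrac{\pi}{8},\tfrac{\pi}{8}]+\tfrac{\pi}{2}\Z$ (for $N\ge 16$), and on that interval uses the sign-dependent estimate $\cos^N(\theta+\tfrac{5\pi}{8})-\cos^N(\theta+\tfrac{\pi}{8})\le 0$ to swap two terms and recover exactly $f_N(\theta+\tfrac{\pi}{4})$, whose maximum is already known from part~(1). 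The upper bound comes from $|\braket{\tilde H^5_N}{0^{\otimes N}}|=\tfrac12$ (your ``GHZ substructure'' option); the alternative $\theta=\tfrac{\pi}{4},\varphi=0$ does not give overlap $\tfrac12$ because the odd-weight part does not vanish there.
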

The proof can be found in~\ref{proof:5-uniform_GEM}. By exploiting the symmetries of the considered states at hand, we have improved significantly on the more universal bounds derived in Ref.~\cite{Ghio2017Multipartite}. Numerical experiments indicate that the qualitative results of Propositions \ref{thm:5-unif_GEM_X}, \ref{lemma:3-unif_GEM_Y}  and Theorem \ref{lemma:3-unif_GEM_X}, about the behaviour for large $N$ extend to complete $(2^{r}+1)$-uniform hypergraph states with local Pauli stabilizers. However, for $r\geq 3$, the involved terms behave not as nicely anymore, which prevents us from using the same strategy of proof as for $3$-, and $5$-uniform states. Moreover, the coefficients in Lemmata~\ref{lemma:Xcoeff_general} and \ref{lemma:Ycoeff_general} in general can also be negative, therefore the best one can hope for 
are estimates, similar to the ones obtained in the $3$- and $5$-uniform $Y^{\otimes N}$-stabilized cases above.\\
We conjecture that the geometric entanglement measure of $(2^{r-1}+1)$-uniform complete hypergraph states which are stabilized by local Pauli-$X$ satisfies the lower bound
\begin{equation}\label{eq:conj}
\frac{3}{4}-\lambda_N-\lambda_N^2\leq E_G(\ket{H^{2^{r-1}+1}_N})
\end{equation}
with
\begin{equation}
\lambda_N:=\frac{2}{2^r}\sum_{j=1,\, \mathrm{odd}}^{2^{r}-1}\sgn\left(\cos(\frac{\pi j}{2^r})\right)\frac{\cos^N(\frac{\pi}{4}-\frac{\pi j}{2^r})}{|\cos(\frac{2\pi j}{2^r})|}.
\end{equation}
On the other hand, the overlap of $\ket{\tilde{H}^k_N}$ with one of the two states $(\frac{\ket{0}\pm\ket{1}}{\sqrt{2}})^{\otimes N}$ must be greater than $1/2$. Therefore, when the hyperedge cardinality is fixed, the geometric entanglement measure converges to $3/4$ as $N$ grows. Considering the previous examples, which hint at $Y^{\otimes N}$-stabilized states intrinsically having a higher geometric entanglement measure than $X^{\otimes N}$-stabilized ones, it seems natural to expect that the estimate Eq.\eqref{eq:conj} also carries over to $Y^{\otimes N}$-stabilized states. Here, we obtain the general upper bound $3/4$ by observing that the overlap of $\ket{\tilde{H}^k_N}$ with $\ket{0}^{\otimes N}$ is always $1/2$.
Much weaker lower estimates on the geometric measure of hypergraph states, based on entanglement witnessing using the maximal Schmidt coefficient in any bipartition, are also discussed in Ref.~\cite{Ghio2017Multipartite}. 
 The results of several numerical computations for the geometric entanglement measure of different hypergraph states are plotted in Fig.~\ref{fig:GEM}. It shows that the lower bounds on the entanglement measure for $Y^{\otimes N}$-stabilized states, which we proved and conjectured are in general not saturated.
\begin{figure}\label{fig:GEM}
\centering
\includegraphics[scale=1]{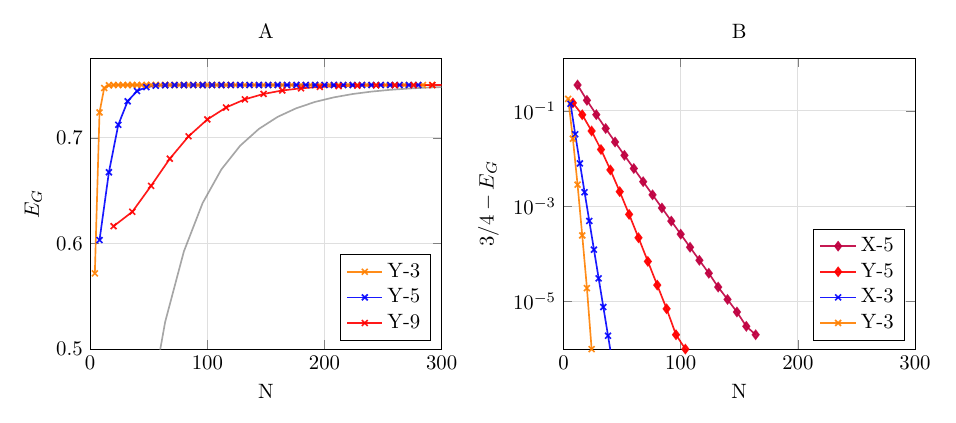}\caption{\textbf{A}: Numerically obtained values for the geometric measure of entanglement for different classes of hypergraph states, which feature $Y^{\otimes N}$-symmetries. The additional grey curve represents the lower bound for the $9$-uniform case conjectured in Eq.~\eqref{eq:conj}. \textbf{B}: Comparison of the difference of geometric measure of entanglement for different $X^{\otimes N}$ and $Y^{\otimes N}$-stabilized states to the maximum value of $3/4$ on a logarithmic scale. Note, that the interpolated values for the $X^{\otimes N}$-stabilized states coincide with the lower bounds in Eq.~\eqref{eq:Y-GEM} and Eq.~\eqref{eq:5_unif_GEM_Y}, respectively.}
\end{figure}

\section{Connection to nonlocality}
\subsection{Exponential violation of local realism}
It is known that complete hypergraph states violate Mermin-like inequalities in an exponential and robust manner~\cite{gachechiladze2016Extreme}. Both, the violation and the robustness results were highly technical, since they were derived using brute-force calculations and lacked physical intuition. Here we reformulate the results 
in a systematic way and then expand them further to infinitely many classes of hypergraph states. The key idea in all of the subsequent considerations is the following: In order to calculate the quantum value $\mean{\mathcal{B}_N}_{\ket{H_N}}$ of the considered Mermin-inequality $\mathcal{B}_N$ as in Eq.~\eqref{eq:MerminOperatorOur}, we augment the expression $\bra{H_N}\mathcal{B}_N\ket{H_N}=\bra{H_N}U^\dagger U\mathcal{B}_N U U^\dagger \ket{H_N}$. With a clever choice of the unitary $U$, we can get rid of the alternating signs in the Bell operator by considering $U\mathcal{B}_N U$ and also transform $\ket{H_N}$ to a state $U^\dagger\ket{H_N}$, which has a structure easier to handle computationally as opposed to the original one.

In  Ref.~\cite{gachechiladze2016Extreme} the violation was derived using Mermin-like operators of the form:
\begin{align}\label{eq:MerminOperatorOur}
\begin{split}
    \mathcal{B}^{P}_N  
    &=\frac{1}{2}\left((P+iZ)^{\otimes N}+(P-iZ)^{\otimes N}\right) \\
    &=\sum_{m=0,\,\mathrm{even}}^N i^m Z_1\dots Z_m P_{m+1}\dots P_N+\mathrm{ perm.}
\end{split}
\end{align}
for $P\in\{X,Y\}$.
Note that the expression is permutation symmetric. It was shown in Ref.~\cite{gachechiladze2016Extreme} that the quantum value of $\mathcal
{B}^X_N$ on $\ket{H^3_N}$ is $\mean{\mathcal{B}^X_N}_{\ket{H_N^3}}=2^{N-2}$, while 
within local realistic theories, it is bounded by  $2^{N/2}$~\cite{Mermin1990Extreme}. Here we derive the same result in a much more 
concise manner, exploiting local Pauli symmetries of the hypergraph state. The following discussion relies significantly on the following observation: When applying $\sqrt{P}_-^{\otimes N}$ or $\sqrt{Z}_-^{\otimes N}$ to both sides of $\mathcal{B}^P_N$, we can simplify the structure of $\mathcal{B}_N^P$:
\begin{align}\label{eq:tildeBell}
    \begin{split}
    \mathcal{\tilde{B}}_N&:=\sqrt{P}_-^{\otimes N}\mathcal{B}_N^P\sqrt{P}_-^{\otimes N}\\  
    &=\frac{1}{2}\left((\id+Z)^{\otimes N}+(\id-Z)^{\otimes N}\right) \\
    &=\sum_{m=0,\,\mathrm{even}}^N \id_1\dots \id_m Z_{m+1}\dots Z_N+\mathrm{ perm.}
\end{split}
\end{align}
where we used that $\sqrt{P}_-Z\sqrt{P}_-=\pm i Z$.
Similarly, we can also exploit that $\sqrt{Z}_-P\sqrt{Z}_-=\pm iP$ to find:
\begin{align}\label{eq:applyZonbothsides}
    \begin{split}
    \mathcal{\tilde{B}}^P_N&:=\sqrt{Z}_-^{\otimes N}\mathcal{B}_N^P\sqrt{Z}_-^{\otimes N}\\  
    &=\frac{1}{2}\left((\id+P)^{\otimes N}+(\id-P)^{\otimes N}\right) \\
    &=\sum_{m=0,\,\mathrm{even}}^N \id_1\dots \id_m P_{m+1}\dots P_N+\mathrm{ perm.}
\end{split}
\end{align}
As a working example, we consider the case where $N\equiv 2\mod 4$, i.e. $\ket{H^3_N}$ is invariant under $X^{\otimes N}$. We may 
rewrite $\mean{\mathcal{B}^X_N}_{\ket{H^3_N}}$ as
\begin{align}
\bra{H_N^3} \mathcal{B}^{X}_N \ket{H_N^3} &=  \bra{H_N^3}\sqrt{X}_{+}^{\otimes N}\tilde{\mathcal{B}}_N \sqrt{X}_{+}^{\otimes N}\ket{H_N^3}= \bra{\tilde H_N^3} \tilde{\mathcal{B}}_N  \ket{\tilde H_N^3},
\end{align}
with the transformed operator $\tilde{\mathcal{B}}_N$ as in Eq.~\eqref{eq:tildeBell}.
According to Eq.~\eqref{eq:XthreeUnif}, we also have $\ket{\tilde H_N^3}=\frac{\pm 1}{\sqrt{2}}\ket{GHZ}+\frac{1}{\sqrt{2}}\ket{\phi_{\mathrm{odd}}}$, where $\ket{\phi_{\mathrm{odd}}}=\frac{1}{\sqrt{2}^{N-1}}\sum_{w(x)\,\mathrm{odd}}\ket{x}$ contains all the odd weight contributions.
Due to linearity of the expression $\bra{\tilde{H}_N^3} \tilde{\mathcal{B}}_N \ket{\tilde{H}_N^3}$, we can separate it into four parts. The summands of $
\tilde{\mathcal{B}}_N$ can only flip a sign when acting on the computational basis. We immediately conclude that the cross-terms do not yield any contribution, $\bra{GHZ} \tilde{\mathcal{B}}_N\ket{\phi_{\mathrm{odd}}}=0$, and 
therefore:
\begin{equation}
\mean{\mathcal{B}^X_N}_{\ket{H^3_N}}=\frac{1}{2}\bra{GHZ}\tilde{\mathcal{B}}_N\ket{GHZ}+\frac{1}{2}\bra{\phi_{\mathrm{odd}}}\tilde{\mathcal{B}}_N\ket{\phi_{\mathrm{odd}}}.
 \end{equation}
Let us consider the GHZ part first. Since all the terms in \eqref{eq:tildeBell} appear with a positive sign and the number of Pauli-$Z$ gates is always even, $\ket{GHZ}$ is a $+1$-eigenstate of every summand of $\tilde{\mathcal{B}}_N$ and we directly obtain: \begin{equation}\label{eq:GHZ-contribution}
        \frac{1}{2} \bra{GHZ} \tilde{\mathcal{B}}_N \ket{GHZ}=\frac{1}{2} 2^{N-1} =2^{N-2}.
    \end{equation}
We consider the last case and the action of Pauli-$Z$ on $\ket{\phi_{\mathrm{odd}}}$. Because $N$ is even, we clearly have $\mathcal{\tilde{B}}_N =\tilde{\mathcal{B}}_N Z^{\otimes N}$. Further, $Z^{\otimes N}\ket{x}=-\ket{x}$ for every odd weight computational basis state $\ket{x}$. Hence,
\begin{equation}\label{eq:applyallzsforsign}
\bra{\phi_{\mathrm{odd}}}\tilde{\mathcal{B}}_N\ket{\phi_{\mathrm{odd}}}=\bra{\phi_{\mathrm{odd}}}\tilde{\mathcal{B}}_NZ^{\otimes N}\ket{\phi_{\mathrm{odd}}}=-\bra{\phi_{\mathrm{odd}}}\tilde{\mathcal{B}}_N\ket{\phi_{\mathrm{odd}}}
\end{equation}
which consequently must be zero.
This finishes the entire derivation proving that 
\begin{equation}
     \bra{H_N^3} \mathcal{B}^X_N \ket{H_N^3} =2^{N-2}.
\end{equation}
It is evident that the sign in front of the GHZ part in $\ket{\tilde{H}_N^3}$ does not make any difference.
This reasoning can directly be transferred to arbitrary hypergraph states which get mapped to a superposition of the GHZ state and odd weight vectors after applying local square roots of the respective stabilizer.
\begin{theorem}\label{thm:General_Bell_violation}
Let $\ket{H^{\mathbf{k}}_N}$ be a $\mathbf{k}$-uniform complete hypergraph state, which is stabilized by $P^{\otimes N}$, $P\in\{X,Y\}$ and gets mapped to a superposition of $\ket{GHZ}$ and odd weight vectors via:
\begin{equation}
\sqrt{P}_+^{\otimes N}\ket{H^{\mathbf{k}}_N}=\pm\frac{1}{\sqrt{2}}\ket{GHZ}+\frac{1}{\sqrt{2}}\ket{\phi_{\mathrm{odd}}^{\mathbf{k}}}.
\end{equation}
Then the quantum value of the Mermin-operator $\mathcal{B}_N^P$ evaluated on $\ket{H^{\mathbf{k}}_N}$ is
\begin{equation}
\bra{H^\mathbf{k}_N}\mathcal{B}^P_N\ket{H^\mathbf{k}_N}=2^{N-2},
\end{equation}
as opposed to the classical bound of $\sqrt{2}^N$ for any local realistic theory.
\end{theorem}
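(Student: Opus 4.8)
The plan is to run the computation from the worked example for $\ket{H^3_N}$ with $N\equiv 2\bmod 4$ essentially verbatim, feeding in only the abstract decomposition assumed in the hypothesis. First I would record that $N$ is even: this is forced by the assumed decomposition, since $\ket{GHZ}=\frac{1}{\sqrt2}(\ket{0\cdots0}+\ket{1\cdots1})$ is supported on the weights $0$ and $N$, which must be disjoint from the odd weights carried by $\ket{\phi^{\mathbf k}_{\mathrm{odd}}}$. Next, using $P^{\otimes N}\ket{H^{\mathbf k}_N}=\ket{H^{\mathbf k}_N}$ together with the identity $\sqrt P_+=P\sqrt P_-$ (both sides are functions of $P$ and hence commute), one gets $\sqrt P_+^{\otimes N}\ket{H^{\mathbf k}_N}=P^{\otimes N}\sqrt P_-^{\otimes N}\ket{H^{\mathbf k}_N}=\sqrt P_-^{\otimes N}\ket{H^{\mathbf k}_N}=:\ket{\tilde H^{\mathbf k}_N}$, so that inverting the sandwich identity~\eqref{eq:tildeBell} (using $(\sqrt P_-)^{-1}=\sqrt P_+$) to write $\mathcal B^P_N=\sqrt P_+^{\otimes N}\tilde{\mathcal B}_N\sqrt P_+^{\otimes N}$ yields
\[
\bra{H^{\mathbf k}_N}\mathcal B^P_N\ket{H^{\mathbf k}_N}=\bra{\tilde H^{\mathbf k}_N}\tilde{\mathcal B}_N\ket{\tilde H^{\mathbf k}_N},
\qquad
\tilde{\mathcal B}_N=\frac12\left((\id+Z)^{\otimes N}+(\id-Z)^{\otimes N}\right),
\]
exactly as in the working example (note that $\bra{H^{\mathbf k}_N}\sqrt P_+^{\otimes N}=\bra{\tilde H^{\mathbf k}_N}$ here precisely because $\sqrt P_+^{\otimes N}\ket{H^{\mathbf k}_N}=\sqrt P_-^{\otimes N}\ket{H^{\mathbf k}_N}$).

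Then I would substitute $\ket{\tilde H^{\mathbf k}_N}=\pm\frac{1}{\sqrt2}\ket{GHZ}+\frac{1}{\sqrt2}\ket{\phi^{\mathbf k}_{\mathrm{odd}}}$ and expand the expectation value bilinearly into a $\mathrm{GHZ}$--$\mathrm{GHZ}$ term, an $\mathrm{odd}$--$\mathrm{odd}$ term, and two cross terms. Since $\tilde{\mathcal B}_N$ is a sum of tensor products of $\id$ and $Z$, it is diagonal in the computational basis and preserves the parity of the weight; hence it maps $\ket{\phi^{\mathbf k}_{\mathrm{odd}}}$ to a vector still supported on odd weights, which is orthogonal to $\ket{GHZ}$ (supported on the even weights $0$ and $N$), so both cross terms vanish and the global sign $\pm$ is irrelevant. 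For the $\mathrm{GHZ}$ term, each of the $\sum_{m\ \mathrm{even}}\binom Nm=2^{N-1}$ summands of $\tilde{\mathcal B}_N$ acts as $+1$ on $\ket{0\cdots0}$ and, carrying an even number $N-m$ of $Z$'s, also as $+1$ on $\ket{1\cdots1}$, so $\ket{GHZ}$ is an eigenvector with eigenvalue $2^{N-1}$ and this term contributes $\frac12\cdot2^{N-1}=2^{N-2}$. For the $\mathrm{odd}$--$\mathrm{odd}$ term I would use that $N$ even implies $\tilde{\mathcal B}_N=\tilde{\mathcal B}_NZ^{\otimes N}$ (from $(\id\pm Z)Z=\pm(\id\pm Z)$), while $Z^{\otimes N}\ket x=-\ket x$ on every odd-weight $\ket x$; hence $\bra{\phi^{\mathbf k}_{\mathrm{odd}}}\tilde{\mathcal B}_N\ket{\phi^{\mathbf k}_{\mathrm{odd}}}=-\bra{\phi^{\mathbf k}_{\mathrm{odd}}}\tilde{\mathcal B}_N\ket{\phi^{\mathbf k}_{\mathrm{odd}}}=0$. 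Summing the three contributions gives $\bra{H^{\mathbf k}_N}\mathcal B^P_N\ket{H^{\mathbf k}_N}=2^{N-2}$, and comparing with the local-realistic bound $2^{N/2}=\sqrt2^N$ for the Mermin operator $\mathcal B^P_N$~\cite{Mermin1990Extreme} finishes the proof.

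I do not anticipate any serious obstacle: every ingredient already appears in the worked example, so the new content is essentially just stripping away the specific form of $\ket{\phi_{\mathrm{odd}}}$. The two points needing care are (i) that the decomposition hypothesis genuinely forces $N$ to be even, which is exactly what makes both the orthogonality of the cross terms and the relation $\tilde{\mathcal B}_N=\tilde{\mathcal B}_NZ^{\otimes N}$ valid, and (ii) that the passage from $\mathcal B^P_N$ to $\tilde{\mathcal B}_N$ is not an honest conjugation but a two-sided sandwich, so one must invoke $P^{\otimes N}$-invariance of $\ket{H^{\mathbf k}_N}$ to identify $\sqrt P_+^{\otimes N}\ket{H^{\mathbf k}_N}$ with $\sqrt P_-^{\otimes N}\ket{H^{\mathbf k}_N}$ before reading off $\bra{\tilde H^{\mathbf k}_N}\tilde{\mathcal B}_N\ket{\tilde H^{\mathbf k}_N}$.
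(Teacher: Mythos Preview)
Your proposal is correct and follows essentially the same route as the paper's own proof: transform $\mathcal{B}^P_N$ to $\tilde{\mathcal{B}}_N$ via the $\sqrt{P}$-sandwich, split the expectation value into GHZ--GHZ, cross, and odd--odd pieces, and evaluate each exactly as in the worked $\ket{H^3_N}$ example. You are actually more careful than the paper on two points it leaves implicit---that $N$ must be even, and that the two-sided sandwich needs $P^{\otimes N}$-invariance of $\ket{H^{\mathbf{k}}_N}$ to identify $\bra{H^{\mathbf{k}}_N}\sqrt{P}_+^{\otimes N}$ with $\bra{\tilde H^{\mathbf{k}}_N}$---but the substance is the same.
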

\begin{proof}
Since we have  $\sqrt{P}_+Z\sqrt{P}_+=\pm iZ$ for both $P\in\{X,Y\}$, we can conclude $\sqrt{P}_+^{\otimes N}\mathcal{B}^P_N\sqrt{P}_+^{\otimes N}=\tilde{\mathcal{B}}_N$ as in Eq.~\eqref{eq:tildeBell}. The argument now works exactly like in our previous example. The cross contributions $\bra{GHZ}\tilde{\mathcal{B}}_N\ket{\phi^{\mathbf{k}}}$ are both equal to zero, and the GHZ part gives $\bra{GHZ}\tilde{\mathcal{B}}_N\ket{GHZ}=2^{N-2}$ as in Eq.~\eqref{eq:GHZ-contribution}. The reasoning for Eq.~\eqref{eq:applyallzsforsign} also immediately carries over to the more general setting and we can conclude that again$\bra{\phi^{\mathbf{k}}}\tilde{\mathcal{B}}_N\ket{\phi^{\mathbf{k}}}=0$.
\end{proof}
As evident from Lemma~\ref{thm:theorem1} and the follow-up Lemma~\ref{lemma:3,2-uniform} (Appendix only), this theorem yields exponential violation of Mermin-type inequalities for all symmetric even-qubit $3$- and $(3,2)$-uniform hypergraph states. Several other classes of hypergraph states which satisfy the assumptions of Theorem~\ref{thm:General_Bell_violation} are listed in Tab.~\ref{tab:GHZ odd_hypergraph_states}. 

\begin{table}\centering
\begin{tabular}{c|c|r|c}
$\mathbf{k}$ & $ m $ & $ N$\qquad & Stabilizer\\
\hline
$(k_m,\dots,k_1,2)$& odd & $0\mod n_{r}$ & $X^{\otimes N}$\\
$(k_m,\dots,k_1,2)$ & even & $ \frac{n_r}{2}\mod n_{r} $ & $X^{\otimes N}$\\
$(k_m,\dots,k_1)$& odd & $0\mod n_{r}$ & $Y^{\otimes N}$\\
$(k_m,\dots,k_1)$ & even & $ \frac{n_r}{2}\mod n_{r} $ & $Y^{\otimes N}$\\
\end{tabular}\vspace{2pt}\\
\caption{\label{tab:GHZ odd_hypergraph_states} Classes of hypergraph states $\ket{H^{\mathbf{k}}_N}$, satisfying the assumption of Theorem~\ref{thm:General_Bell_violation}. Here, we choose any hypergraph with the cardinality vector $\mathbf{k}$, with the entries of the form $k_i=2^{i}+1$, $i\geq 1$. Choose the highest cardinality, $k_{\max}$  and compute $r=\log_2(k_{\max}-1)$. Set $n_r=2^{r+1}$ and pick $N$ according to the third column. Then the fourth column indicates the corresponding stabilizer.  Observe the interplay between $X$- and $Y$-stabilized states in the sense of the remark after Lemma \ref{lemma:localStab}. The $Y$-stabilized cases where $m$ is odd are covered by Lemma~\ref{thm:Ygeneral}. Using the coefficient formulae in Lemma \ref{lemma:Xcoeff_general}, it is also possible to give a rigorous mathematical proof for the other cases listed here. However, they is tedious to derive and the result is not of the crucial importance for this work, therefore we shall omit the proofs here.}
\end{table}

In order to achieve a violation for $(k=2^{r-1}+1)$-uniform $X^{\otimes N}$-stabilized states, it seems natural to consider the operator $\mathcal{B}^X_N$. However, this is in fact a dead end, since the even weight contributions of the transformed state only give $\bra{GHZ_X}\tilde{\mathcal{B}}_N\ket{GHZ_X}=1$, whilst the odd ones and the cross-terms still yield zero. Instead, we consider the operator $\mathcal{B}^Y_N$ and rewrite its quantum value on $\ket{H^k_N}$ by
\begin{equation}
\bra{H^k_N}\mathcal{B}^Y_N\ket{H^k_N}=\bra{\tilde{H}^k_N}\sqrt{X}_-^{\otimes N}\mathcal{B}^Y_N\sqrt{X}_-^{\otimes N}\ket{\tilde{H}^k_N},
\end{equation}
with $\ket{\tilde{H}^k_N}=\frac{1}{\sqrt{2}}\ket{GHZ_X}+\frac{1}{\sqrt{2}}\ket{\phi_{\mathrm{odd}}}.$ We continue with the following observations: First, the transformation actually does not change the Mermin operator: $\sqrt{X}_-^{\otimes N}\mathcal{B}^Y_N\sqrt{X}_-^{\otimes N}=\mathcal{B}^Y_N$, which is easily checked. Secondly, because $N\equiv 0\mod 4$, we can express $\mathcal{B}^Y_N$ via $\mathcal{B}^Y_N=Z^{\otimes N}\tilde{\mathcal{B}}^X_N$
with the transformed Mermin-type operator $\mathcal{\tilde{B}}_N^X$ as in Eq.~\eqref{eq:applyZonbothsides}.
Now $\ket{GHZ_X}$ is a $+1$ eigenstate both of $Z^{\otimes N}$ and each summand of $\mathcal{\tilde{B}}^X_N$, due to the even number of Pauli-$X$ operators appearing in these and $N$ being even. Therefore we can conclude
\begin{equation}
\frac{1}{2}\bra{GHZ}\mathcal{B}^Y_N\ket{GHZ}=\frac{1}{2}\bra{GHZ_X}Z^{\otimes N}\mathcal{\tilde{B}}^X_N\ket{GHZ_X}=2^{N-2}.
\end{equation}
All summands of $\mathcal{B}^Y_N$ contain an even number of Pauli-$Y$'s, so they preserve weights modulo two and thus the cross-contributions vanish.
Therefore, the last contribution to consider are the odd weight contributions. Contrary to the previous two examples, this no longer vanishes in general. However, an explicit calculation (see \ref{sec:Xviolation}) using Lemma~\ref{lemma:Xcoeff_general}, shows that
\begin{equation}\label{eq:odd_part_contribution}
\frac{1}{2}\bra{\phi_{\mathrm{odd}}}\mathcal{B}^Y_N\ket{\phi_{\mathrm{odd}}}=-\frac{4}{4^r}\left|\sum_{l=1,\,\mathrm{odd}}^{2^r-1}i^l\frac{\left(\cos(\frac{l\pi}{2^r})+\sin(\frac{l\pi}{2^r})\right)^N}{\cos(\frac{2l\pi}{2^r})}\right|^2.
\end{equation}
The decisive point is now that the sum over $l$ only runs over odd integers, therefore we always have $\cos(\frac{l\pi}{2^r})+\sin(\frac{l\pi}{2^r})<2$. If we keep the hyperedge cardinality (and thereby $r$) fixed and increase $N$, the contribution in Eq.~\eqref{eq:odd_part_contribution} becomes more and more insignificant compared to the leading order $2^{N-2}$. As a result, we get an infinite family of hypergraph states achieving the exponential violation of local realism. Fig.~\ref{fig:Xvioation} shows this exponential violation for a few different classes of hypergraph states. 
\begin{figure}
    \centering
    \includegraphics{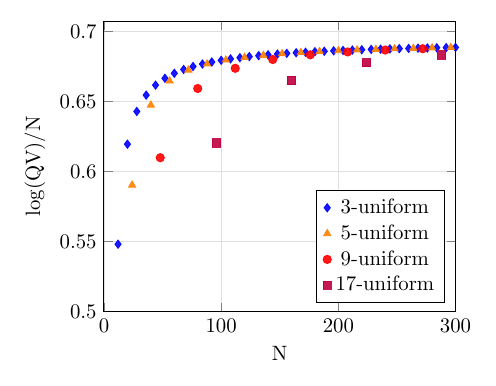}
    \caption{\label{fig:Xvioation} Quantum value $(QV)$ of $\mathcal{B}^Y_N$ evaluated on different hypergraph states stabilized by $X^{\otimes N}$, according to Eq.~\eqref{eq:odd_part_contribution}. The classical bound corresponds to the value $\log(\sqrt{2})\approx 0.347$}
    
\end{figure}

\subsection{Robustness against particle loss}
With a growing number of qubits, it becomes more and more idealistic to check to which extent nonlocality, or violation of Bell inequalities in hypergraph states, is stable under the loss of particles. It is known that the $4$-uniform complete hypergraph states keep the exponential violation even if several particles are lost~\cite{gachechiladze2019quantum}. However, for $3$-uniform complete hypergraph states, we do not obtain a Bell-violation using Mermin inequalities anymore after losing even a single particle. Instead, the resulting states exponentially violate the separability bounds of Mermin-like operators presented in Ref.~\cite{Roy2005Seperability} and thus are still entangled. For separable states, the quantum value of the Mermin-type operators $\mathcal{B}_N$ considered in the previous section, cannot exceed $\sqrt{2}$~\cite{Roy2005Seperability}.

Here we reproduce some results of Ref.~\cite{gachechiladze2019quantum}, making use of symmetries which allow for nicer representations of hypergraph states after loosing particles in the $3$-uniform case.
To give an example, assume that one or more ($k$) particles are lost during the computation. We then replace the corresponding entry in the Bell operator with the identity  operator, that is we calculate the separability violation for a reduced density matrix of the state. To obtain violations, we consider different Bell operators, depending on the initial state. However, their general structure is always of the form
\begin{equation}
     \mathcal{B}_{N\backslash k}:= \mathcal{M}^i_{N-k}\otimes \id^{\otimes k},\,i\in\{0,1\}
\end{equation}
where $\mathcal{M}^0$ and $\mathcal{M}^1$ are the Mermin-type operators given in the Appendix, Eq.~(\ref{eq:MerminOperatorOur0}) and Eq.~\eqref{eq:MerminOperatorOur1}, which feature only even/odd numbers of Pauli-$X$, respectively.
The notation $\mathcal{B}_{N\backslash k}$ indicates that $k$ particles are lost. Due to the symmetry, it is justified to fix them to be the ones corresponding to the last sites. In Ref.~\cite{gachechiladze2016Extreme} the robustness for $X^{\otimes N}$-stabilized $\ket{H^3_N}$ and $\mathcal{B}_{N\backslash k}=\mathcal{M}^0_{N-k}\otimes\id^{\otimes k}$ was derived only for $k=1$ and with a very lengthy proof. We simplify the calculations, to more general initial states and numbers of lost qubits.

\begin{theorem}\label{thm:robustness}
After loosing $k$ particles of $N$-qubit $3$-uniform complete hypergraph state, we can derive the following violations of separability inequalities:\vspace{8pt}\\
    \begin{tabular}{c|c|c}
    \textup{Constraints on} $N$ \textup{and} $k$ & \textup{Bell inequality} $\mathcal{B}_{N\backslash k}$ & \textup{Quantum value}
    $\mean{\mathcal{B}_{N\backslash k}}_{\ket{H^3_{N}}}$\\
    \hline
    $N\equiv 2\mod 4$, $k$ odd & $\mathcal{M}^0_{N-k}\otimes\id^{\otimes k}$ & $\sqrt{2}^{N-2k}$\\
    $N\equiv 0\mod 4$, $k$ even &$\mathcal{M}^1_{N-k}\otimes\id^{\otimes k}$ & $\sqrt{2}^{N-2k}$\\
    $N-k\equiv 2\mod 4$ & $\mathcal{M}^1_{N-k}\otimes\id^{\otimes k}$ & $\left|\sin\left(\frac{\pi k}{4}\right)\right|\sqrt{2}^{N-2k}$\\
    $N-k\equiv 0\mod 4$ & $\mathcal{M}^1_{N-k}\otimes\id^{\otimes k}$ & $\left|\cos\left(\frac{\pi k}{4}\right)\right|\sqrt{2}^{N-2k}$
\end{tabular}\vspace{5pt}\\
In the first two cases, the quantum value is $1/2$ or $0$ if we instead consider even and odd $k$, respectively.
\end{theorem}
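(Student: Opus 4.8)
Proof proposal.

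The plan is to push the quantum value through the normal form of $\ket{H^3_N}$ furnished by Lemma~\ref{thm:theorem1}. That lemma gives a product of local square‑root‑of‑Pauli (Clifford) gates $V^{\otimes N}$, depending on $N\bmod 4$, with $\ket{\widetilde H^3_N}:=V^{\otimes N}\ket{H^3_N}=\pm\frac{1}{\sqrt{2}}\ket{GHZ}+\frac{1}{\sqrt{2}}\ket{\phi_{\mathrm{odd}}}$, where $\ket{\phi_{\mathrm{odd}}}=\frac{1}{\sqrt{2}^{N-1}}\sum_{w(x)\,\mathrm{odd}}\chi(w(x))\ket{x}$ with $\chi\equiv 1$ in the $X^{\otimes N}$‑stabilised case ($N\equiv 2\bmod 4$) and $\chi(w)=(-i)^{w-1}$ in the $Y^{\otimes N}$‑stabilised case ($N\equiv 0\bmod 4$); in the first case $\ket{\phi_{\mathrm{odd}}}$ is exactly the ``$X$‑basis GHZ state'' $\ket{GHZ_X^-}=\frac{1}{\sqrt{2}}(\ket{+}^{\otimes N}-\ket{-}^{\otimes N})$, and in the second it is a local‑phase rotation of it. Since $V^{\otimes N}$ factorises over the sites, it leaves the $k$ lost sites of $\mathcal{B}_{N\backslash k}=\mathcal{M}^i_{N-k}\otimes\id^{\otimes k}$ untouched, so $\mean{\mathcal{B}_{N\backslash k}}_{\ket{H^3_N}}=\bra{\widetilde H^3_N}\big(\widetilde{\mathcal M}\otimes\id^{\otimes k}\big)\ket{\widetilde H^3_N}$ with $\widetilde{\mathcal M}=V^{\otimes(N-k)}\,\mathcal{M}^i_{N-k}\,(V^\dagger)^{\otimes(N-k)}$.

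Next I would show that $\widetilde{\mathcal M}$ collapses to a ``block swap'' on the surviving register. Each $\mathcal{M}^i_{N-k}$ is, up to a scalar, a linear combination of $(Z+iX)^{\otimes(N-k)}$ and $(Z-iX)^{\otimes(N-k)}$ (the even resp. odd Pauli‑$X$‑weight parts), and conjugation by the relevant square roots sends every factor $Z\pm iX$ to one of the matrices $\pm Y\pm iX$, each of which is rank one — e.g. $Y+iX=2i\ketbra{1}{0}$ and $-Y+iX=2i\ketbra{0}{1}$. Hence $\widetilde{\mathcal M}=c_{N,k}\big(\ketbra{0^{N-k}}{1^{N-k}}\pm\ketbra{1^{N-k}}{0^{N-k}}\big)$ with $|c_{N,k}|=2^{N-k-1}$, and the relative sign in the bracket is $+$ precisely for the operator $\mathcal{M}^i$ whose Pauli‑$X$‑weight parity is the one listed in the table for the given $N-k$; the other choice produces the antisymmetric sign and the overlap below will vanish. (In the $Y^{\otimes N}$‑case one conjugates through to the $X$‑basis, so the swap is between $\ket{+}^{\otimes(N-k)}$ and $\ket{-}^{\otimes(N-k)}$, but the structure is the same.)

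Now expand $\bra{\widetilde H^3_N}\big(\widetilde{\mathcal M}\otimes\id^{\otimes k}\big)\ket{\widetilde H^3_N}$ into the $GHZ$–$GHZ$, $\phi_{\mathrm{odd}}$–$\phi_{\mathrm{odd}}$ and two cross contributions. In $\ket{GHZ}$ the $\ket{0^{N-k}}$ and $\ket{1^{N-k}}$ sectors of the surviving register carry $\ket{0^k}$ and $\ket{1^k}$ on the lost one, and in $\ket{\phi_{\mathrm{odd}}}$ they carry weight‑parity‑constrained states; the swap moves each onto a sector with the opposite content on the lost register, so by orthogonality there both diagonal contributions vanish and only $\pm\bra{GHZ}\big(\widetilde{\mathcal M}\otimes\id^{\otimes k}\big)\ket{\phi_{\mathrm{odd}}}$ and its conjugate survive. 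This term filters from $\ket{\phi_{\mathrm{odd}}}$ the components whose surviving register is $\ket{0^{N-k}}$ or $\ket{1^{N-k}}$, swaps the register, and overlaps with $\ket{GHZ}$, which pins the lost register; what remains is $\sum_{z}\chi(w(z))$ over $z\in\{0,1\}^k$ restricted to one weight‑parity class. For $\chi\equiv 1$ this sum has a single term, and a short bookkeeping of the prefactors ($|c_{N,k}|=2^{N-k-1}$ against the $1/\sqrt{2}^{N}$‑type amplitudes) gives $\pm\sqrt{2}^{\,N-2k}$ — case~1. For $\chi(w)=(-i)^{w-1}$ it evaluates to $\sum_{w(z)\,\mathrm{even}}(-i)^{w(z)}=\frac{1}{2}\big[(1-i)^k+(1+i)^k\big]=\sqrt{2}^{\,k}\cos(\frac{\pi k}{4})$ when $N-k\equiv 0\bmod 4$, and to $\sum_{w(z)\,\mathrm{odd}}(-i)^{w(z)}=-i\,\sqrt{2}^{\,k}\sin(\frac{\pi k}{4})$ when $N-k\equiv 2\bmod 4$, and the same bookkeeping yields $|\cos(\frac{\pi k}{4})|\,\sqrt{2}^{\,N-2k}$ resp. $|\sin(\frac{\pi k}{4})|\,\sqrt{2}^{\,N-2k}$. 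Taking the modulus absorbs the free sign/phase — the $\pm$ in front of $\ket{GHZ}$ (removable by a layer of local $Z$‑gates) and a factor $i^{N-k-1}$ that is real exactly because $N-k$ has the prescribed parity. Finally, if one keeps the case‑1 operator but $k$ has the opposite (even) parity, the weight‑parity class in the cross term is empty, so only the $\phi_{\mathrm{odd}}$–$\phi_{\mathrm{odd}}$ piece can survive, equal to $\frac12$, or everything cancels and the value is $0$ — this is the closing remark.

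The hard part is the case‑by‑case sign and parity bookkeeping: one has to track which square roots enter $V^{\otimes N}$ (and hence whether a $Z$ in $\mathcal M^i$ rotates to $+Y$ or $-Y$, flipping one sign per $Z$), the scalar $c_{N,k}$ together with the reality of $i^{N-k-1}$, and — the delicate point — the parity of $w(z)$ over the lost register, which is exactly what decides between $\cos(\pi k/4)$, $\sin(\pi k/4)$, the constant $\frac12$ and $0$; the $Y^{\otimes N}$‑stabilised cases are the most involved since one must carry the phases $(-i)^{w-1}$ of $\ket{\phi_{\mathrm{odd}}}$ through the whole computation.
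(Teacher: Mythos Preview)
Your block-swap observation --- that conjugating $\mathcal{M}^i_{N-k}$ by the local square roots collapses it to a rank-two operator of the form $c_{N,k}\bigl(\ketbra{0^{N-k}}{1^{N-k}}\pm\ketbra{1^{N-k}}{0^{N-k}}\bigr)$ --- is correct and gives a clean route to the first two rows of the table; it bypasses the paper's summand-by-summand evaluation of the transformed correlators. One imprecision: in the $Y^{\otimes N}$-stabilised situation the swap lands in the Pauli-$Y$ eigenbasis rather than the $X$ basis (since $\sqrt{Y}_+(X+iZ)\sqrt{Y}_+^{\,\dagger}=i(X+iZ)=2i^{2}\ketbra{-_Y}{+_Y}$). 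The computation then runs dually to your Case~1 with the roles of $\ket{GHZ}$ and $\ket{GHZ_Y^-}$ interchanged, and this is where your $\cos(\pi k/4)$ and $\sin(\pi k/4)$ actually emerge --- your write-up obscures this basis swap and the ``sum over $z$ in one parity class'' is not literally what happens in the $Z$-basis picture you set up.

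There is, however, a genuine gap. Your entire argument runs through the normal form $V^{\otimes N}\ket{H^3_N}$ from Lemma~\ref{thm:theorem1}, and that lemma applies only for \emph{even} $N$. The last two rows of the table are stated for all $N$ with $N-k\equiv 0$ or $2\bmod 4$ --- in particular for odd $N$ with odd $k$ --- and the paper notes explicitly that these rows are the ones needed for entanglement-robustness when $N$ is odd. For odd $N$ the state $\ket{H^3_N}$ carries no $X^{\otimes N}$ or $Y^{\otimes N}$ stabiliser, so no $V^{\otimes N}$ is available and your approach stalls. The paper handles this by a different idea: it conditions $\ket{H^3_N}$ on the $k$ lost qubits,
\[
\ket{H^3_N}=\frac{1}{\sqrt{2}^{\,k}}\sum_{l=0}^{k}\ket{H_{N-k}(l)}\otimes\Bigl(\sum_{w(x)=l}\ket{x}\Bigr),
\]
with $\ket{H_{N-k}(l)}\in\{\ket{H^3_{N-k}},\ket{H^{3,2}_{N-k}},\ket{H^{3,1}_{N-k}},-\ket{H^{3,2,1}_{N-k}}\}$ depending only on $l\bmod 4$. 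Since $N-k$ is even, each of these reduced states \emph{does} admit a local Pauli stabiliser, and the paper evaluates $\bra{H_{N-k}(l)}\mathcal{M}^1_{N-k}\ket{H_{N-k}(l)}$ case by case via Lemma~\ref{thm:theorem1} and Lemma~\ref{lemma:3,2-uniform}; the $l$-sum with weights $\binom{k}{l}$ then produces the $\sin(\pi k/4)$ or $\cos(\pi k/4)$. You would need to add this conditioning step (or something equivalent) to cover odd $N$; your block-swap argument could then be reused on the $(N-k)$-qubit pieces.
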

The proof can be found in \ref{prf:thm_robustness}. As a consequence of this theorem and the fact that loosing a particle of a separable state cannot lead to an entangled state, we can deduce that $3$-uniform complete hypergraph states remain entangled if up to $\lfloor (N-4)/2\rfloor$ particles are lost. The first two cases guarantee this for even $N$, whereas the last two can be used when $N$ is odd. 
However, in order to detect non-locality rather than just entanglement after particle loss, Mermin-inequalities do not seem to be a very good choice. For $3$-uniform states it is not possible to detect nonlocality with our presented methods, however it is known that this is possible in the $4$-uniform case~\cite{gachechiladze2019quantum}.

\section{Conclusions and outlook}
We discuss how local Pauli symmetries of hypergraph states can aid in analysing entanglement properties and nonlocality of complete uniform hypergraph states. By transforming the state with local square-roots of Pauli operators, we analytically calculate the geometric measure of entanglement for various class of hypergraph states. Additionally, we significantly simplify calculations to recover previously derived results and extend them further to cover more cases of hypergraph states. Our results shed light and deeper understanding to the rich structures present in these states. Families of interesting hypergraph states correspond to the superposition of GHZ states with overwhelming amplitude and the Dicke states with exponentially decreasing amplitudes. This structure explains both, the exponential violation of Mermin inequality and robustness against particle losses. In this work, we derived robustness results only for seperability inequalities, and Mermin inequalities do not seem to be well suited for detecting non-locality in 3-uniform states after particle losses. Inequalities which might work better for this purpose are for instance WWWZB or Hardy-type inequalities. These have previously been investigated in the context of particle loss in Dicke states~\cite{Barnea2015Nonlocality,Divianszky2016Bounding,Wiesniak2021Symmetrized}. Numerical experiments of ours with WWWZB inequalities on 3-uniform states indicate that such results can also be derived for symmetric hypergraph states, e.g. for 8-qubit 3-uniform states if two particles are lost, for suitably chosen measurement bases, WWWZB inequality detects nonlocality of the reduced state. More general results, however, require further investigation. 

Local symmetries are key to all the findings in this work. Such symmetries have been exhaustively investigated for the complete uniform hypergraph states, restricted to the Pauli stabilizers only. On the other hand, recently, all local, invertible (unitary and nonunitary) symmetries of arbitrary stabilizer states  (graph states) have been investigated~\cite{englbrecht2020symmetries}, and connections to their entanglement structure and applications in quantum-error correction were identified. Motivated by this, exhaustive study of general symmetries for the states with nonlocal stabilizer could help better understand their entanglement properties. Besides, the identified symmetries could uncover additional transformations between hypergraph states and from hypergraph states to other multipartite pure states. Recall that extension of local complementation from graphs to hypergraphs was the central tool to construct a family of counterexamples to the famous LU=LC conjecture~\cite{ji2007lu,tsimakuridze2017graph}.

\section{Acknowledgments}
We thank Nikoloz Tsimakuridze, Géza Tóth, Nikolai Miklin, and David Wierichs for interesting ideas and discussions. Also, we would like to thank the anonymous referee for insightful comments concerning further research directions.
This work was supported by the Deutsche 
Forschungsgemeinschaft (DFG, German Research Foundation, project
numbers 236615297 - SFB 1119, 447948357 and 440958198), the Sino-German
Center for Research Promotion (Project M-0294), the
ERC (Consolidator Grant 683107/TempoQ), and the 
German Ministry of Education and Research (Project
QuKuK, BMBF Grant No. 16KIS1618K).

\appendix

\section{Appendix on Geometric measure}\label{sec: Appendix Geomeasure}
This part of the Appendix contains the proofs of the statements leading up to the results on the geometric measure of entanglement, which were not proven in the main text. 
In~\ref{sec_coefficients}, we derive coefficients after local square root transformations. In~\ref{proof:3-uniform}, we give the state transformation results for the $3$-uniform case. In~\ref{proofs:general-uniform}, we give computations on coefficients beyond the $3$-uniform case. In~\ref{sec:GeoMeasure}, we give analytical derivations for the geometric measure of entanglement. In~\ref{sec:nonlocality}, we discuss nonlocality results. In~\ref{sec:particleloss}, we give detailed calculations for the violation of the separability inequality in the $3$-uniform complete case in the presence of particle loss.

\subsection{Computing coefficients after local square root transformations - general results} \label{sec_coefficients}
First we state and prove some technical results, which simplify subsequent computations. 

\begin{lemma}\label{lemma:aux2}
Let $N,q\in\N$ and $0<n\leq N$. Then
\begin{align}
 \label{eq:modulo_sum}   \sum_{ w\equiv q \textup{ mod }n}\binom{N}{ w} = & \frac{1}{n}\bigg(\sum_{j=0}^{n-1}(2 \cos{\frac{\pi j}{n}})^{N} \cos\frac{\pi j(N-2q)}{n}\bigg).
\end{align}
In particular, in the form which we will encounter later, this reads
\begin{align}
\sum_{ w\equiv q \textup{ mod }2^r}\binom{N-|e|}{ w-m} = \frac{1}{2^r}\bigg(\sum_{j=0}^{2^r-1}(2 \cos{\frac{\pi j}{2^r}})^{N-|e|} \cos{\frac{\pi j(N-|e|+2m-2q)}{2^r}}\bigg).
\end{align}
\end{lemma}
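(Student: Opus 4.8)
The plan is to prove the identity by expressing the indicator function of the congruence class $w \equiv q \pmod n$ as a discrete Fourier sum over $n$-th roots of unity, and then swapping the order of summation to recognise a binomial expansion. Concretely, I would start from the standard orthogonality relation
\[
\frac{1}{n}\sum_{j=0}^{n-1} \omega^{j(w-q)} = \begin{cases} 1 & \text{if } w \equiv q \pmod n,\\ 0 & \text{otherwise,}\end{cases}
\]
where $\omega = e^{2\pi i/n}$ is a primitive $n$-th root of unity. Inserting this into the left-hand side of Eq.~\eqref{eq:modulo_sum} gives
\[
\sum_{w \equiv q \bmod n}\binom{N}{w} = \sum_{w=0}^{N}\binom{N}{w}\frac{1}{n}\sum_{j=0}^{n-1}\omega^{j(w-q)} = \frac{1}{n}\sum_{j=0}^{n-1}\omega^{-jq}\sum_{w=0}^{N}\binom{N}{w}\omega^{jw},
\]
where interchanging the two finite sums is unproblematic.

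Next I would apply the binomial theorem to the inner sum, $\sum_{w=0}^N \binom{N}{w}\omega^{jw} = (1+\omega^j)^N$, and then rewrite $1+\omega^j$ in polar form. Writing $\omega^j = e^{2\pi i j/n}$, one has $1 + e^{2\pi i j/n} = 2\cos(\pi j/n)\, e^{i\pi j/n}$, so that $(1+\omega^j)^N = (2\cos(\pi j/n))^N e^{i\pi j N/n}$. Substituting back yields
\[
\sum_{w \equiv q \bmod n}\binom{N}{w} = \frac{1}{n}\sum_{j=0}^{n-1}(2\cos\tfrac{\pi j}{n})^N e^{i\pi j N/n} e^{-2\pi i jq/n} = \frac{1}{n}\sum_{j=0}^{n-1}(2\cos\tfrac{\pi j}{n})^N e^{i\pi j (N-2q)/n}.
\]
Since the left-hand side is real, one may take the real part on the right, which replaces the exponential by $\cos(\pi j(N-2q)/n)$; this gives exactly Eq.~\eqref{eq:modulo_sum}. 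The ``in particular'' form then follows by the substitutions $n \mapsto 2^r$, $N \mapsto N - |e|$, and $q \mapsto q - m$ (noting $w \equiv q \bmod 2^r \iff w - m \equiv q - m \bmod 2^r$), after which $N - |e| + 2(m - q) \bmod 2^r$ appears inside the cosine; a minor point is that the cosine is invariant under the relevant shifts, so no care about the sign of $q-m$ is needed.

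I do not expect a serious obstacle here — the argument is the classical ``roots of unity filter'' — but the one place to be slightly careful is justifying the passage to the real part and checking that when taking real parts term by term the summand for index $j$ and index $n-j$ are genuinely complex conjugates (so that the imaginary parts cancel in pairs, and the $j=0$ and, if $n$ even, $j = n/2$ terms are already real). This is immediate since $2\cos(\pi(n-j)/n) = -2\cos(\pi j/n)$ and $e^{i\pi(n-j)(N-2q)/n}$ is, up to the sign $(-1)^{N-2q} = (-1)^N$ absorbed by the $N$-th power of that sign change, the conjugate of the $j$-term; alternatively, one simply observes that the left-hand side is manifestly real so its real part equals itself. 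I would present the computation in the streamlined form above and relegate this reality check to a parenthetical remark.
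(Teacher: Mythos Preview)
Your proposal is correct and follows essentially the same approach as the paper: both use the roots-of-unity filter to rewrite the congruence indicator, interchange the finite sums, apply the binomial theorem to obtain $(1+e^{2\pi i j/n})^N$, and then factor this as $(2\cos\tfrac{\pi j}{n})^N e^{i\pi j N/n}$ before taking real parts. The only cosmetic difference is that the paper inserts the real part at the beginning (since the indicator is already real) and carries it through, whereas you take it at the end by appealing to the reality of the left-hand side.
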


\begin{proof}
Let $\zeta_n$ be a primitive $n$-th root of unity. Then the expression $\frac{1}{n}(1+(\zeta_n^1)^l+\dots+(\zeta_n^{n-1})^l)$ is non-zero only whenever $l\equiv 0\mod n$, in which case it equals $1$. We use this property of the roots of unity to rewrite the left hand side of Eq.~\eqref{eq:modulo_sum}:
\begin{align*}
\sum_{l=0}^{\lfloor(N-q)/l\rfloor}\binom{N}{nl+q}&=\sum_{l'=0}^N\binom{N}{l'}\frac{1}{n}\Re\left(1+(\zeta_n^1)^{l'-q}+\dots+(\zeta_n^{n-1})^{l'-q}\right)\\
&=\frac{1}{n}\sum_{j=0}^{n-1}\Re\left(e^{-2q\pi j i/n}\left(1+e^{2j\pi i/n}\right)^N\right)\\
&=\frac{1}{n}\sum_{j=0}^{n-1}\Re\left(e^{i\pi j(N-2q)/n}\left(e^{-i\pi j/n}+e^{i\pi j/n}\right)^N\right)\\
&=\frac{1}{n}\sum_{j=0}^{n-1}\left(2\cos\frac{\pi j}{n}\right)^N\cos\frac{\pi j(N-2q)}{n}.
\end{align*}
\end{proof}

\begin{lemma}\label{lemma:aux1}
For $\beta\neq 0$, $\alpha\in\N$ and $M\in\N$ we have
\begin{equation}\label{eq:cosIdentity}
\sum_{m=0}^M(-1)^m\binom{M}{m}\cos\frac{\pi(2m-M+\alpha)}{\beta}=\Re\left(e^{i\pi\alpha/\beta}(-2i\sin\frac{\pi}{\beta})^M\right).
\end{equation}
In particular, in the form it will appear later,
\begin{align}
\sum_{m=0}^{|e|}(-1)^m\binom{|e|}{m}\cos\frac{\pi j(2m+{N-|e|}-2q)}{2^r}=\Re\left(e^{i\pi j(N-2q)/2^r}(-2i\sin\frac{\pi j}{2^r})^{|e|}\right).
\end{align}
\end{lemma}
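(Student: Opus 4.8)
\textbf{Proof strategy for Lemma~\ref{lemma:aux1}.} The plan is to recognize the left-hand side as the real part of a finite binomial sum and then apply the binomial theorem. Writing $\cos\frac{\pi(2m-M+\alpha)}{\beta}=\Re\left(e^{i\pi(2m-M+\alpha)/\beta}\right)$, I would pull the factor $e^{i\pi(\alpha-M)/\beta}$ out of the sum since it does not depend on $m$, leaving
\begin{equation*}
\sum_{m=0}^M(-1)^m\binom{M}{m}\cos\frac{\pi(2m-M+\alpha)}{\beta}=\Re\left(e^{i\pi(\alpha-M)/\beta}\sum_{m=0}^M\binom{M}{m}\left(-e^{2\pi i/\beta}\right)^m\right).
\end{equation*}
The inner sum is exactly $\left(1-e^{2\pi i/\beta}\right)^M$ by the binomial theorem.

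The remaining step is cosmetic: factor $e^{i\pi/\beta}$ out of $1-e^{2\pi i/\beta}$ to get $1-e^{2\pi i/\beta}=e^{i\pi/\beta}\left(e^{-i\pi/\beta}-e^{i\pi/\beta}\right)=e^{i\pi/\beta}\left(-2i\sin\frac{\pi}{\beta}\right)$. Raising to the $M$-th power gives $e^{i\pi M/\beta}\left(-2i\sin\frac{\pi}{\beta}\right)^M$, and multiplying by the prefactor $e^{i\pi(\alpha-M)/\beta}$ cancels the $e^{i\pi M/\beta}$ term, leaving $e^{i\pi\alpha/\beta}\left(-2i\sin\frac{\pi}{\beta}\right)^M$ inside the real part. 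This is precisely the right-hand side of Eq.~\eqref{eq:cosIdentity}.

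For the specialized form, I would simply substitute $\beta=2^r/j$ (equivalently, replace every occurrence of $\pi/\beta$ by $\pi j/2^r$), $\alpha=N-2q$, and $M=|e|$ into the general identity; no new argument is needed. There is essentially no obstacle here — the only mild care required is the bookkeeping of the phase factors so that the $e^{i\pi M/\beta}$ contributions cancel correctly, and noting that the condition $\beta\neq 0$ (respectively $j\neq 0$ in the specialized version, or more precisely that the expression $\pi/\beta$ is well-defined) is exactly what is needed for the manipulations to make sense. I would also remark that the identity holds for all integers $\alpha$ and all positive integers $M$, $\beta$ without any parity restriction, since the binomial theorem imposes none.
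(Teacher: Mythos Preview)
Your proposal is correct and follows essentially the same route as the paper: write the cosine as the real part of an exponential, pull out the $m$-independent phase $e^{i\pi(\alpha-M)/\beta}$, apply the binomial theorem to obtain $(1-e^{2\pi i/\beta})^M$, and then factor $e^{i\pi/\beta}$ out of each binomial factor to recognize $-2i\sin(\pi/\beta)$. The specialized form is indeed obtained by the substitution you describe, with no further work required.
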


\begin{proof}
\begin{align}
&\sum_{m=0}^M(-1)^m\binom{M}{m}\cos\frac{\pi(2m-M+\alpha)}{\beta}=\sum_{m=0}^M(-1)^m\binom{M}{m}\Re\left(e^{i\frac{\pi(2m-M+\alpha)}{\beta}}\right)\\
&=\Re\left(e^{i\pi\alpha/\beta}e^{i\pi M/\beta}\sum_{m=0}^M\binom{M}{m}(-1)^m e^{2im\pi/\beta}\right)
=\Re\left(e^{i\pi\alpha/\beta}e^{i\pi M/\beta}(1-e^{2i\pi/\beta})^M\right)\\
&=\Re\left(e^{i\pi\alpha/\beta}(e^{-i\pi/\beta}-e^{i\pi/\beta})^M\right)=\Re\left(e^{i\pi\alpha/\beta}(-2i\sin\frac{\pi}{\beta})^M\right).
\end{align}
\end{proof}

\begin{proposition}\label{prop:Xformula}
\begin{enumerate}
\item Given a symmetric hypergraph state which is stabilized by $X^{\otimes N}$, then it is possible to calculate the coefficient $c_{|e|}=c_{1\dots10\dots 0}$ of the computational basis element of a weight $|e|$ after application of $\sqrt{X}^{\otimes N}$, using the following expression:
\begin{align}
\label{eq:XCoefficientsGeomeasure-1}c_{|e|}=&\frac{1}{(2\sqrt{2})^N}\sum_{ w}(-1)^{f( w)}\Re\left((1+i)^{N}(-i)^{ w+|e|}\right)\sum_{m=0}^{|e|}\binom{N-|e|}{ w-m}\binom{|e|}{m}(-1)^m\\
\label{eq:XCoefficientsGeomeasure-2}=&\frac{1}{\sqrt{2}^{N}}-\frac{2}{(2\sqrt{2})^N}\sum_{f( w)=1}\sum_{m=0}^{|e|}\binom{N-|e|}{ w-m}\binom{|e|}{m}\Re\left((1+i)^{N}(-i)^{ w+|e|-2m}\right).
\end{align}
Here $f:\{0,\dots,N\}\to\{0,1\}$ is the function which describes the symmetric hypergraph state by specifying the sign of different weights, encompassing the hyperedge cardinalities.
\item If the function $f$ is $2^r$-periodic, i.e. $f( w)=f( w+2^{r})$, Eq.~\eqref{eq:XCoefficientsGeomeasure-1} can alternatively be expressed as
\begin{align}
\label{eq:XCoefficientsGeomeasure2-1}c_{|e|}=&\frac{1}{2^r}\sum_{j=0}^{2^r-1}\cos^{N-|e|}\frac{\pi j}{2^r}\sin^{|e|}\frac{\pi j}{2^r}\sum_{q=0}^{2^r-1}(-1)^{f(q)}\Re\left(i^{|e|+q}e^{i\pi N/4}\right)\Re\left(e^{i\pi j(N-2q)/2^r}i^{|e|}\right).
\end{align}

\end{enumerate}
\end{proposition}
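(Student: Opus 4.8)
The plan is to compute $c_{|e|}$ directly from the action of $\sqrt{X}_+^{\otimes N}$ on computational basis states and then use the reality of the resulting vector to justify inserting the real part. First I would exploit symmetry to write $\ket{H}=\frac{1}{\sqrt{2}^N}\sum_x(-1)^{f(w(x))}\ket{x}$, and record the computational-basis entries of $\sqrt{X}_+=\ketbra{+}{+}+i\,\ketbra{-}{-}$, namely $\bra{0}\sqrt{X}_+\ket{0}=\bra{1}\sqrt{X}_+\ket{1}=\frac{1+i}{2}$ and $\bra{0}\sqrt{X}_+\ket{1}=\bra{1}\sqrt{X}_+\ket{0}=\frac{1-i}{2}$. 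Because both $\ket{H}$ and $\sqrt{X}_+^{\otimes N}$ are permutation invariant, every weight-$|e|$ basis vector receives the same coefficient, so it suffices to evaluate $c_{|e|}=\bra{1\dots1\,0\dots0}\sqrt{X}_+^{\otimes N}\ket{H}$ with $|e|$ leading ones.

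Next I would carry out the combinatorial bookkeeping. For a basis vector $\ket{x}$ of weight $w$ with exactly $m$ of its ones among the first $|e|$ sites, the amplitude $\bra{1\dots1\,0\dots0}\sqrt{X}_+^{\otimes N}\ket{x}$ factorises over the sites into a product of $\frac{1+i}{2}$'s and $\frac{1-i}{2}$'s; collecting the exponents and using $\frac{1-i}{1+i}=-i$ gives $\frac{(1+i)^N}{2^N}(-i)^{|e|+w-2m}$. There are $\binom{|e|}{m}\binom{N-|e|}{w-m}$ such $\ket{x}$, so summing over $w$ and $m$ produces $c_{|e|}=\frac{1}{(2\sqrt2)^N}\sum_{w}(-1)^{f(w)}(1+i)^N\sum_{m=0}^{|e|}\binom{N-|e|}{w-m}\binom{|e|}{m}(-i)^{|e|+w-2m}$, which is Eq.~\eqref{eq:XCoefficientsGeomeasure-1} before the real part is taken.

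To legitimise the $\Re$, I would note that $\sqrt{X}_+^{\otimes N}\ket{H}$ is in fact a real vector: since $\overline{\sqrt{X}_+}=\sqrt{X}_-$ and $\sqrt{X}_-^{\otimes N}=\sqrt{X}_+^{\otimes N}X^{\otimes N}$, the stabiliser identity $X^{\otimes N}\ket{H}=\ket{H}$ yields $\overline{\sqrt{X}_+^{\otimes N}\ket{H}}=\sqrt{X}_+^{\otimes N}\ket{H}$. Hence $c_{|e|}=\Re(c_{|e|})$, and since $f$, the binomials and $(-1)^m$ are real while $(-i)^{-2m}=(-1)^m$, I can move the real part onto the factor $(1+i)^N(-i)^{|e|+w}$ and split off $(-1)^m$, which is exactly Eq.~\eqref{eq:XCoefficientsGeomeasure-1}. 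For Eq.~\eqref{eq:XCoefficientsGeomeasure-2} I would split $(-1)^{f(w)}=1-2f(w)$: the contribution of the constant $1$ is precisely the coefficient of $\ket{1\dots1\,0\dots0}$ in $\sqrt{X}_+^{\otimes N}\ket{+}^{\otimes N}=\ket{+}^{\otimes N}$, namely $1/\sqrt{2}^N$, while in the remaining sum over $\{w:f(w)=1\}$ I reabsorb $(-1)^m$ into $(-i)^{|e|+w}$ to get $(-i)^{|e|+w-2m}$ back inside the real part.

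For part (2) I would assume $f$ is $2^r$-periodic (we may take $r\ge2$, so $(-i)^{2^r}=1$) and split the $w$-sum of Eq.~\eqref{eq:XCoefficientsGeomeasure-1} into residue classes $q$ modulo $2^r$; on each class $f(w)=f(q)$ and $(-i)^{|e|+w}=(-i)^{|e|+q}$ are constant. After the shift $w\mapsto w-m$, the particular form of Lemma~\ref{lemma:aux2} rewrites $\sum_{w\equiv q}\binom{N-|e|}{w-m}$ as $\frac{1}{2^r}\sum_{j=0}^{2^r-1}(2\cos\frac{\pi j}{2^r})^{N-|e|}\cos\frac{\pi j(N-|e|-2q+2m)}{2^r}$, and then Lemma~\ref{lemma:aux1} with $M=|e|$, $\beta=2^r/j$ and $\alpha=N-2q$ collapses the leftover alternating sum over $m$ to $\Re\!\big(e^{i\pi j(N-2q)/2^r}(-2i\sin\frac{\pi j}{2^r})^{|e|}\big)$. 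Since $\sin\frac{\pi j}{2^r}\ge0$ for $0\le j<2^r$, the real factor $2^N\cos^{N-|e|}\frac{\pi j}{2^r}\sin^{|e|}\frac{\pi j}{2^r}$ comes out of the real part, the powers of two cancel against $(2\sqrt2)^{-N}$, and reorganising the residual phases gives Eq.~\eqref{eq:XCoefficientsGeomeasure2-1}. The hard part will be precisely this last phase bookkeeping: chaining the two auxiliary lemmas, keeping the real-part manipulations honest by extracting only genuinely real quantities, and then massaging the products of $i$-powers and $e^{i\pi(\cdot)/2^r}$ factors into exactly the displayed form — routine in principle but easy to get wrong in the signs.
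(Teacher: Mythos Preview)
Your proof is correct and follows the same combinatorial skeleton as the paper: compute the overlap site by site, count configurations by the number $m$ of ones among the first $|e|$ positions, and for part~(ii) group the weight sum into residue classes modulo $2^r$ before invoking Lemmata~\ref{lemma:aux2} and~\ref{lemma:aux1} in that order. Two of your intermediate arguments are in fact cleaner than the paper's. First, to justify $c_{|e|}\in\R$ the paper verifies that the explicit sum is invariant under $w\mapsto N-w$ (using the palindrome condition coming from the $X^{\otimes N}$ stabiliser together with a substitution in $m$), whereas your one-line observation $\overline{\sqrt{X}_+^{\otimes N}\ket{H}}=\sqrt{X}_-^{\otimes N}\ket{H}=\sqrt{X}_+^{\otimes N}X^{\otimes N}\ket{H}=\sqrt{X}_+^{\otimes N}\ket{H}$ bypasses that bookkeeping entirely. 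Second, for the leading $1/\sqrt{2}^N$ in Eq.~\eqref{eq:XCoefficientsGeomeasure-2} the paper evaluates the full $w$-sum algebraically (summing to $(1-i)^N$ and cancelling against $(1+i)^N$); your identification of the $f\equiv 0$ contribution with $\bra{1\dots10\dots0}\sqrt{X}_+^{\otimes N}\ket{+}^{\otimes N}=\braket{1\dots10\dots0}{+^{\otimes N}}$ is more conceptual and avoids that computation.
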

\begin{proof}
We write the hypergraph state as
\begin{equation}
\ket{H}=\frac{1}{\sqrt{2}^N}\sum_{ w=0}^N\sum_{|I|= w}(-1)^{f( w)}\ket{i_1\dots i_N},
\end{equation}
and use the identity
\begin{equation}
\sqrt{X}_+=\frac{1}{2}\left((1+i)\id+(1-i)X\right)=\frac{1+i}{2}\left(\id-iX\right).
\end{equation} 
Then, using $J=(j_1,\dots,j_N)$,
\begin{align*}
c_{|e|}&=\bra{\underbrace{1\dots 1}_{|e|} 0\dots 0}\sqrt{X}_+^{\otimes N}\ket{H}\\
&=\bra{1\dots 1 0\dots 0}\sum_{J\in\{0,1\}^N}\sum_ w\sum_{|I|= w}\left(\frac{1+i}{2}\right)^N\frac{(-1)^{f( w)}}{\sqrt{2}^N}(-iX)^{j_1}\otimes\dots\otimes(-iX)^{j_N}\ket{I}.
\end{align*}
Every $\ket{I}=\ket{i_1,\dots i_N}$ is mapped onto the subspace spanned by $\ket{1\dots 1 0\dots 0}$ by exactly one choice of $J$, specifically 
\begin{equation}\label{eq:choice_of_J}
j_k=
\begin{cases}
1-i_k &, k=1,\dots,|e|\\
i_k &, k=|e|+1,\dots,N.
\end{cases}
\end{equation}
By applying $(-iX)^{j_1}\otimes\dots\otimes(-iX)^{j_N}$ with this choice on $\ket{I}$, we pick up a total phase $(-i)^{|e|-(i_1+\dots+i_{|e|})}(-i)^{ w-(i_1+\dots+i_{|e|})}$ in the process. Substituting $m:=i_1+\dots+ i_{|e|}$, we get in total $\binom{N-|e|}{ w-m}\binom{|e|}{m}$ different possibilities to produce the factor $(-i)^{|e|+| w|-2m}$ in this fashion. Hence, we are left with
\begin{align}
\nonumber c_{|e|}&=\left(\frac{1+i}{2\sqrt{2}}\right)^N\sum_{ w}\sum_{|I|= w}(-1)^{f( w)}(-i)^{|e|+ w-2(i_1+\dots+i_{|e|})}\\
\label{eq:aux1}&=\left(\frac{1+i}{2\sqrt{2}}\right)^N\sum_{ w}\sum_{m=0}^{|e|}\binom{N-|e|}{ w-m}\binom{|e|}{m}(-1)^{f( w}(-i)^{|e|+ w-2m},
\end{align}
which proves $(i)$. We can proceed by separating the negative contributions as determined by $f( w)$:
\begin{align}
\label{eq:1/sqrt(2)N}c_{|e|}&=\left(\frac{1+i}{2\sqrt{2}}\right)^N\sum_{ w=0}^N\sum_{m=0}^{|e|}\binom{N-|e|}{ w-m}\binom{|e|}{m}(-i)^{|e|+ w-2m}\\
\label{eq:first_step}&-2\left(\frac{1+i}{2\sqrt{2}}\right)^N\sum_{f( w)=1}\sum_{m=0}^{|e|}\binom{N-|e|}{ w-m}\binom{|e|}{m}(-i)^{|e|+ w-2m}.
\end{align}
Omitting the factor $\left(\frac{1+i}{2\sqrt{2}}\right)^N$, we compute Eq.~\eqref{eq:1/sqrt(2)N}:
\begin{align*}
&\sum_{ w=0}^N\sum_{m=0}^{|e|}\binom{|e|}{m}\binom{N-|e|}{ w-m}(-i)^{|e|+ w-2m}=\sum_{m=0}^{|e|}\binom{|e|}{m}(-i)^{|e|-m}(1-i)^{N-|e|}\\
=&(1-i)^{N-|e|}(-i)^{|e|}\sum_{m=0}^{|e|}\binom{|e|}{m}i^m=(1-i)^{N-|e|}(-i)^{|e|}(1+i)^{|e|}\\
=&(1-i)^N.
\end{align*}
Therefore first summand is actually equal to $\frac{1}{\sqrt{2}^N}$, which reproduces the results of~\cite{gachechiladze2019quantum} (Lemma 5.9 therein).
Since $\ket{H}$ is stabilized by $X^{\otimes N}$, both sums in the first part are invariant under replacing $ w$ with $N- w$. This is sufficient to show that the expression in Eq.~\eqref{eq:aux1} is invariant under complex conjugation and therefore equal to its real part. This finishes the proof of the first claim.\\
Now assume that $f$ is indeed $2^r$ periodic with $r\geq 2$, cf.~\cite{Lyons2017Local}.
In order to arrive at Eq.~\eqref{eq:XCoefficientsGeomeasure2-1}, we rewrite Eq.~\eqref{eq:XCoefficientsGeomeasure-1} as
\begin{align*}
c_{|e|}=\frac{1}{(2\sqrt{2})^N} \sum_{q=0}^{2^r-1}(-1)^{f(q)}\sum_{m=0}^{|e|}\Re\left((1+i)^N(-i)^{q+|e|-2m}\right)\sum_{ w\equiv q\mod 2^r}\binom{N-|e|}{ w-m}\binom{|e|}{m}.
\end{align*}
Next we apply the first technical Lemma \ref{lemma:aux1} and reshuffle some of the terms:
\begin{align}
\nonumber c_{|e|}=&\frac{1}{2^N}\sum_{q=0}^{2^r-1}(-1)^{f(q)}\Re\left(e^{i\pi N/4 }(-i)^{q+|e|}\right)\sum_{m=0}^{|e|}\binom{|e|}{m}(-1)^m\times\\
&\frac{1}{2^r}\left(\sum_{j=0}^{2^r-1}(2\cos\frac{\pi j}{2^r})^{N-|e|}\cos\frac{\pi j(N-|e|+2m-2q)}{2^r}\right).
\end{align}
We already carried out the summation over $m$ in Lemma~\ref{lemma:aux2}, which then results in:
\begin{align*}
c_{|e|}=&\frac{1}{2^N}\sum_{q=0}^{2^r-1}(-1)^{f(q)}\Re\left(e^{i\pi N/4}(-i)^{q+|e|}\right)\times\\
&\frac{1}{2^r}\sum_{j=0}^{2^r-1}(2\cos\frac{\pi j}{2^r})^{N-|e|}\Re\left(e^{i\pi j(N-2q)/2^r}(-2i\sin\frac{\pi j}{2^r})^{|e|}\right).
\end{align*}
This indeed simplifies to \eqref{eq:XCoefficientsGeomeasure2-1}. 
\end{proof}

A similar result holds for the action of $\sqrt{Y}^{\otimes}$ on any $N$-qubit symmetric hypergraph state.
\begin{proposition}\label{prop:Yformula}
\begin{enumerate}
\item
Given a symmetric hypergraph state, then it is possible to calculate a coefficient of the computational basis element of a weight $|e|$ after application of $\sqrt{Y}^{\otimes N}$  using the following expression:
\begin{align}\label{eq:YCoefficientsGeomeasure}
c_{|e|}=\frac{(1+i)^N}{(2\sqrt{2})^N}\sum_{ w}(-1)^{f( w)+ w}\sum_{m=0}^{|e|}\binom{N-|e|}{ w-m}\binom{|e|}{m}(-1)^m,
\end{align}
where $f:\{0,\dots,N\}\to\{0,1\}$ again specifies the signs of the different weights.
\item
Again, if $f$ is $2^r$-periodic, this can alternatively be expressed as follows:
\begin{align}\label{eq:Ycoefficients_simplified}
c_{|e|}=\frac{1}{2^r}\sum_{j=0}^{2^r-1}\cos^{N-|e|}\frac{\pi j}{2^r}\sin^{|e|}\frac{\pi j}{2^r}\sum_{q=0}^{2^r-1}(-1)^{f(q)+q}\Re\left((-i)^{|e|}e^{i\pi j(N-2q)/2^r}\right).
\end{align}
\end{enumerate}
\end{proposition}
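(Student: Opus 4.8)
The plan is to mirror the proof of Proposition~\ref{prop:Xformula} almost verbatim, substituting the single-qubit identity for $\sqrt{X}_+$ by the analogous one for $\sqrt{Y}_+$. Concretely, I would start from the representation $\ket{H}=\frac{1}{\sqrt{2}^N}\sum_w\sum_{|I|=w}(-1)^{f(w)}\ket{i_1\dots i_N}$ and use
\[
\sqrt{Y}_+=\frac{1}{2}\big((1+i)\id+(1-i)Y\big)=\frac{1+i}{2}\big(\id-iY\big),
\]
noting that, unlike $X$, the Pauli $Y$ carries a sign when flipping a bit: $Y\ket{0}=i\ket{1}$ and $Y\ket{1}=-i\ket{0}$, i.e. $Y\ket{b}=i(-1)^b\ket{1-b}$. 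Then for the coefficient $c_{|e|}=\bra{\underbrace{1\dots1}_{|e|}0\dots0}\sqrt{Y}_+^{\otimes N}\ket{H}$, exactly as in the $X$-case each $\ket{I}$ contributes for precisely one choice of the flip-string $J$ given by Eq.~\eqref{eq:choice_of_J}. Collecting the phases from the $-iY$ factors: each of the $|e|$ sites in the first block contributes $-i\cdot i(-1)^{i_k}=(-1)^{i_k}$ when $j_k=1$ (i.e. $i_k=0$, giving $+1$) and just $1$ when $j_k=0$; each of the $N-|e|$ sites in the second block contributes $-i\cdot i(-1)^{0}=1$ when it flips a $1\mapsto0$ and $1$ when it does nothing — wait, one must be careful here; I would carefully track that the total extra sign coming from all the $Y$'s compared to the $X$-computation is $(-1)^{w}$ (this is the statement that $Y^{\otimes N}$ versus $X^{\otimes N}$ differ by $Z^{\otimes N}$ up to a global phase, and $Z^{\otimes N}$ acts as $(-1)^w$ on weight-$w$ vectors), and that the modulus $(1-i)^N$ picked up in the bare $X$-case gets replaced by $(1+i)^N$ since $\sqrt{Y}_+Z\sqrt{Y}_+=iZ$ rather than $-iZ$. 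Carrying through the same bookkeeping of $m:=i_1+\dots+i_{|e|}$ and the multiplicities $\binom{N-|e|}{w-m}\binom{|e|}{m}$, one arrives directly at Eq.~\eqref{eq:YCoefficientsGeomeasure}. The assumption that the state is $Y^{\otimes N}$-stabilised (when invoking this for concrete states) or simply the symmetry of $f$ again ensures the expression is real, but note Eq.~\eqref{eq:YCoefficientsGeomeasure} as written already is manifestly the real coefficient since the $(-1)^w$ replaces the oscillating $(-i)^{w+|e|-2m}\Re(\dots)$ structure — I would double-check the factor $(-1)^m$ survives correctly from the $j_k$-flips in the first block.

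For part (2), I would proceed exactly as in the $X$-case: assume $f$ is $2^r$-periodic, split the sum over $w$ according to the residue class $q=w\bmod 2^r$, pull out $(-1)^{f(q)+q}$ (here using $2^r\mid$ does not affect parity of $w$ only if $r\ge1$, so $(-1)^w=(-1)^q$ needs $r\ge1$, which holds), apply the inner sum over $w\equiv q$ via Lemma~\ref{lemma:aux2} in the form stated there, and then carry out the sum over $m$ using Lemma~\ref{lemma:aux1}. The combination $(2\cos\frac{\pi j}{2^r})^{N-|e|}$ from Lemma~\ref{lemma:aux2} together with $(-2i\sin\frac{\pi j}{2^r})^{|e|}$ from Lemma~\ref{lemma:aux1} reassembles, after absorbing the powers of $2$ against the prefactor $\frac{(1+i)^N}{(2\sqrt2)^N}$, into $\cos^{N-|e|}\frac{\pi j}{2^r}\sin^{|e|}\frac{\pi j}{2^r}$ times a residual phase; tracking the phases $(1+i)^N=2^{N/2}e^{i\pi N/4}$ and the $(-i)^{|e|}$ from the sine-factor, one lands on Eq.~\eqref{eq:Ycoefficients_simplified}. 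This is essentially the same algebra as the last display in the proof of Proposition~\ref{prop:Xformula}, with the sole difference being the extra $(-1)^w=(-1)^q$ inside the $q$-sum and the absence of the $\Re\big(e^{i\pi N/4}(-i)^{q+|e|}\big)$ factor (replaced by the simpler real sign), which is consistent since the $Y$-case had one fewer independent oscillating contribution.

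The main obstacle — and the only place where genuine care rather than transcription is needed — is the sign bookkeeping for the $Y$ flips in part (1): one must correctly account for the three sources of $i$'s and $-1$'s, namely the explicit $-i$ in $-iY$, the $i$ in $Y\ket{b}=i(-1)^b\ket{1-b}$, and the $(-1)^b$ sign depending on which bit value is being flipped, and verify that these conspire to produce exactly the net factor $(-1)^{f(w)+w}$ together with the $(-1)^m$ (as opposed to the $(-i)^{|e|+w-2m}$ oscillation that appeared in the $X$-case). I would verify this on a small example, e.g. $N=2$ or $N=4$ with a single hyperedge, to make sure the factor $(-1)^w$ rather than $(-i)^w$ or $i^w$ is the right replacement. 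Everything downstream then follows mechanically from Lemmata~\ref{lemma:aux1} and~\ref{lemma:aux2}, exactly as in the preceding proposition.
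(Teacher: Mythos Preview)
Your approach is correct and coincides with the paper's: expand $\sqrt{Y}_+=\tfrac{1+i}{2}(\id-iY)$, select the unique flip-string $J$ per $I$ via Eq.~\eqref{eq:choice_of_J}, count multiplicities, and for part~(2) feed the result through Lemmata~\ref{lemma:aux2} and~\ref{lemma:aux1} exactly as in Proposition~\ref{prop:Xformula}.

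The sign bookkeeping you flag as the main obstacle is actually cleaner than your write-up suggests, and your heuristic detour via ``$Y$ versus $X$ differ by $Z^{\otimes N}$'' is unnecessary. Just use $-iY\ket{0}=\ket{1}$ and $-iY\ket{1}=-\ket{0}$ directly: in the first block (target $\ket{1}$) every flip sends $\ket{0}\mapsto\ket{1}$ with factor $+1$, so no sign ever arises there; in the second block (target $\ket{0}$) every flip sends $\ket{1}\mapsto\ket{0}$ with factor $-1$, and there are $w-m$ such flips. The total phase is therefore $(-1)^{w-m}=(-1)^{w+m}$, which is precisely the $(-1)^{f(w)+w}(-1)^m$ in Eq.~\eqref{eq:YCoefficientsGeomeasure}. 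Your remark that ``$(1-i)^N$ gets replaced by $(1+i)^N$'' is a red herring: the prefactor $\big(\tfrac{1+i}{2\sqrt{2}}\big)^N$ is identical in the $X$- and $Y$-cases (it comes from the $\tfrac{1+i}{2}$ in the expansion of $\sqrt{P}_+$); the $(1-i)^N$ in the $X$-proof appeared only when separately evaluating the $f\equiv 0$ piece, which is not part of the formula being proved here.
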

\begin{proof}
Similarly as before, we write
\begin{align*}
c_{|e|}&=\bra{1\dots 1 0\dots 0}\sum_{J\in\{0,1\}}^N\sum_ w\sum_{|I|= w}\left(\frac{1+i}{2}\right)^N\frac{(-1)^{f( w)}}{\sqrt{2}^N}(-iY)^{j_1}\otimes\dots\otimes(-iY)^{j_N}\ket{i_1\dots i_N}
\end{align*}
With the same choice of $J$ as in Eq.~\eqref{eq:choice_of_J}, keeping in mind that $-iY\ket{0}=\ket{1}, -iY\ket{1}=-\ket{0}$, we collect a factor of $1^{|e|}(-1)^{i_{|e|+1}+\dots+i_N}$. With the same substitution as before we obtain
\begin{equation}\label{eq:Yformula}
c_{|e|}=\left(\frac{1+i}{2\sqrt{2}}\right)^N\sum_ w(-1)^{f( w)}\sum_{m=0}^{|e|}\binom{N-|e|}{ w-m}\binom{|e|}{m}(-1)^{ w-m}.
\end{equation}
For the second part we proceed in the same manner as before, utilising the Lemmata~\ref{lemma:aux1}, \ref{lemma:aux2}.
\end{proof}

\subsection{The state transformation results for the $3$-uniform case}\label{proof:3-uniform}
\begin{replemma}{thm:theorem1}
The even-qubit $3$-uniform complete hypergraph states can be mapped to a superposition of the GHZ state and all odd weight vectors after applications of square roots of local Pauli matrices, corresponding to the respective stabilizers.
\vspace{5pt}\\
$\mathbf{N\equiv 2\mod 4}$: $\ket{H^3_N}$ is stabilized by $X^{\otimes N}$ and with $\ket{\tilde{H}^3_N}=\sqrt{X}_+^{\otimes N}\ket{H^3_N}$ we have 
\begin{equation}\label{eq:XthreeUnif_rep}
\ket{\tilde{H}_N^3}=\pm\frac{1}{\sqrt{2}}\ket{GHZ}+\frac{1}{\sqrt{2}^N}\sum_{w(x)\, \mathrm{odd}}\ket{x}.
\end{equation}
$\mathbf{N\equiv 0\mod 4}$: $\ket{H^3_N}$ is stabilized by $Y^{\otimes N}$ and with $\ket{\tilde{H}^3_N}=\sqrt{Y}_+^{\otimes N}\ket{H^3_N}$, we have
\begin{equation}\label{eq:YthreeUnif_rep}
\ket{\tilde{H}_N^3}=\pm\frac{1}{\sqrt{2}}\ket{GHZ}+\frac{1}{\sqrt{2}^N}\sum_{w(x)\, \mathrm{odd}}(-i)^{w(x)-1}\ket{x}.
\end{equation}

\end{replemma}
\begin{proof}
In order to prove the claim, one could proceed using the results from Propositions ~\ref{prop:Xformula} and \ref{prop:Yformula}. However, the $3$-uniform case allows for much neater proofs.
In both cases, the computational basis elements which have negative contributions are the ones with weight $w\equiv 3\mod 4$. Therefore, we can rewrite the state as
\begin{align*}
\ket{H^3_N}=&\frac{1}{\sqrt{2}^N}\left(\sum_{w(x)\,\mathrm{even}}\ket{x}+\sum_{w(x)\,\mathrm{odd}}i^{w(x')-1}\ket{x'}\right)\\
=&\frac{1}{2}\left(\ket{+}^{\otimes N}+\ket{-}^{\otimes N}\right)-\frac{i}{2}\left(\ket{+_Y}^{\otimes N}-\ket{-_Y}^{\otimes N}\right),
\end{align*}
Which is the same as having $\ket{H^3_N}=\frac{1}{\sqrt{2}}\ket{GHZ_X^+}+\frac{1}{\sqrt{2}}\ket{GHZ_Y^-}$.\\
\underline{Case 1:} $N\equiv 2\mod 4$:\\
Since $\ket{\pm}$ are the $\pm 1$ eigenstates of $X$, and $\sqrt{X}_+\ket{+_Y}=\frac{1+i}{\sqrt{2}}\ket{0},\,\sqrt{X}_+\ket{-_Y}=\frac{1-i}{\sqrt{2}}\ket{1}$, it is clear that
\begin{align*}
\sqrt{X}^{\otimes N}\ket{H^3_N}&=\frac{1}{\sqrt{2}}\sqrt{X}_+^{\otimes N}\left(\ket{+}^{\otimes N}+\ket{-}^{\otimes N}\right)+\frac{(-i)}{\sqrt{2}}\sqrt{X}_+^{\otimes N}\left(\ket{+_Y}^{\otimes N}-\ket{-_Y}^{\otimes N}\right)\\
&=\frac{1}{\sqrt{2}}\left(\ket{+}^{\otimes N}+i^N\ket{-}^{\otimes N}\right)+\frac{(-i)}{\sqrt{2}}\left(\left(\frac{1+i}{\sqrt{2}}\ket{0}\right)^{\otimes N}-\left(\frac{1-i}{\sqrt{2}}\ket{1}\right)^{\otimes N}\right).
\end{align*}
And since $N\equiv 2\mod 4$, this becomes
\begin{equation*}
=\frac{1}{\sqrt{2}}\left(\ket{+}^{\otimes N}-\ket{-}^{\otimes N}+(-i)(\pm i\ket{0}^{\otimes N}-(\mp i)\ket{1}^{\otimes N})\right)
\end{equation*}
\begin{equation*}
=\frac{1}{\sqrt{2}}\ket{GHZ_X^-}\pm\frac{1}{\sqrt{2}}\ket{GHZ}=\pm\frac{1}{\sqrt{2}}\ket{GHZ}+\frac{1}{\sqrt{2}^N}\sum_{w(x)\,\mathrm{odd}}\ket{x}.
\end{equation*}
\noindent \underline{Case 2:} $N\equiv 0\mod 4$\\
On the other hand, $\ket{\pm_Y}$ are the $\pm 1$ eigenstates of $Y$, further $\sqrt{Y}_+\ket{+}=\frac{1+i}{\sqrt{2}}\ket{1},\,\sqrt{Y}_+\ket{-}=\frac{1+i}{\sqrt{2}}\ket{0}$, so we get
\begin{align*}
\sqrt{Y}_+^{\otimes N}\ket{H^3_N}&=\frac{1}{\sqrt{2}}\sqrt{Y}_+^{\otimes N}\left(\ket{+}^{\otimes N}+\ket{-}^{\otimes N}\right)+\frac{(-i)}{\sqrt{2}}\sqrt{Y}_+^{\otimes N}\left(\ket{+_Y}^{\otimes N}-\ket{-_Y}^{\otimes N}\right)\\
&=\frac{1}{\sqrt{2}}\left(\left(\frac{1+i}{\sqrt{2}}\ket{1}\right)^{\otimes N}+\left(\frac{1+i}{\sqrt{2}}\ket{0}\right)^{\otimes N}\right)+\frac{(-i)}{\sqrt{2}}\left(\ket{+_Y}^{\otimes N}-i^N\ket{-_Y}^{\otimes N}\right)
\intertext{ and by assumption on $N$:}
&=\pm\frac{1}{\sqrt{2}}\ket{GHZ}+\frac{(-i)}{\sqrt{2}}\ket{GHZ_Y^-}.
\end{align*}
By applying $\sqrt{Z}_-$ on every site, we can transform this even further to
\begin{equation}
\pm\frac{1}{\sqrt{2}}\ket{GHZ}+\frac{i}{\sqrt{2}^N}\sum_{w(x)\,\mathrm{odd}}\ket{x}.
\end{equation}
The sign in front of the GHZ part depends on the exact value of $\left(\frac{1+i}{\sqrt{2}}\right)^N$ in both cases.
\end{proof}

\begin{lemma}\label{lemma:3,2-uniform}
Consider an even-qubit $(3,2)$-uniform hypergraph state $\ket{H^{3,2}_N}$. It can be written as
\begin{equation}
\ket{H^{3,2}_N}=\frac{1}{\sqrt{2}}\ket{GHZ_Y^+}+\frac{1}{\sqrt{2}}\ket{GHZ_X^-}.
\end{equation}
With a similar calculation as above, we see the following:\\
\underline{Case 1: $N=4k$}\\
The state is stabilized by $X^{\otimes N}$ and gets transformed to
\begin{align}
\nonumber\sqrt{X}_+^{\otimes N}\ket{H^{3,2}_N}&=\frac{1}{\sqrt{2}}\left(\left(\frac{1+i}{\sqrt{2}}\right)^N\ket{0}^{\otimes N}+\left(\frac{1-i}{\sqrt{2}}\right)^N\ket{1}^{\otimes N}+(\ket{+}^{\otimes N}-i^N\ket{-}^{\otimes N})\right)\\
\label{eq:3,2-uniformX}&=(-1)^{k}\frac{1}{\sqrt{2}}\ket{GHZ_Z^+}+\frac{1}{\sqrt{2}}\ket{GHZ_X^-}.
\end{align}
\underline{Case 2: $N=4k+2$}\\
The state ist stabilized by $Y^{\otimes N}$ and gets transformed to
\begin{align}
\nonumber\sqrt{Y}^{\otimes N}\ket{H^{3,2}_N}&=\frac{1}{\sqrt{2}}\left(\ket{+_Y}^{\otimes N}+i^N\ket{-_Y}^{\otimes N}+\left(\frac{1+i}{\sqrt{2}}\right)^N(\ket{1}^{\otimes N}-\ket{0}^{\otimes N})\right)\\
\label{eq:3,2-uniform}&=(-1)^{k+1}\frac{i}{\sqrt{2}}\ket{GHZ_Z^-}+\frac{1}{\sqrt{2}}\ket{GHZ_Y^-}.
\end{align}
In order for the second case to meet the requirements of Thm.~\ref{thm:General_Bell_violation}, we can apply local $\sqrt{Z}_+$ on every site and get rid of the relative phase in $\ket{GHZ_Z^-}$.
\end{lemma}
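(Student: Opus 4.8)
The plan is to mirror the proof of Lemma~\ref{thm:theorem1}, exploiting that the $(3,2)$-uniform state is obtained from the three-uniform one by toggling all cardinality-$2$ hyperedges. First I would pin down the sign pattern of $\ket{H^{3,2}_N}$ in the computational basis: the weight-$w$ coefficient is $(-1)^{f(w)}$ with $f(w)\equiv\binom{w}{2}+\binom{w}{3}=\binom{w+1}{3}\pmod{2}$, and a short check (by Lucas' theorem, or by tabulating $\binom{w}{2}$ and $\binom{w}{3}$ modulo $2$ for $w\equiv 0,1,2,3\pmod{4}$) shows that $f(w)=1$ precisely when $w\equiv 2\pmod{4}$. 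Equivalently, this follows from the remark after Lemma~\ref{lemma:localStab}: adjoining the weight-$2$ hyperedges flips the sign exactly on the weights with $\binom{w}{2}$ odd, i.e.\ on $w\equiv 2,3\pmod{4}$, which turns the three-uniform pattern ``$-1$ on $w\equiv 3$'' into ``$-1$ on $w\equiv 2$''. The stabiliser claims then follow either from Lemma~\ref{lemma:localStab} directly or from that same remark applied to Lemma~\ref{thm:theorem1}: $\ket{H^{3,2}_N}$ is $X^{\otimes N}$-stabilised when $N\equiv 0\pmod{4}$ and $Y^{\otimes N}$-stabilised when $N\equiv 2\pmod{4}$.

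Next I would recast this sign pattern as the GHZ superposition in the statement. Since the coefficient equals $(-1)^{w/2}=i^{w}$ on even weights and $+1$ on odd weights, the elementary identities $\frac{1}{2}\big(\ket{+_Y}^{\otimes N}+\ket{-_Y}^{\otimes N}\big)=\frac{1}{\sqrt{2}^{N}}\sum_{w(x)\,\mathrm{even}}i^{w(x)}\ket{x}$ and $\frac{1}{2}\big(\ket{+}^{\otimes N}-\ket{-}^{\otimes N}\big)=\frac{1}{\sqrt{2}^{N}}\sum_{w(x)\,\mathrm{odd}}\ket{x}$ yield at once $\ket{H^{3,2}_N}=\frac{1}{\sqrt{2}}\ket{GHZ_Y^+}+\frac{1}{\sqrt{2}}\ket{GHZ_X^-}$.

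Finally I would act on each GHZ summand with the square root of the relevant local stabiliser, exactly as in Lemma~\ref{thm:theorem1}, using $\sqrt{X}_+\ket{+}=\ket{+}$, $\sqrt{X}_+\ket{-}=i\ket{-}$, $\sqrt{X}_+\ket{+_Y}=\frac{1+i}{\sqrt{2}}\ket{0}$, $\sqrt{X}_+\ket{-_Y}=\frac{1-i}{\sqrt{2}}\ket{1}$ and the analogous relations for $\sqrt{Y}_+$. For $N=4k$, applying $\sqrt{X}_+^{\otimes N}$ leaves $\ket{GHZ_X^-}$ unchanged (since $i^N=1$) and sends $\ket{GHZ_Y^+}$ to $\frac{1}{\sqrt{2}}\big(e^{i\pi N/4}\ket{0}^{\otimes N}+e^{-i\pi N/4}\ket{1}^{\otimes N}\big)=(-1)^{k}\ket{GHZ_Z^+}$, which is \eqref{eq:3,2-uniformX}. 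For $N=4k+2$, applying $\sqrt{Y}_+^{\otimes N}$ turns $\ket{GHZ_Y^+}$ into $\ket{GHZ_Y^-}$ (since $i^N=-1$) and sends $\ket{GHZ_X^-}$ to $\frac{1}{\sqrt{2}}e^{i\pi N/4}\big(\ket{1}^{\otimes N}-\ket{0}^{\otimes N}\big)=(-1)^{k+1}i\,\ket{GHZ_Z^-}$, which is \eqref{eq:3,2-uniform}; the closing remark, that $\sqrt{Z}_+^{\otimes N}$ removes the phase of $\ket{GHZ_Z^-}$, is then a one-line check. The whole argument is routine once Lemma~\ref{thm:theorem1} is available; the only part demanding care is bookkeeping the global factor $\big(\frac{1+i}{\sqrt{2}}\big)^{N}=e^{i\pi N/4}$ through the two residue classes and noting that the two GHZ pieces cannot interfere, one being supported entirely on even and the other entirely on odd weights.
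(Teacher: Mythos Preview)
Your proposal is correct and follows essentially the same route the paper indicates: identify the sign pattern of $\ket{H^{3,2}_N}$, rewrite it as the sum of the two GHZ states, and then apply the local square-root operators term by term exactly as in the proof of Lemma~\ref{thm:theorem1}. The paper gives no separate proof for this lemma beyond ``with a similar calculation as above'', and your argument is precisely that calculation carried out in full; the only extra you supply is the explicit derivation of the sign pattern via $\binom{w}{2}+\binom{w}{3}=\binom{w+1}{3}$, which the paper leaves implicit.
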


\subsection{Computations for coefficients beyond the $3$-uniform case}\label{proofs:general-uniform}
Interlude:\\
The proofs of Lemmata \ref{lemma:Xcoeff_general}, \ref{lemma:Ycoeff_general} both require a similar calculation, which we jointly conduct in the following:
Let $\sigma\in\{0,-1\}$, $r\geq 3$ and $f(q):=\binom{q}{2^{r-1}+1}\mod 2$. Note that $f(q)=1$ if and only if $q\in\{2^{r-1}+1,\dots,2^r-1\}$ and zero otherwise. This can be easily derived using Lucas' Theorem, which is stated e.g. in the Appendix of~\cite{Lyons2017Local}. Then:
\begin{align}
\nonumber\sum_{q=1\, \mathrm{odd}}^{2^r-1}(-1)^{f(q)}\left(i^\sigma e^{-2i\pi j/2^r}\right)^q=&\bigg(1-\left(i^\sigma e^{-2i\pi j/2^r}\right)^{2^{r-1}}\bigg)\sum_{q=1\, \mathrm{odd}}^{2^{r-1}-1}\left(i^\sigma e^{-2i\pi j/2^r}\right)^q\\
\nonumber=&(1-e^{i\pi j})i^\sigma e^{-2i\pi j/2^r}\sum_{q'=0}^{2^{r-2}-1}\left(i^\sigma e^{-4i\pi j/2^r}\right)^q.\\
\intertext{Clearly, this vanishes whenever $j$ is even. We therefore continue, assuming that $j$ is odd:}
\begin{split}\label{eq:aux3}
\sum_{q=1\, \mathrm{odd}}^{2^r-1}(-1)^{f(q)}\left(i^\sigma e^{-2i\pi j/2^r}\right)^q=&2i^\sigma e^{-2i\pi j/2^r}\frac{1-e^{i\pi j}}{1-(-1)^\sigma e^{-4i\pi j/2^r}}\\
=&\begin{cases}
-2i\left(\sin\frac{2\pi j}{2^r}\right)^{-1}, & \sigma=0\\
-2i\left(\cos\frac{2\pi j}{2^r}\right)^{-1}, & \sigma=-1.
\end{cases}
\end{split}
\end{align}

\subsubsection{Proof of Lemma \ref{lemma:Xcoeff_general}}\label{proof:lemma_X_coeff_general}
\begin{replemma}{lemma:Xcoeff_general}
Let $r\geq 3$ and $N=l2^r+2^{r-1}$. Then the $k=(2^{r-1}+1)$-uniform complete $N$-qubit hypergraph state is stabilized by $X^{\otimes N}$ and we have
\begin{equation}
\ket{\tilde{H}^k_N}=\sqrt{X}_+^{\otimes N}\ket{H^k_N}=\frac{1}{\sqrt{2}}\ket{GHZ_X}+(-1)^l\sum_{w(x)\, \mathrm{odd}} c_{w(x)}\ket{x},
\end{equation}
where $\ket{GHZ_X}=\frac{1}{\sqrt{2}}(\ket{+}^{\otimes N}+\ket{-}^{\otimes N})$, and the odd coefficients are given by
\begin{equation}\label{eq:Xcoeff_general_odd_rep}
c_w=\frac{2}{2^{r}}\sum_{j=1,\,\mathrm{odd}}^{2^r-1}i^{j-1}\frac{\cos\left(\frac{\pi j}{2^r}\right)^{N-w}\sin\left(\frac{\pi j}{2^r}\right)^{w}}{\cos\left(\frac{2\pi j}{2^{r}}\right)}.
\end{equation}
\end{replemma}
\begin{proof}
Recall that $k=2^{-1}r+1$ and $N=l2^r+2^{r-1}$. Because $N\equiv 0\mod 4$, and only odd weights $ w$ have negative contributions, using \eqref{eq:XCoefficientsGeomeasure-2} is easy to check that for even $|e|$, we have $\Re\left(e^{i\pi N/4}(-i)^{ w+|e|-2m}\right)=0$ and therefore $c_{|e|}=\frac{1}{\sqrt{2}^N}$. For odd $|e|$, consider the formula Eq.~\eqref{eq:XCoefficientsGeomeasure2-1}:
\begin{align}
\label{eq:aux10}c_{|e|}&=\frac{e^{i\pi N/4}}{2^r}\sum_{j=0}^{2^r-1}\cos^{n-|e|}\frac{\pi j}{2^r}\sin^{|e|}\frac{\pi j}{2^r}\sum_{q=0}^{2^r-1}(-1)^{f(q)}\Re((-i)^{|e|+q})\Re((-i)^{|e|}e^{i\pi j(N-2q)/2^r}).
\end{align}
When $q$ is even, we have $\Re((-i)^{|e|+q})=0$, and for odd $q$ we can rewrite 
\begin{align*}
&\Re((-i)^{|e|-q})\Re((-i)^{|e|}e^{i\pi j (N-2q)/2^r})=\Re((-i)^{2|e|-q}e^{i\pi j (l2^r+2^{r-1}-2q)/2^r}\\
=&\Re\left((-i)^q \underbrace{e^{i\pi j l}}_{=(-1)^{jl}}e^{-2i\pi j q/2^r}\underbrace{e^{i\pi j/2}}_{=i^j}\right).
\end{align*}
Therefore Eq.~\eqref{eq:aux10} can be rearranged to
\begin{align*}
&=\sum_{j=0}^{2^r-1}\frac{(-1)^{N/4+1}}{2^r}\cos\frac{\pi j}{2^r}^{N-|e|}\sin\frac{\pi j}{2^r}^{|e|}(-1)^{lj}\Re\bigg(\sum_{q=1\, \mathrm{odd}}^{2^r-1}(-1)^{f(q)}(-i)^{q-j}e^{-2i\pi jq/2^r}\bigg).
\end{align*}
Now we are in the situation of the preceeding remark and Eq.~\eqref{eq:aux3} in the case where $\sigma=-1$. Therefore, the only terms contributing are the ones where $j$ is odd, in which case we get
\begin{align}
\Re\left(i^j\sum_{q=1\, \mathrm{odd}}^{2^r-1} (-1)^{f(q)}(-i)^{q}e^{-2i\pi jq/2^r}\right)=\Re\left(i^j\frac{-2i}{\cos(\frac{2\pi j}{2^r})}\right)=\frac{2 i^{j-1}}{\cos(\frac{2\pi j}{2^r})}.
\end{align}
Putting everything together, we are left with
\begin{equation}
c_{|e|}=(-1)^{N/4+l}\frac{2}{2^r}\sum_{j=1\, \mathrm{odd}}^{2^r-1}\cos^{N-|e|}\frac{\pi j}{2^r}\sin^{|e|}\frac{\pi j}{2^r}\frac{i^{j-1}}{\cos\frac{2\pi j}{2^r}}.
\end{equation}
\end{proof}

\subsubsection{Proof of Lemma \ref{thm:Ygeneral} and Lemma \ref{lemma:Ycoeff_general}}\label{proof:lemma_Y_coeff_general}
\begin{replemma}{thm:Ygeneral}
Let $\ket{H^{\mathbf{k}}_N}=\frac{1}{\sqrt{2}^N}\sum_{x}(-1)^{f(w(x))}\ket{x}$ be a symmetric $\mathbf{k}$-uniform $N$-qubit hypergraph state, which is stabilized by $Y^{\otimes N}$. Further, assume that $f$ is $2^r$-periodic, $f(w)=0$ for all even $w$, and $N\equiv 0\mod 2^{r-1}$. Then the state can be mapped to a superposition of the GHZ state and some odd weights by applying $\sqrt{Y}_+^{\otimes N}$, i.e.
\begin{equation}\label{eq:Ygeneral_rep}
\ket{\widetilde{H}}=\sqrt{Y}_+^{\otimes N}\ket{H}=\frac{1}{\sqrt{2}}\ket{GHZ}+\frac{1}{\sqrt{2}}\ket{\phi_{\mathrm{odd}}^\mathbf{k}},
\end{equation} 
where $\ket{\phi_{\mathrm{odd}}^\mathbf{k}}$ is a normalized quantum state depending on $\mathbf{k}$ which features odd weight contributions only in the computational basis.
\end{replemma}

\begin{replemma}{lemma:Ycoeff_general}
Let $r\geq 3$ and $N=l2^r$. Then the $k=(2^{r-1}+1)$-uniform complete $N$-qubit hypergraph state is stabilized by $Y^{\otimes N}$ and gets mapped to
\begin{equation}
\ket{\tilde{H}^k_N}=\frac{1}{\sqrt{2}}\ket{GHZ}+\sum_{w(x)\, \mathrm{odd}}i^{w-1}c_{w(x)}\ket{x}
\end{equation}
by $\sqrt{Y}_+^{\otimes N}$, where the odd coefficients are given by
\begin{equation}
c_w=\frac{2}{2^{r}}\sum_{j=1,\,\mathrm{odd}}^{2^r-1}\frac{\cos(\frac{\pi j}{2^r})^{N-w}\sin(\frac{\pi j}{2^r})^{w}}{\sin(\frac{2\pi j}{2^{r}})}.
\end{equation}
\end{replemma}

\begin{proof}
First, we focus on the proof of Lemma \ref{lemma:Ycoeff_general}, i.e. $k=(2^{r-1}+1)$ and $N=l2^{r}$. With some minor alterations, these calculations can be adapted to also prove Lemma \ref{thm:Ygeneral}.
The weights in $\ket{H^k_N}$ with a negative sign are the odd weights congruent to $2^{r-1}+1,\dots,2^r-1$ modulo $2^r$. Assuming $r\geq 3$, we also have $\left(\frac{1+i}{\sqrt{2}}\right)^N=1$.
When $|e|=0$ or $|e|=N$, the expression \eqref{eq:YCoefficientsGeomeasure} reduces to
\begin{align*}
c_{|e|}&=\frac{1}{2^N}\left(\sum_{ w\,\mathrm{even}} (-1)^{f( w)+ w}\binom{N}{ w}\pm\sum_{ w\,\mathrm{odd}}(-1)^{f( w)+ w}\right)\\
&=\frac{1}{2^N}\left(\sum_{ w\, \mathrm{even}}\binom{N}{ w}\pm\left(\sum_{f( w)=0,\, \mathrm{odd}}\binom{N}{ w}-\sum_{f( w)=1,\,\mathrm{odd}}\binom{N}{ w}\right)\right).
\end{align*}
where the $+$ appears when $|e|=0$, and the $-$ for $|e|=N$. Let us fix an odd $0\leq  w\leq N$. According to the palindrome conditions in Lemma~\ref{lemma:localStab} we have $f( w)\equiv f(N- w)+1\mod 2$ as well as $\binom{N}{ w}=\binom{N}{N- w}$. Hence the odd contributions cancel each other, which leaves us with
\begin{equation}
c_0=c_N=\frac{1}{2}
\end{equation}
We calculate the other coefficients. Similar to the proof of Lemma \ref{thm:Ygeneral}, the formula \eqref{eq:Ycoefficients_simplified} can be rearranged to
\begin{align}
\label{eq:theothercase} c_{|e|}&=\frac{1}{2^{r}}\sum_{j=0}^{2^r-1}\cos^{N-|e|}\left(\frac{\pi j}{2^{r}}\right)\sin^{|e|}\left(\frac{\pi j}{2^{r}}\right)\sum_{q=0}^{2^{r}-1}(-1)^{f(q)+q}(-1)^{lj}\Re\left(e^{-2i q j\pi/2^{r}}(-i)^{|e|}\right)
\end{align}
For $j=0$, we have clearly no contribution since $|e|>0$. Let us therefore assume that $j\neq 0$. It is easily seen that in this case the contributions where $q$ is even sum up to zero:
\begin{align}
\sum_{q\,\mathrm{even}} (-1)^{f(q)}e^{-2i\pi j q/2^r}=\sum_{q'=0}^{2^{r-1}}e^{-4i\pi j q'/2^r}=0.
\end{align}
Now we employ Eq.~\eqref{eq:aux3} in the $\sigma=0$ case, to conclude that we still have $\sum_{q=1,\, \mathrm{odd}}^{2^r-1} (-1)^{f(q)+q}e^{-2i\pi j q/2^r}=0$, when $j$ is even and
\begin{align}
\label{eq:realpartzero}\Re\left(\sum_{q=1\, \mathrm{odd}}^{2^r-1} (-1)^{f(q)+q}e^{-2i\pi j q/2^r}\right)=-\Re\left((-i)^{|e|}\frac{-2i}{\sin(\frac{2\pi j}{2^r})}\right)
= 2\Re\left(\frac{(-i)^{|e|-1}}{\sin(\frac{2\pi j}{2^r})}\right),
\end{align}
when $j$ is odd. Clearly this expression equals zero whenever $|e|$ is even. If $|e|$ is odd on the other hand, combining everything we get
\begin{align}
c_{|e|}=\frac{2}{2^{r}}(-1)^{l}(-i)^{|e|-1}\sum_{j=1,\, \mathrm{odd}}^{2^r-1}\frac{\cos^{N-|e|}\left(\frac{\pi j}{2^{r}}\right)\sin^{|e|}\left(\frac{\pi j}{2^{r}}\right)}{\sin\left(\frac{2\pi j}{2^r}\right)}.
\end{align}
We can apply local $\sqrt{Z}^{\otimes N}_\pm$ operators, in order to get rid of the alternating phases on the odd weight contributions, at the price of picking up an imaginary unit. Neglecting global phases, we can then transform the state as claimed, proving Lemma \ref{lemma:Ycoeff_general}.

Lastly, we want to transfer the qualitative result that all even weights apart from $c_0=c_N=\frac{1}{2}$ are zero, to the more general assumptions in Lemma \ref{thm:Ygeneral}. First of all, according to Lemma \ref{lemma:localStab} and the fact that $N\equiv 0\mod 4$ by our assumption, we have that $\mathbf{k}=(k_1,\dots,k_n$) must satisfy  
\begin{equation}
\sum_i \binom{ w}{k_i}\equiv\sum_i \binom{N- w}{k_i}+ w\mod 2\textup{ for all }0\leq w\leq N.
\end{equation}
By assumption, the terms where the binomials $\binom{N}{w}$ are congruent $1$ modulo $2$ are only distributed only among the odd weights, and the contributions of weights $ w$ and $N- w$ must have opposite signs if $ w$ is odd. Hence we can conclude right away that $c_{|e|}=\frac{1}{2}$ when $|e|\in\{0,N\}$. In all the other cases where $|e|$ is even, we can assume without loss of generality, that the negative weights only occur when $ w\mod 2^r \in\{2^{r-1}+1,\dots,2^{r-1}+3,\dots,2^r-1\}$. If necessary, we can exchange $ w$ with $N- w$, since with $m'=m-|e|$
\begin{equation}
\sum_m^{|e|}\binom{N-|e|}{N- w-m}\binom{|e|}{m}(-1)^{m}=\sum_{m'}^{|e|}\binom{N-|e|}{ w-m'}\binom{|e|}{m'}(-1)^{m'+|e|}.
\end{equation}
Let therefore $|e|$ be even and fixed. As we just argued,
\begin{equation}
c_{|e|}=\frac{2(1+i)^N}{(2\sqrt{2})^N}\sum_{f( w)=1}\sum_{m=0}^{|e|}(-1)^m\binom{N-|e|}{ w-m}\binom{|e|}{m}
\end{equation}
is therefore actually equal to
\begin{equation}
c_{|e|}=\frac{2(1+i)^N}{(2\sqrt{2})^N}\sum_{\stackrel{ w\equiv 2^{r-1}+1}{\mod 2^r}}^{2^r-1}\sum_{m=0}^{|e|}(-1)^m\binom{N-|e|}{ w-m}\binom{|e|}{m},
\end{equation}
and we are in a similar situation as before. When $N\equiv 0\mod 2^r$, the computation works out in exactly the same manner as above. 
\end{proof}

\subsection{Analytical results on Geometric entanglement measure}\label{sec:GeoMeasure}
\begin{lemma}\label{lemma:cyclic_max}
For $r\geq 1$ and even $N$, the function 
\begin{equation}
f(x)=\sum_{j=0}^{2^r-1}\cos^N\left(x+\frac{\pi j}{2^r}\right)
\end{equation}
attains its maximum at $x\in \frac{\pi}{2^r}\Z$
\end{lemma}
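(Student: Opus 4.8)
The plan is to compute the Fourier series of $f$ explicitly; once this is done the statement is immediate, because all nonconstant Fourier coefficients turn out to be nonnegative cosine coefficients whose frequencies are multiples of $2^{r+1}$.

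First I would expand $\cos^N\theta=\big(\tfrac{e^{i\theta}+e^{-i\theta}}{2}\big)^N=\tfrac{1}{2^N}\sum_{k=0}^N\binom{N}{k}e^{i(N-2k)\theta}$; since $N$ is even, every exponent $N-2k$ is even. Substituting $\theta=x+\tfrac{\pi j}{2^r}$ and interchanging the two finite sums gives
\[
f(x)=\frac{1}{2^N}\sum_{k=0}^N\binom{N}{k}e^{i(N-2k)x}\sum_{j=0}^{2^r-1}\zeta_k^{\,j},\qquad \zeta_k:=e^{i\pi(N-2k)/2^r}.
\]
The decisive observation is that $\zeta_k^{2^r}=e^{i\pi(N-2k)}=1$, because $N-2k$ is even; hence the inner geometric sum equals $2^r$ when $\zeta_k=1$, i.e.\ when $2^{r+1}\mid N-2k$, and equals $\tfrac{\zeta_k^{2^r}-1}{\zeta_k-1}=0$ otherwise.

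Only the indices with $N-2k=2^{r+1}m$, $m\in\Z$, then survive. Pairing $m$ with $-m$ (equivalently $k$ with $N-k$) and using $\binom{N}{k}=\binom{N}{N-k}$ together with $\cos(2^{r+1}mx)=\cos(-2^{r+1}mx)$, I obtain the manifestly real expansion
\[
f(x)=\frac{2^r}{2^N}\left(\binom{N}{N/2}+2\sum_{m=1}^{\lfloor N/2^{r+1}\rfloor}\binom{N}{\tfrac{N-2^{r+1}m}{2}}\cos\!\big(2^{r+1}m\,x\big)\right).
\]
All coefficients appearing here are nonnegative, so $f(x)\le f(0)$ for all $x$; and for every $\ell\in\Z$ one has $\cos\!\big(2^{r+1}m\cdot\tfrac{\pi\ell}{2^r}\big)=\cos(2\pi m\ell)=1$, whence $f\!\big(\tfrac{\pi\ell}{2^r}\big)=f(0)=\max f$. (If $N<2^{r+1}$ the sum over $m$ is empty and $f$ is constant, so the claim is trivial; if $N\ge 2^{r+1}$ the $m=1$ coefficient is strictly positive, so $\tfrac{\pi}{2^r}\Z$ is in fact the exact set of maximizers.)

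The only genuine work is the geometric-sum evaluation and the ensuing bookkeeping of frequencies and binomial indices. I expect the main subtlety to be keeping straight which values of $k$ contribute --- the condition $2^{r+1}\mid N-2k$, whose derivation crucially uses that $N$ is even --- and then symmetrizing the $\pm m$ terms so that the surviving coefficients are visibly real and nonnegative; beyond this there is no analytic difficulty.
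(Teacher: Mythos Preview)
Your proof is correct and follows essentially the same approach as the paper: both compute the Fourier expansion of $f$ and observe that only frequencies in $2^{r+1}\Z$ survive with nonnegative coefficients, whence the maximum is attained on $\frac{\pi}{2^r}\Z$. Your route via the binomial expansion in complex exponentials together with a single geometric-sum evaluation is slightly more self-contained than the paper's, which cites a classical cosine-power identity and then argues inductively on the $2$-adic valuation of the frequency to reach the same conclusion.
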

\begin{proof}
Due to the symmetry, it suffices to check that $f$ attains its maximum on $[-\pi/2^r,\pi/2^r]$ at $x=0$.
Here we can exploit the following, using that $N$ is even, see~\cite{Whitaker} p. 263:
\begin{align*}
\cos^N(x)=\frac{1}{2^{N-1}}\frac{N!}{((N/2)!)^2}\left(\frac{1}{2}+\frac{N}{N+2}\cos(2x)+\frac{N(N-2)}{(N+2)(N+4)}\cos(4x)+\dots\right).
\end{align*} 
For odd $l$, we have that
\begin{align}\label{eq:aux11}
\sum_{j=0}^{2^r-1}\cos\left(2l(x+\frac{\pi j}{2^r})\right)=\sum_{j=0}^{2^{r-1}-1}\cos\left(2l(x+\frac{\pi j}{2^r})\right)+\cos\left(2l(x+\frac{\pi (j+2^{r-1})}{2^r})\right)=0,
\end{align}
and in the case where $l$ is even, we can write
\begin{equation*}
\sum_{j=0}^{2^r-1}\cos\left(2l(x+\frac{\pi j}{2^r})\right)=\sum_{j=0}^{2^r-1}\cos\left(l(2x+\frac{\pi j}{2^{r-1}})\right).
\end{equation*}
If $l/2$ is still even, we apply the same reduction again, otherwise this term vanishes, according to \eqref{eq:aux11}.
Inductively, we can show that for any $l$ not divisible by $2^{r}$, the sum over all the shifted contrubutions vanishes. For $l\equiv 0\mod 2^r$ on the other hand, the reduction terminates when we are only left with one summand. In this case, we get $\cos(2l(x+\frac{\pi j}{2^r}))=\cos(2lx)$ for all $j$ and therefore we can write
$f(\theta)=\sum_k\alpha_k\cos(2^{r+1}k\theta)$,
for some suitable, nonnegative coefficients $\alpha_k$. This indeed shows, that $f$ attains its maximum at $\theta=0$ and by symmetry also at every $\theta\in\frac{\pi}{2^r}$.
\end{proof}

\subsubsection{Geometric entanglement in the $5$-uniform case}\label{proof:5-uniform_GEM}
With some tricks, the strategy for proving the $3$-uniform case can be applied to the $5$-uniform $X^{\otimes N}$-stabilized case. Recall that the odd coefficients are given by
\begin{equation}
c_w=\frac{1}{4}\sum_{j=1,\, \mathrm{odd}}^{7}i^{j-1}\frac{\cos\left(\frac{\pi j}{8}\right)^{N-w}\sin\left(\frac{\pi j}{8}\right)^{w}}{\cos\left(\frac{\pi j}{4}\right)},
\end{equation}
which is positive for all odd $j$ and $w$. Without loss of generality, we assume that the the even weight coefficients have positive signs as well. We compute
\begin{align*}
f_N(\theta):=&\braket{\tilde{H}^5_N}{\psi(\theta)}^{\otimes N}\\
=&\frac{1}{2}\cos^N\left(\theta+\frac{\pi}{4}\right)+\frac{1}{2}\cos^N\left(\theta+\frac{3\pi}{4}\right)\\
+&\frac{1}{4}\sum_{j=1,\, \mathrm{odd}}^7\frac{i^{j-1}}{\cos(\frac{\pi j}{4})}\sum_{k=1\, \mathrm{odd}}^N\binom{N}{k}\cos^{N-k}(\theta)\cos^{N-k}(\frac{\pi j}{8})\sin^k(\theta)\sin^k(\frac{\pi j}{8}).
\end{align*}
This can be simplified to
\begin{align*}
f_N(\theta)=&\frac{1}{2}\sum_{j=2,6}\cos^N\left(\theta+\frac{\pi j}{8}\right)+\frac{1}{8}\sum_{j=1,\, \mathrm{odd}}^7\frac{i^{j-1}}{\cos(\frac{\pi j}{4})}\left(\cos^N\left(\theta-\frac{\pi j}{8}\right)-\cos^N\left(\theta+\frac{\pi j}{8}\right)\right)\\
=&\frac{1}{2}\sum_{j=2,6}\cos^N\left(\theta+\frac{\pi j}{8}\right)+\frac{1}{2\sqrt{2}}\left(\sum_{j=5,7}\cos^N\left(\theta+\frac{\pi j}{8}\right)-\sum_{j=1,3}\cos^N\left(\theta+\frac{\pi j}{8}\right)\right)\\
=&\left(\frac{1}{2}-\frac{1}{2\sqrt{2}}\right)\sum_{j=2,6}\cos^N\left(\theta+\frac{\pi j}{8}\right)+\frac{1}{2\sqrt{2}}\sum_{j=0}^7\cos^N\left(\theta+\frac{\pi j}{8}\right)\\
&\hspace{1cm}-\frac{1}{2\sqrt{2}}\sum_{j=0,4} \cos^N\left(\theta+\frac{\pi j}{8}\right)-\frac{2}{2\sqrt{2}}\sum_{j=1,3}\cos^N\left(\theta+\frac{\pi j}{8}\right).
\end{align*}
Again the sum running over all $j=0,\dots 7$ attains its maxima at $\frac{\pi}{8}\Z$, in particular at $\theta=\frac{\pi}{4}$. Also, the first sum is maximal at $\theta=\frac{\pi}{4}$, whereas the negative contributions are minimal at $\theta=\frac{\pi}{4}$. This is obvious for the first one $(j=0,4)$. For the second one, we can quickly check this analytically:
\begin{equation}
g(x):=\cos^N\left(x+\frac{\pi}{8}\right)+\cos^N\left(x+\frac{3\pi}{8}\right).
\end{equation}
We now look at the roots of the first derivative:
\begin{align*}
0\stackrel{!}{=}\del_x g(x)=-N\left(\cos^{N-1}\left(x+\frac{\pi}{8}\right)\sin\left(x+\frac{\pi}{8}\right)+\cos^{N-1}\left(x+\frac{3\pi}{8}\right)\sin\left(x+\frac{3\pi}{8}\right)\right)
\end{align*}
After dividing through $\cos^N(x+\frac{\pi}{8})$ and substituting $y:=\tan(x+\frac{\pi}{8})$, this is equivalent to demanding that
\begin{equation}\label{eq:equivalent_condition}
L(y):=\frac{-2y}{1-y^2}=\left(\frac{1-y}{\sqrt{2}}\right)^{N-2}=:R_N(y).
\end{equation}
For $x\in[-3\pi/8,-\pi/8]$ we have $  g(x)\geq\frac{2}{\sqrt{2}^N}\geq g(\pi/4)$ therefore, we can exclude solutions of Eq.~\eqref{eq:equivalent_condition} with $-1<y\leq 0$ from our search of the global minimum. When $y<-1$ or $0\leq y<1$, $L(y)$ and $R_N(y)$ have opposite signs (recall that $N$ is even), so we find no solutions in those intervals. For $y>1$, $L(y)$ is strictly decreasing and $R_N(y)$ strictly increasing. Hence there is at most one solution in this interval. This solution is found at $y=\sqrt{2}+1\Leftrightarrow x=\frac{\pi}{4}$ and easily verified to correspond to a local minimum, which therefore also must be the global minimum of $g$.
Alltogether, we can conclude that $f_N$ attains its maximum at $\theta=\frac{\pi}{4}$, which is then equal to
\begin{equation}
\max_\theta f_N(\theta)=\frac{1}{2}+\frac{1}{\sqrt{2}}\left(\cos^N\left(\frac{\pi}{8}\right)-\sin^N\left(\frac{\pi}{8}\right)\right).
\end{equation}

For $Y^{\otimes N}$-stabilized $5$-uniform states, again we can no longer assume all coefficients to be real, therfore, we need to introduce a complex phase. However, we can still assume that the closest state is of the form
\begin{equation}
\ket{\psi(\theta,\phi)}^{\otimes N}=(\sin(\theta)\ket{0}+e^{i\phi}\cos(\theta)\ket{1})^{\otimes N}
\end{equation}
for some $\theta\in[0,\frac{\pi}{2}]$.
We compute the overlap of $\ket{\tilde{H}^5_N}=\sqrt{Z}_+^{\otimes N}\sqrt{Y}_+^{\otimes N}\ket{H^5_N}$ with such a general symmetric state:
\begin{align*}
\braket{\tilde{H}^5_N}{\psi(\theta,\phi)}^{\otimes N}&=\frac{i}{2}\left(\sin^N(\theta)+e^{i\phi N}\cos^N(\theta)\right)\\
&+\frac{1}{4}\sum_{j=1,\, \mathrm{odd}}^7\sum_{k=1\, \mathrm{odd}}^N\binom{N}{k}\frac{\sin^k(\frac{\pi j}{8})\cos^{N-k}(\frac{\pi j}{8})}{\sin(\frac{\pi j}{4})}e^{i\phi(N-k)}\cos^{N-k}(\theta)\sin^k(\theta)
\end{align*}
The expression 
\begin{equation*}
\frac{\sin^k(\frac{\pi j}{8})\cos^{N-k}(\frac{\pi j}{8})}{\sin(\frac{\pi j}{4})}\cos^{N-k}(\theta)\sin^k(\theta)
\end{equation*}
is nonnegative for all odd $j\in\{1,\dots,2^r-1\}$, odd $k$ and $\theta\in[0,\frac{\pi}{2}]$, therfore we can estimate the modulus by
\begin{align*}
|\braket{\tilde{H}^5_N}{\psi(\theta,\phi)}^{\otimes N}|\leq&  \frac{1}{2}\sin^N(\theta)+\frac{1}{2}\cos^N(\theta) \\
+&\frac{1}{4}\sum_{j=1,\, \mathrm{odd}}^7\sum_{k=1\, \mathrm{odd}}^N\binom{N}{k}\frac{\sin^k(\frac{\pi j}{8})\cos^{N-k}(\frac{\pi j}{8})}{\sin(\frac{\pi j}{4})}\cos^{N-k}(\theta)\sin^k(\theta)\\
=&\frac{1}{2}\sum_{j=0,4}\cos^N\left(\theta+\frac{\pi j}{8}\right)\\
+&\frac{1}{2\sqrt{2}}\left(\sum_{j=5,7}\cos^N\left(\theta+\frac{\pi j}{8}\right)-\sum_{j=1,3}\cos^N\left(\theta+\frac{\pi j}{8}\right)\right).
\end{align*}
This does not get maximal neither at $\theta=\frac{\pi}{4}$ nor at $\theta=0$, which makes it hard to analytically determine its maximum explicitly. Therfore we again rely on estimates. For $N\geq 16$, it is not difficult to check that having $|\braket{\tilde{H}^5_N}{\psi(\theta,\phi)}|\geq\frac{1}{2}$ requires $\theta\in[-\frac{\pi}{8},\frac{\pi}{8}]+\frac{\pi}{2}\Z$. Without loss of generality assume that $\theta\in[-\frac{\pi}{8},\frac{\pi}{8}]$, which allows us to estimate
\begin{equation*}
\cos^N\Big(\theta+\frac{5\pi}{8}\Big)-\cos^N\left(\theta+\frac{\pi}{8}\right)\leq 0\leq \cos^N\left(\theta+\frac{\pi}{8}\right)-\cos^N\Big(\theta+\frac{5\pi}{8}\Big).
\end{equation*}
Thereby we can relate the maximization to the $X^{\otimes N}$-stabilized case:
\begin{align*}
\max_{\phi,\theta}|\braket{\tilde{H}^5_N}{\psi(\theta,\phi)}^{\otimes N}|&\leq\max_{\theta} \frac{1}{2}\sum_{j=0,4}\cos^N\left(\theta+\frac{\pi j}{8}\right)\\
&+\frac{1}{2\sqrt{2}}\left(\sum_{j=1,7}\cos^N\left(\theta+\frac{\pi j}{8}\right)-\sum_{j=3,5}\cos^N\left(\theta+\frac{\pi j}{8}\right)\right)\\
&=\max_{\theta}f_N\left(\theta+\frac{\pi}{4}\right)=\frac{1}{2}+\frac{1}{\sqrt{2}}\left(\cos^N\left(\frac{\pi}{8}\right)-\sin^N\left(\frac{\pi}{8}\right)\right),
\end{align*}
which proves the lower bound in \eqref{eq:5_unif_GEM_Y}. The upper bound follows trivially from the fact that $|\braket{\tilde{H}^5_N}{0}^{\otimes N}|=\frac{1}{2}$.
\newpage

\section{Appendix on nonlocality}\label{sec:nonlocality}
This part of the Appendix is dedicated to provide detailed calculations and reasoning for the statements of the main text concerning nonlocality and robustness, in this order.
\subsection{Computation for nonlocality of the X-stabilized case}\label{sec:Xviolation}
Let $k=2^{r-1}+1$ and $N\equiv 2^{r-1}\mod 2^r$, so that $\ket{H^k_N}$ is $X^{\otimes N}$-stabilized. Consider the Bell operator
\begin{align*}
\mathcal{B}^Y_N&=\frac{1}{2}(Y+iZ)^{\otimes N}+\frac{1}{2}(Y-iZ)^{\otimes N}\\
&=\sum_{m\textup{ even}}i^m\underbrace{Y_1\dots Y_m Z_{m+1}\dots Z_N}_{=:A(m)} +\mbox{perm}.
\end{align*}
Because $\mathcal{B}^Y_N$ is invariant under conjugation with $\sqrt{X}_+^{\otimes N}$, we directly see that
\begin{align*}
\bra{H^k_N}\mathcal{B}^Y_N\ket{H^k_N}=\bra{\tilde{H}^k_N}\mathcal{B}^Y_N\ket{\tilde{H}^k_N}=\frac{1}{2}\bra{GHZ_X}\mathcal{B}^Y_N\ket{GHZ_X}+\bra{\phi_{\mathrm{odd}}}\mathcal{B}^Y_N\ket{\phi_{\mathrm{odd}}}.
\end{align*}
Note that the cross-terms cancel, as the summands of $\mathcal{B}^Y_N$ contain only even numbers of bit-flips. Let us first consider the GHZ part. 
First, we observe that because $N\equiv 0\mod 4$, we can rewrite $\mathcal{B}^Y_N$ in the following way:
\begin{equation}
    \mathcal{B}^Y_N=\frac{1}{2}Z^{\otimes N}\left((\id+X)^{\otimes N}+(\id-X)^{\otimes N}\right)=\frac{1}{2}Z^{\otimes N}\tilde{\mathcal{B}}^X_N.
\end{equation}
Since $\ket{GHZ_X}$ is a $+1$-eigenstate for $Z^{\otimes N}$ and all summands of $\tilde{\mathcal{B}}^X_N$ - recall that these always contain an even number of Pauli-$X$ operators and identities otherwise - we clearly get
\begin{equation}
\frac{1}{2}\bra{GHZ_X}\mathcal{B}^Y_N\ket{GHZ_X}=\frac{1}{2}\bra{GHZ_X}Z^{\otimes N}\tilde{\mathcal{B}}^X_N\ket{GHZ_X}= 2^{N-2}.
\end{equation}

Let us consider the odd contributions next.
Lastly, we compute the contribution of the odd weight terms for $0\leq m\leq N$. For a single correlator $A(m)$, the contribution on the odd terms is given by
\begin{align}
\nonumber\frac{1}{2}\bra{\phi_{\mathrm{odd}}}A(m)\ket{\phi_{\mathrm{odd}}}=&\frac{1}{2}\bra{\phi_{\mathrm{odd}}}Y_1\dots Y_m Z_{m+1}\dots Z_N\ket{\phi_{\mathrm{odd}}}\\
\label{eq:aux12}=&\frac{1}{2}\sum_{w,w'\,\mathrm{odd}}\sum_{|I|=w}\sum_{|J|=w'}c_w c_{w'}\bra{I}Y_1\dots Y_m Z_{m+1}\dots Z_N\ket{J}
\end{align}
The only terms which contribute to this summation are the ones where $(j_1,\dots,j_N)=(\bar{i_1},\dots,\bar{i_m},i_{m+1},\dots,i_N)$. 
Also, $Y$ and $Z$ both introduce a sign, when acting on $\ket{1}$, so we pick up a total phase of $-i^m$ in the process. Further, we substitute $j:=i_1+\dots +i_m$, that is if $I$ has weight $w$, the corresponding $J$ has weight $w'=|J|=w+m-2j$. Alltogether we count $\binom{m}{j}\binom{N-w}{m-j}$ possibilities to distribute a weight of $j$ among the first $m$ entries, given that the total weight is $w$. Exploiting the symmetry, we can then reduce Eq.~\eqref{eq:aux12} to
\begin{align}
\label{eq:generalform}=-i^m&\sum_{w,\,\mathrm{odd}}\sum_{j=0}^m c_wc_{w+m-2j}\binom{m}{j}\binom{N-m}{w-j}.
\end{align}
In order to reduce cumbersome notation, let $a(l):=\cos(\frac{\pi l}{2^r})$ and $b(l):=\sin(\frac{\pi l}{2^r})$. We proceed calculating the product of the coefficients, using the general formulae derived for the odd coefficients in Lemma~\ref{lemma:Xcoeff_general}.
\begin{align}\label{eq:product}
c_wc_{w+m-2j}=\frac{4}{4^r}\sum_{\stackrel{l,l'=1}{\mathrm{odd}}}^{2^r-1}\frac{-i^{l+l'}}{a(2l)a(2l')}a(l)^{N-w}a(l')^{N-w-m+2j}b(l)^{w}b(l')^{w+m-2j}\qquad
\end{align}
For now, let us fix $l$ and $l'$, and compute the sum over $j$ and $w$:
\begin{align*}
&\sum_{j=0}^m\sum_{\stackrel{w'\equiv j+1}{\mod 2}}\binom{m}{j}\binom{N-m}{w'}a(l)^{N-w'-j}a(l')^{N-w'-m+j}b(l)^{w'+j}b(l')^{w'+m-j}\\
=&\sum_{j=0}^m\binom{m}{j}a(l)^{N-j}a(l')^{N-m+j}b(l)^{j}b(l')^{m-j}\times\\
&\hspace{.2\linewidth}\frac{1}{2}\left(\left(1+\frac{b(l)b(l')}{a(l)a(l')}\right)^{N-m}-(-1)^j\left(1-\frac{b(l)b(l')}{a(l)a(l')}\right)^{N-m}\right)\\
=&\frac{1}{2}\sum_{\sigma\in\{0,1\}} (-1)^\sigma a(l)^Na(l')^{N-m}b(l')^m\left(1+(-1)^\sigma\frac{b(l)b(l')}{a(l)a(l')}\right)^{N-m}\left(1+(-1)^\sigma\frac{a(l')b(l)}{a(l)b(l')}\right)^m\\
=&\frac{1}{2}\sum_{\sigma\in\{0,1\}} (-1)^\sigma \left(a(l)a(l')+(-1)^\sigma b(l)b(l')\right)^{N-m}\left(a(l)b(l')+(-1)^\sigma a(l')b(l)\right)^m\\
=&\frac{1}{2}\sum_{\sigma\in\{0,1\}} (-1)^\sigma\cos^{N-m}\left((l-(-1)^\sigma l')\frac{\pi}{2^r}\right)\sin^{m}\left((l+(-1)^\sigma l')\frac{\pi}{2^r}\right).
\end{align*}
By replacing $l'$ with $(2^r-l')$ in the $\sigma=1$-term, we see that the summation over $l,l'$ then gives the same contribution for $\sigma=0$ and $\sigma=1$.
Adding up the resulting contributions for the different $m$, we arrive at:
\begin{align*}\
\frac{1}{2}\bra{\phi_{\mathrm{odd}}}\mathcal{B}^Y_N\ket{\phi_{\mathrm{odd}}}=&-\sum_{\stackrel{m=0}{\mathrm{even}}}^N\binom{N}{m}i^m\sum_{\stackrel{w=1}{\mathrm{odd}}}^{N-1}i^m c_wc_{w+m-2j}\binom{m}{j}\binom{N-m}{w-j}\\
=&\frac{4}{4^r}\sum_{\stackrel{m=0}{\mathrm{even}}}^{N}\binom{N}{m}\sum_{\stackrel{l,l'=1,}{\mathrm{odd}}}^{2^r-1}i^{l+l'}\frac{\sin^m\left((l+l')\frac{\pi}{2^r}\right)\cos^{N-m}\left((l-l')\frac{\pi}{2^r}\right)}{\cos(\frac{2\pi l}{2^r})\cos(\frac{2\pi l'}{2^r})}\\
=&\frac{2}{4^r}\sum_{\stackrel{l,l'=1}{\mathrm{odd}}}^{2^r-1}\frac{i^{l+l'}}{\cos\left(\frac{2\pi l}{2^r}\right)\cos\left(\frac{2\pi l'}{2^r}\right)}\Bigg[\left(\cos\left(\frac{(l-l')\pi}{2^r}\right)+\sin\left(\frac{(l+l')\pi}{2^r}\right)\right)^N\\
&\hspace{0.28\linewidth}+\left(\cos\left(\frac{(l-l')\pi}{2^r}\right)-\sin\left(\frac{(l+l')\pi}{2^r}\right)\right)^N\Bigg].
\end{align*}
In the second summand, we exchange $l\leftrightarrow 2^r-l$, $l'\leftrightarrow 2^r-l'$ and thereby get rid of the relative minus sign. Thus we get
\begin{align*}
=&\frac{4}{4^r}\sum_{\stackrel{l,l'=1}{\mathrm{odd}}}^{2^r-1}\frac{i^{l+l'}}{\cos(\frac{2\pi l}{2^r})\cos(\frac{2\pi l'}{2^r})}\left(\cos(\frac{(l-l')\pi}{2^r})+\sin(\frac{(l+l')\pi}{2^r})\right)^N\\
=&\frac{4}{4^r}\sum_{\stackrel{l,l'=1}{\mathrm{odd}}}^{2^r-1}i^{l+l'}\frac{\left[\cos(\frac{l\pi}{2^r})\cos(\frac{l'\pi}{2^r})+\cos(\frac{l\pi}{2^r})\sin(\frac{l'\pi}{2^r})+\sin(\frac{l\pi}{2^r})\cos(\frac{l'\pi}{2^r})+\sin(\frac{l\pi}{2^r})\sin(\frac{l'\pi}{2^r})\right]^N}{\cos(\frac{2\pi l}{2^r})\cos(\frac{2\pi l'}{2^r})}\\
=&-\frac{4}{4^r}\sum_{l=1,\,\mathrm{ odd}}^{2^r-1} i^{l}\frac{\left(\cos(\frac{l\pi}{2^r})+\sin(\frac{l\pi}{2^r})\right)^N}{\cos(\frac{2\pi l}{2^r})}\sum_{l'=1,\,\mathrm{odd}}^{2^r-1}i^{-l'}\frac{\left(\cos(\frac{l'\pi}{2^r})+\sin(\frac{l'\pi}{2^r})\right)^N}{\cos(\frac{2\pi l'}{2^r})}\\
=&-\frac{4}{4^r}\left|\sum_{l=1,\,\mathrm{odd}}^{2^r-1}i^l\frac{\left(\cos(\frac{l\pi}{2^r})+\sin(\frac{l\pi}{2^r})\right)^N}{\cos(\frac{2l\pi}{2^r})}\right|^2,
\end{align*}
which finishes our derivation.

\subsection{Violation of the separability inequality in the $3$-uniform complete case in the presence of particle loss}\label{sec:particleloss}
The Bell operator for the $3$-uniform complete hypergraph state after losing $k$ particles:
\begin{align}\label{eq:MerminOperatorOur0}
\begin{split}
    \mathcal{M}^0_{N-k}&=  X_{1}X_{2}X_{3}\dots X_{N-k} \\
                 & -  Z_{1}Z_{2}X_{3}X_{4}X_{5}\dots X_{N-k} -  \mbox{ perm.} \\
                 & +  Z_{1}Z_{2}Z_{3}Z_{4}X_{5}X_{6}X_7\dots X_{N-k} + \mbox{ perm.}\\
                & - \dots,
\end{split}\\
\label{eq:MerminOperatorOur1}
\begin{split}
    \mathcal{M}^1_{N-k} &=  Z_{1}X_{2}X_{3}\dots X_{N-k} +   \mbox{ perm.} \\
                 & -  Z_{1}Z_{2}Z_{3}X_{4}X_{5}\dots X_{N-k} -  \mbox{ perm.}\\
                 & +  Z_{1}Z_{2}Z_{3}Z_{4}Z_{5}X_{6}X_7\dots X_{N-k} + \mbox{ perm.}\\
                & - \dots,
\end{split}
\end{align}
Additional to the already introduced notation $\ket{GHZ_X}=\frac{1}{\sqrt{2}}(\ket{+}^{\otimes N}+\ket{-}^{\otimes N})$ we write $\ket{GHZ_X^\pm}=\frac{1}{\sqrt{2}}(\ket{+}^{\otimes N}\pm\ket{-}^{\otimes N})$ and similarly for the GHZ state with respect to the Pauli-$Y$ eigenbasis.
\subsubsection{Proof of Theorem \ref{thm:robustness}}\label{prf:thm_robustness}
\begin{reptheorem}{thm:robustness}
After loosing $k$ particles of $N$-qubit $3$-uniform complete hypergraph state, we can derive the following violations of separability inequalities:\vspace{8pt}\\
    \begin{tabular}{c|c|c}
    \textup{Constraints on} $N$ \textup{and} $k$ & \textup{Bell inequality} $\mathcal{B}_{N\backslash k}$ & \textup{Quantum value}
    $\mean{\mathcal{B}_{N\backslash k}}_{\ket{H^3_{N}}}$\\
    \hline
    $N\equiv 2\mod 4$, $k$ odd & $\mathcal{M}^0_{N-k}\otimes\id^{\otimes k}$ & $\sqrt{2}^{N-2k}$\\
    $N\equiv 0\mod 4$, $k$ even &$\mathcal{M}^1_{N-k}\otimes\id^{\otimes k}$ & $\sqrt{2}^{N-2k}$\\
    $N-k\equiv 2\mod 4$ & $\mathcal{M}^1_{N-k}\otimes\id^{\otimes k}$ & $\left|\sin\left(\frac{\pi k}{4}\right)\right|\sqrt{2}^{N-2k}$\\
    $N-k\equiv 0\mod 4$ & $\mathcal{M}^1_{N-k}\otimes\id^{\otimes k}$ & $\left|\cos\left(\frac{\pi k}{4}\right)\right|\sqrt{2}^{N-2k}$
\end{tabular}\vspace{5pt}\\
In the first two cases, the quantum value is $1/2$ and $0$, if instead we consider even and odd $k$, respectively.
\end{reptheorem}
\begin{proof}
\underline{Case 1: $N\equiv 2\mod 4$}\\
We transform the operator $\mathcal{B}_{N\backslash k}=\mathcal{M}^0_{N-k}\otimes\id^{\otimes k}$ as follows
\begin{align}\label{eq:ConjugatedTraceOutMerminOperatorOur}
\begin{split}
   \tilde{\mathcal{B}}_{N\backslash k}:&= \sqrt{X}_+^{\otimes n}\mathcal{B}_{N\backslash k}\sqrt{X}_+^{\otimes N}\\
    &=\id_{1}\dots\id_{N-k}X_{N-k+1}\dots X_N \\
                 &\quad + Z_{1} Z_{2} \id_{3}\id_{4}\dots \id_{N-k}X_{N-k+1}\dots X_N +   \mbox{ perm.} (1\dots (N-k)) \\
                 &\quad + \dots.\\
                &=\sum_{m\, \mathrm{odd}}Z^{\otimes m}\otimes\id^{\otimes N-k-m}\otimes X^{\otimes k}+\mbox{ perm.}(1\dots (N-k))
\end{split}
\end{align}
Similar to the calculations for the part on nonlocality, we write
 \begin{align*}
\bra{H^3_N}\mathcal{B}_{N\setminus k}\ket{H^3_N}=\bra{\tilde{H}^3_N}\tilde{\mathcal{B}}_{N\backslash k}\ket{\tilde{H}^3_N}=\frac{1}{2}\left(\bra{GHZ}+\bra{GHZ_X^-}\right)\tilde{\mathcal{B}}_{N\setminus k}\left(\ket{GHZ}+\ket{GHZ_X^-}\right)
\end{align*} 
where we also used that we can rewrite the sum over the odd weights as GHZ state in the $X$-basis,
\begin{equation}
\frac{1}{\sqrt{2}}\ket{GHZ_X^-}=\frac{1}{\sqrt{2}^N}\sum_{w(x)\,\mathrm{odd}}\ket{x}=\frac{1}{2}\left(\ket{+}^{\otimes N}-\ket{-}^{\otimes N}\right)
\end{equation}
We compute the values of a single correlator $Z^{\otimes m}\otimes\id^{\otimes N-m-k}\otimes X^{\otimes k}=(Z^{\otimes m}\otimes\id^{\otimes N-m})(\id^{\otimes N-k}\otimes X^{\otimes k})$ on the three different combinations of odd and even-weight contributions. Note that $\ket{GHZ}$ absorbs the $Z^{\otimes m}$-terms since $m$ is even, and the $X^{\otimes k}$-part maps $\ket{GHZ_X^-}$ to $\ket{GHZ_X^\pm}$, depending on $k$. First consider the $GHZ - GHZ$ pairing:
\begin{align*}
\bra{GHZ}Z^{\otimes m}\otimes\id^{\otimes N-m-k}\otimes X^{\otimes k}\ket{GHZ}&=\bra{GHZ}\id_1\dots\id_{N-k} X_{N-k+1}\dots X_N\ket{GHZ}\\
&=0,\\
\intertext{since we assume $N > k > 0$. The cross-terms yield}
\bra{GHZ}Z^{\otimes m}\otimes\id^{\otimes N-m-k}\otimes X^{\otimes k}\ket{GHZ_X^-}&=\begin{cases}
\braket{GHZ}{GHZ_X^-}=0,&k \textup{ even}\\
\braket{GHZ}{GHZ_X^+}=\frac{2}{\sqrt{2}^N},&k\textup{ odd}.
\end{cases}\\
\intertext{And finally,}
\bra{GHZ_X^-}Z^{\otimes m}\otimes\id^{\otimes N-m-k}\otimes X^{\otimes k}\ket{GHZ_X^-}&=\bra{GHZ_X^-}Z_1\dots Z_m \id_{m+1}\dots \id_{N-k}\ket{GHZ_X^\pm}\\
&=\begin{cases}
1 &, k\textup{ is even and }m=0\\
0 &,\textup{ otherwise}.
\end{cases}
\end{align*}
Counting in total $2^{N-k-1}$ of such correlators, we end up with
\begin{equation}
\bra{H^3_N}\mathcal{B}_{N\backslash k}\ket{H^3_N}=\sqrt{2}^{N-2k}
\end{equation}
if $k$ is odd and $\mean{\mathcal{B}_{N\backslash k}}_{\ket{H^3_N}}=\frac{1}{2}$ in the case where $k$ is even.\vspace{2pt}\\
\underline{Case 2: $N\equiv 0\mod 4$}\\
Again, our first step is to transform the inequality:
\begin{align}\label{eq:ConjugatedTraceOutMerminOperatorOur2}
\begin{split}
   \tilde{\mathcal{B}}_{N\backslash k}:=  & \sqrt{Z}_-^{\otimes N}\sqrt{Y}_+^{\otimes N}\mathcal{B}_{N\backslash k}\sqrt{Y}_-^{\otimes N}\sqrt{Z}_-^{\otimes N}\\
    =& (-i)X_{1}\id_{2}\id_{3}\dots \id_{N-k}Z_{N-k+1}\dots Z_N +   \mbox{ perm. } (1\dots (N-k)) \\
                 +& (-i) X_{1}X_{2}X_{3}\id_{4}\id_{5}\dots \id_{N-k}Z_{N-k+1}\dots Z_N +   \mbox{ perm. } (1\dots (N-k)) \\
               + & \dots.\\
                = & (-i)\sum_{m\, \mathrm{odd}} X^{\otimes m}\id^{N-m-k}Z^{\otimes k}+\mbox{perm.}(1\dots (N-k))
\end{split}
\end{align}
Because $N\equiv 0\mod 4$, we have $\bra{\tilde{H}^3_N}\sqrt{Z}_+=\frac{1}{\sqrt{2}}\left(\pm\bra{GHZ}+i\bra{GHZ_X^-}\right)$ and similarly $\sqrt{Z}_+\ket{\tilde{H}^3_N}=\frac{1}{\sqrt{2}}\left(\pm\ket{GHZ}+i\ket{GHZ_X^-}\right)$, see Lemma~\ref{thm:theorem1}. Without loss of generality, let us assume that we have a positive sign in front of the $Z$-basis GHZ part.
All summands of $\tilde{\mathcal{B}}_{N\backslash k}$ interchange between even and odd weights, therefore only the cross-terms contribute: 
\begin{align*}
\bra{H^3_N}\mathcal{B}_{N\setminus k}\ket{H^3_N}=&\frac{1}{2}(\bra{GHZ}+i\bra{GHZ_X^-})\tilde{\mathcal{B}}_{N\backslash k}(\ket{GHZ}+i\ket{GHZ_X^-})\\
=&\bra{GHZ}X^{\otimes m}\id^{N-m-k}Z^{\otimes k}\ket{GHZ_X^-}.
\end{align*}
We let the $Z^{\otimes k}$-part of the operator act on $\bra{GHZ}$ and the $X^{\otimes m}$-components on $\ket{GHZ_X^-}$. Since $m$ is always odd, we have 
\begin{align*}
\bra{GHZ}(\id^{\otimes N-k}\otimes Z^{\otimes k})(X^{\otimes m}\otimes\id^{\otimes N-m})\ket{GHZ_X^-}=\begin{cases}
\braket{GHZ}{GHZ_X^+}=\frac{2}{\sqrt{2}^N},&k\textup{ even},\\
\braket{GHZ^-}{GHZ_X^+}=0, & k \textup{ odd}.
\end{cases}
\end{align*}
Counting in total $2^{N-k-1}$ contributions, we indeed arrive at
\begin{equation}
\bra{H^3_N}\mathcal{B}_{N\backslash k}\ket{H^3_N}=\sqrt{2}^{N-2k}, \textup{ for even }k.
\end{equation}
It is evident that for odd $k$ we get correlation $0$.\\
\underline{Case 3: $N-k\equiv 2\mod 4$}\\
Again, we consider the "odd" Mermin-operator $\mathcal{M}^1_{N-k}$.
We now decompose the hypergraph state, by conditioning on the last $k$ qubits
\begin{align*}
\ket{H^3_N}&=\frac{1}{\sqrt{2}}\ket{H^3_{N-1}}\ket{0}+\frac{1}{\sqrt{2}}\ket{H^{3,2}_{N-1}}\ket{1}\\
&=\frac{1}{2}\left(\ket{H^3_{N-2}}\ket{00}+\ket{H^{3,2}_{N-2}}(\ket{01}+\ket{10})+\ket{H^{3,1}_{N-2}}\ket{11}\right)=\dots\\
&=\frac{1}{\sqrt{2}^k}\sum_{l=0}^k\ket{H_{N-k}(l)}\left(\sum_{w(x)=l}\ket{x}\right)
\end{align*}
Since $\ket{H^{3,1}_m}=\frac{1}{\sqrt{2}}\left(\ket{H^{3,1}_{m-1}}\ket{0}-\ket{H^{3,2,1}_{m-1}}\ket{1}\right)$ and $\ket{H^{3,2,1}_m}=\frac{1}{\sqrt{2}}\left(\ket{H^{3,2,1}_{m-1}}\ket{0}-\ket{H^{3}_{m-1}}\ket{1}\right)$, the states $\ket{H_{N-k}(l)}$ only depend on $l\mod 4$, more specifically
\begin{equation}
\ket{H_{N-k}(l)}=\begin{cases}
\hphantom{-}\ket{H^3_{N-k}},&l\equiv 0\mod 4\\
\hphantom{-}\ket{H^{3,2}_{N-k}},& l\equiv 1\mod 4\\
\hphantom{-}\ket{H^{3,1}_{N-k}},& l\equiv 2\mod 4\\
-\ket{H^{3,2,1}_{N-k}},\, & l\equiv 3\mod 4.
\end{cases}
\end{equation}
With this, we decompose the correlator by measuring on the last qubits, to obtain mixed-state overlaps:
\begin{align}
\nonumber\bra{H^3_N}\mathcal{B}_{N\setminus k}\ket{H^3_N}=&\frac{1}{2^k}\sum_{l=0}^k \bra{H_{N-k}(l)}\left(\sum_{w(x)=l}\bra{x}\right)\mathcal{M}^1_{N-k}\otimes \id^k\sum_{l'=0}^k\ket{H_{N-k}(l')}\left(\sum_{w(x)=l'}\ket{x'}\right)\\
\nonumber=&\frac{1}{2^k}\sum_{l,l'=0}^k\sum_{\stackrel{w(x)=l}{w(x')=l'}} \bra{H_{N-k}(l)}\mathcal{M}^1_{N-k}\ket{H_{N-k}(l')}\braket{x}{x'}\\
\label{eq:aux13}=&\frac{1}{2^k}\sum_{l=0}^k\binom{k}{l}\bra{H_{N-k}(l)}\mathcal{M}^1_{N-k}\ket{H_{N-k}(l)}.
\end{align}
By construction $\ket{H^3_{N-k}}$ and $\ket{H^{3,1}_{N-k}}$ are $X^{\otimes N-k}$-stabilized and from Lemma \ref{lemma:localStab} it is evident that $\ket{H^{3,2}_{N-k}},\ket{H^{3,2,1}_{N-k}}$ are invariant under $Y^{\otimes N-k}$. In order to calculate the contributions originating from the $X$-stabilized cases, corresponding to $l\mod 4=0,2$ we transform the inequality to
\begin{align*}
\tilde{\mathcal{B}}^1_{N-k}:=&\sqrt{X}_+^{\otimes N-k}\mathcal{M}^1_{N-k}\sqrt{X}_+^{\otimes N-k}\\
=&\sum_{m\,\mathrm{odd}}iZ^{\otimes m}\otimes\id^{\otimes N-k-m} +\mbox{perm}
 \end{align*}
For the $Y$-stabilized cases $l\equiv 1,3\mod 4$, we consider
\begin{align*}
\mathcal{B}'_{N-k}:=&\sqrt{Z}_-^{\otimes N-k}\sqrt{Y}_+^{\otimes N-k} \mathcal{M}^1_{N-k} \sqrt{Y}_-^{\otimes N}\sqrt{Z}_-^{\otimes N}\\
=&\sum_{m\, \mathrm{odd}} i X^{\otimes m}\otimes \id^{N-m-k}+\mbox{perm.}
\end{align*}
First consider $l\equiv 0\mod 4$
\begin{align}
\bra{\tilde{H}^3_{N-k}}\tilde{\mathcal{B}}^1_{N-k}\ket{\tilde{H}^3_N}=\frac{1}{2}\left(\bra{GHZ}+\bra{GHZ_X^-}\right)\tilde{\mathcal{B}}_{N-k}\left(\ket{GHZ}+\ket{GHZ_X^-}\right)
\end{align}
We compute the contributions separately. Because $m$ is odd, we have
\begin{align*}
\bra{GHZ}Z_1\dots Z_m \id_{m+1}\dots \id_{N-k}\ket{GHZ}&=\braket{GHZ^-}{GHZ}=0\\
\bra{GHZ}Z_1\dots Z_m \id_{m+1}\dots \id_{N-k}\ket{GHZ_X^-}&=\braket{GHZ^-}{GHZ_X^-}=0\\
\bra{GHZ_X^-}Z_1\dots Z_m \id_{m+1}\dots \id_{N-k}\ket{GHZ_X^-}&=0,
\end{align*}
since $Z$ interchanges between $\ket{+}$ and $\ket{-}$. Therfore
\begin{equation}
\bra{H^3_{N-k}}\mathcal{B}_{N-k}\ket{H^3_{N-k}}=\bra{H^{3,1}_{N-k}}\mathcal{B}_{N-k}\ket{H^{3,1}_{N-k}}=0.
\end{equation}
Next we look at the case where $l\equiv 1\mod 4$.\\
We rewrite the correlator using the latter transformation of the Bell operator:
\begin{align*}
\bra{H^{3,2}_{N-k}}\mathcal{B}_{N-k}\ket{H^{3,2}_{N-k}}=\bra{H^{3,2}_{N-k}}\sqrt{Y}_-^{\otimes N-k}\sqrt{Z}_+^{\otimes N-k}{\mathcal{B}'}_{N-k}\sqrt{Z}_+^{\otimes N-k}\sqrt{Y}_+^{\otimes N-k}\ket{H^{3,2}_{N-k}}
\end{align*}
With the aid of Lemma~\ref{lemma:3,2-uniform}, we can transform the state as
\begin{equation*}
\sqrt{Y}_+^{\otimes N}\ket{H^{3,2}_{N-k}}=\pm\frac{i}{\sqrt{2}}\ket{GHZ^-}+\frac{1}{\sqrt{2}}\ket{GHZ_Y^-}.
\end{equation*}
Without loss of generality, we shall assume the first sign to be negative. Applying $\sqrt{Z}_+^{\otimes N}$ to the respective ket-. and bra-states yields
\begin{equation*}
\sqrt{Z}_+^{\otimes N-k}\sqrt{Y}_+^{\otimes N-k}\ket{H^{3,2}_{N-k}}=\pm\frac{i}{\sqrt{2}}\ket{GHZ}+\frac{1}{\sqrt{2}}\ket{GHZ_X^-},
\end{equation*}
and
\begin{equation*}
\bra{H^{3,2}_{N-k}}\sqrt{Y}_-^{\otimes N-k}\sqrt{Z}_+^{\otimes N-k}=\mp\frac{i}{\sqrt{2}}\bra{GHZ}-\frac{1}{\sqrt{2}}\bra{GHZ_X^-}.
\end{equation*}
Note that this again requires $N\equiv 2\mod 4$ to map $\ket{GHZ^-}$ to $\ket{GHZ}$. Then
\begin{align*}
\bra{GHZ^+}X_1\dots X_m \id_{m+1}\dots \id_{N-k}\ket{GHZ^+}&=0\\
\bra{GHZ^+}X_1\dots X_m \id_{m+1}\dots \id_{N-k}\ket{GHZ_X^-}&=\braket{GHZ^+}{GHZ_X^+}=\frac{2}{\sqrt{2}^{N-k}}\\
\bra{GHZ_X^-}X_1\dots X_m \id_{m+1}\dots \id_{N-k}\ket{GHZ_X^-}&=\braket{GHZ_X^+}{GHZ_X^-}=0
\end{align*}
Therefore, when decomposing the quantum value into the different pairings, only the cross-terms yield a contribution and we are left with
\begin{align*}
\bra{H^{3,2}_{N-k}}\mathcal{M}^1_{N-k}\ket{H^{3,2}_{N-k}}&=\frac{1}{2}\left(i\bra{GHZ^+}-\bra{GHZ_X^-}\right)\mathcal{B}'_{N-k}\left(-i\ket{GHZ_Z^+}+\ket{GHZ_X^-}\right)\\
&=i\bra{GHZ^+}\mathcal{B}_{N-k}'\ket{GHZ_X^-}\\
&=\sum_{m\, \mathrm{odd}}\binom{N-k}{m}\bra{GHZ^+}X_1\dots X_m\id_{m+1}\dots\id_{N-k}\ket{GHZ_X^-}\\
&=2^{N-k-1}\frac{2}{\sqrt{2}^{N-k}}=\sqrt{2}^{N-k}.
\end{align*}
Because conjugation with $Z^{\otimes N-k}$ flips the sign of $\mathcal{M}^1_{N-k}$, the $3,2,1$-uniform state has the same contribution with opposite sign:
\begin{equation*}
\bra{H^{3,2,1}_{N-k}}\mathcal{M}^1_{N-k}\ket{H^{3,2,1}_{N-k}}=-\bra{H^{3,2}_{N-k}}\mathcal{M}^1_{N-k}\ket{H^{3,2}_{N-k}}=\sqrt{2}^{N-k}.
\end{equation*}
Hence, whenever $k\geq 1$, \eqref{eq:aux13} becomes
\begin{align}
\label{eq:aux14}\bra{H^3_N}\mathcal{B}_{N\setminus k}\ket{H^3_N}=&\frac{(-i)}{2^k}\sum_{l=1,\mathrm{odd}}^k\binom{k}{l}i^l\sqrt{2}^{N-k}\\
=&(-i)\sqrt{2}^{N-3k}\frac{1}{2}\left((1+i)^k-(1-i)^k\right)\\
\label{eq:final}=&\sin\left(\frac{\pi k}{4}\right)\sqrt{2}^{N-2k},
\end{align}
which can be made nonnegative by choosing the sign of the inequality appropriately in the beginning.\\
\underline{Case 4: $N-k\equiv 0\mod 4$}\\
As far as the last claim is concerned, the derivation is identical of the previous one until Eq.~\eqref{eq:aux13}.
The difference here is, that $\ket{H^3_{N-k}}$ is $Y$-stabilized, whereas $\ket{H_{N-k}^{3,2}}$  is $X$-stabilized. With the same tricks as before, we can express
\begin{align}
    \bra{H^{3}_{N-k}}\mathcal{M}_{N-k}^1\ket{H^3_{N-k}}&=\frac{1}{2}\left(\pm\bra{GHZ}+i\bra{GHZ_X^-}\right)\mathcal{B}'\left(\pm\ket{GHZ}+i\ket{GHZ_X^-}\right)\\
    &=\pm\braket{GHZ}{GHZ_X^+}=\pm\frac{1}{\sqrt{2}^{N-k}}.
\end{align}
Since $\mathcal{B}'$ consists only of summands with an odd number of Pauli-$X$ and identities otherwise, the cross-terms give the only contribution, where $\ket{GHZ_X^-}$ gets mapped to $\ket{GHZ_X^+}$.
According to Lemma~\ref{lemma:3,2-uniform}, we have $\sqrt{X}^{\otimes N}\ket{H^{3,2}_N}=\pm\frac{1}{\sqrt{2}}\ket{GHZ}+\frac{1}{\sqrt{2}}\ket{GHZ_X^-}$. Similarly as before, we can rewrite
\begin{align}
    \bra{H^{3,2}_{N-k}}\mathcal{M}^1_{N-k}\ket{H^{3,2}_{N-k}}=\frac{1}{2}(\bra{GHZ}+\bra{GHZ_X^-})\tilde{\mathcal{B}}_{N-k}(\ket{GHZ}+\ket{GHZ_X^-})
\end{align}
It is now straightforward to check that all the four contributions evaluate to zero, as $\tilde{\mathcal{B}}$ only features odd number of Pauli $Z$ and identities otherwise.
The rest of the proof then follows analogously, with the exception of the alternating $l$-summation in Eq.~\eqref{eq:aux14} running over even integers instead, so that in Eq.~\eqref{eq:final}, the sine gets replaced by a cosine, which concludes the proof.
\end{proof}
\section*{References}
\bibliography{iopart-num}
\end{document}